\newcommand{\eq}{\leftrightarrow}
\newcommand{\imp}{\rightarrow}
\newcommand{\et}{\wedge}
\newcommand{\vel}{\vee}
\newcommand{\Et}{\bigwedge}
\newcommand{\Vel}{\bigvee}
\newcommand{\all}{\forall}
\newcommand{\is}{\exists}
\renewcommand{\phi}{\varphi}
\newcommand{\union}{\cup}
\newcommand{\Union}{\bigcup}
\newcommand{\inter}{\cap}
\newtheorem{theorem}{Theorem}
\newtheorem{corollary}[theorem]{Corollary}
\newtheorem{proposition}[theorem]{Proposition}
\newtheorem{example}[theorem]{Example}
\newtheorem{lemma}[theorem]{Lemma}
\newcommand{\cs}{\mathbf{s}}
\newcommand{\cstwo}{\mathbf{t}}
\newcommand{\CS}{\mathbf{S}}
\newcommand{\CStwo}{\mathbf{R}}
\newcommand{\Kw}{\mathit{Kw}}
\newcommand{\M}{\hat{K}}
\newcommand{\powerset}{\mathcal P}
\newcommand{\uu}{\mathbf{u}}
\newcommand{\weg}[1]{}
\newcommand{\nnf}{\mathsf{nnf}}
\newcommand{\obar}[1]{\overline{#1}}
\newcommand{\boolgame}{B}
\newcommand{\obsgame}{G}
\newcommand{\Ne}{\mathit{NE}}
\def\qslash{\delimiter"502F30E\mathopen{}}
\newcommand{\util}{u}
\newcommand{\boolutil}{u^B}
\newcommand{\bigO}{\mathcal{O}}
\newcommand{\cons}{\mathit{cons}}
\newcommand{\fcons}{\mathcal{C}}
\newcommand\restr[2]{#1 \cap #2}
\newcommand{\pre}{\mathsf{pre}}
\newcommand{\opt}{\mathsf{opt}}
\newcommand{\pess}{\mathsf{pess}}
\newcommand{\real}{\mathsf{real}}
\newcommand{\mmax}{\mathsf{max}}
\title{Boolean Observation Games}
\author{Hans van Ditmarsch \thanks{CNRS, University of Toulouse, IRIT, France}
  \and Sunil Simon \thanks{Department of CSE, IIT Kanpur, India}}
\date{}
\begin{document}
\maketitle

\begin{abstract}
We introduce {\em Boolean Observation Games}, a subclass of
multi-player finite strategic games with \textit{incomplete
information} and \textit{qualitative objectives}. In Boolean observation
games, each player is associated with a finite set of propositional \emph{variables} of which only it can observe the value, and it controls whether and to whom it can {\em reveal} that value. It does not control the given, fixed, value of variables. Boolean observation games are a generalization of Boolean games, a well-studied subclass of strategic games but with complete information, and wherein each player controls the value of its variables. 

In Boolean observation games, player goals describe multi-agent knowledge of variables. As in classical strategic games, players choose their strategies
simultaneously and therefore observation games capture aspects of both
imperfect and incomplete information. They require reasoning about
sets of outcomes given sets of indistinguishable valuations of
variables. An outcome relation
between such sets determines what the Nash equilibria are.  We present various outcome relations,
including a qualitative variant of ex-post
equilibrium. 
We identify conditions under which, given an outcome relation,
Nash equilibria are guaranteed to exist. We also study the complexity
of checking for the existence of Nash equilibria and of verifying if a
strategy profile is a Nash equilibrium.  We further study the subclass
of Boolean observation games with `knowing whether' goal formulas, for which
the satisfaction does not depend on the value of variables. We
show that each such Boolean observation game corresponds to
a Boolean game and vice versa, by a different correspondence, and that both correspondences are precise in terms of existence of Nash equilibria.
\end{abstract}

\section{Introduction}\label{sec.introduction}
Reasoning about strategic agents is an important problem in the theory
of multi-agent systems and game-theoretic models and techniques are
often used as a tool in such analysis. Strategic games
\cite{osborneetal:1994} is a classic and well-studied framework that
models one-shot multi-player games where agents make their choice
simultaneously.
It forms a simple and intuitive formalism to analyse and
reason about the strategic behaviour of agents. From the perspective
of computer science and artificial intelligence, one of the main
drawbacks of strategic games is that the explicit representation of
the payoff (or utility) function is exponential in the number of
players and the strategies available for each player. In many
applications, compact representation of the underlying game model is
highly desirable.

Various approaches have been suggested to achieve compact
representation of games and these mainly involve imposing restrictions
on the payoff functions. For instance, constraining the payoff
functions to be pairwise separable \cite{Jan68,CD11} results in the
well-studied class of games with a compact representation, called
\textit{polymatrix games}. \textit{Additively separable hedonic games}
\cite{AS16}
form another subclass of strategic games with pairwise separable
payoff functions which can be used to analyse coalition formation in
multi-agent systems. It is also possible to achieve compact
representation by explicitly restricting the dependency of payoff
functions to a ``small'' number of other agents (or neighbourhood) as
done in \textit{graphical games} \cite{KLS01}.

An alternative approach to imposing quantitative constraints on
payoffs is to restrict the payoffs to qualitative outcomes which are
presented as logical formulas. For example, ``extensive'' games played on graphs where the goal
formulas can specify the evolution of play with a combination of
temporal and epistemic specifications.
Although originally
defined as two-player perfect information games motivated by questions
in automata theory and logic, these models are now sophisticated to
reason about multi-player games and imperfect
information  \cite{CDH07,AG11,GMP17}. Boolean
games \cite{harrensteinetal:2001}, a subclass of strategic
games with complete information where objectives are expressed as
Boolean formulas, is also a well-studied framework with such qualitative outcomes. 

In Boolean games, each player controls a disjoint subset of
propositional variables where their strategies correspond to choosing
values for these variables and each player's goal is specified by a
Boolean formula over the set of all variables. 
While the model was
originally defined to analyse two-player games, the framework has
been extended in many directions.

Multi-player, non-zero-sum Boolean games are studied in
\cite{harrenstein:2004,BonzonLLZ06}. In
\cite{harrensteinetal:2001,harrenstein:2004} Boolean games are
modelled as imperfect information games by taking the uncertainty over
the other player's actions as an information set, as in
\cite{jfak.bulletin:2001}. In \cite{DH04,BonzonLLZ06,DM12} the
computational properties of Boolean games are adressed, in
\cite{bonzonetal:2009} graphical dependency structures for Boolean
games and their implications for various structural and computational
properties, and in \cite{EL14} mixed strategy Nash equilibria and
related computational questions. The issue of equilibrium selection is
considered in \cite{AHH13}. Iterated Boolean games
\cite{GutierrezHW15,GutierrezHPW16} model repeated interaction between
players with temporal goals specified in linear time temporal logic
(LTL). Partial ordering of the run-time events in terms of a
dependency graph on propositions is studied in \cite{BGW16}.

Epistemic Boolean games, wherein goal formulas may be epistemic, 
were proposed in \cite{AgotnesHHW13,HerzigLMS16}. Both works combine the control of 
variables with the observation of variables (or
formulas), where some of this is strategic and some is given with the game. This hybrid setting allows the authors to
continue to analyse these epistemic Boolean games as complete information 
strategic form games. Realizing epistemic objectives depends on the valuation of variables {\em resulting} from strategic action.

In this paper, we
introduce Boolean observation games as a qualitative model to
analyse and reason about a subclass of strategic games
with \textit{incomplete information}. In Boolean observation games,
{\em players control whether and to whom they reveal (announce) the value of propositional
variables that can only be observed by them}. This constitutes a
multi-player game model with concise representation where players have
(qualitative) epistemic objectives. It is incomplete information because realizing the objectives depends on a given fixed valuation that the players cannot control. Players do not know what that valuation is and therefore do not know what game they play. 
Realizing epistemic objectives depends on the unknown valuation of variables that is independent from strategic action. (We should note that such incomplete games of imperfect information can also be modelled as complete games of imperfect information by assuming an initial random move of a player `nature' determining the valuation.)

Since Boolean observation games define a subclass of strategic games,
they form an ideal framework to analyse interactive situations that
incorporate aspects of both imperfect as well as incomplete
information games. Please consider the following examples.

\begin{example}[A West Side Story] \label{example.tonymaria}
Tony and Maria (or was Romeo and Juliet? or Shanbo and Jingtai?) are in love with each other. But they have not declared their love to each other yet. This is risky business, as they are both uncertain about the feelings of the other one. Surely, given that they both love each other, their objective is to get to know that. But they consider it possible that the other person does not love them, in which case they might prefer not to declare their love. Their personalities are different in that respect. What Tony wants to know, depends on how his feelings (being in love / not being in love) relate to the other person's: if they match, he wants the other person to know, otherwise, he doesn't. Whereas what Maria wants to know only depends on the other person's feelings: if the other one is in love, she wants the other one to know her true feelings and otherwise not.

Given their state of mind and their personalities, should they declare their love to each other?

Let Tony be player 1 and Maria be player 2, and let $p_1$ represent `Tony is in love' and $p_2$ represent `Maria is in love'. Propositions $p_1$ and $p_2$ are both true and remain so forever after. They cannot be controlled. The objectives  (goals) denoted $\gamma_i$ for player $i$, and where $K_i p_j$ means `player $i$ knows $p_j$', are:
\[\begin{array}{llllll}
\gamma_1 = \gamma_2 & = &  p_1\et p_2 &\imp & K_1 p_2 \et K_2 p_1 & \et \\
 & & p_1\et \neg p_2 &\imp& K_1 \neg p_2 \et \neg K_2 p_1 & \et \\
 & & \neg p_1\et p_2 &\imp& \neg K_1 p_2 \et \neg K_2 \neg p_1 & \et \\
 & & \neg p_1\et \neg p_2 &\imp& \neg K_1 \neg p_2 \et K_2 \neg p_1
\end{array}\] 
They each have two strategies: declare their feelings (revealing the
value of $p_i$), or not. We succinctly explain that in this game, whatever the facts are, there is a strategy profile in which both players win by satisfying their goal formulas, but that they can never know that they win.
It is not so clear whether there (hopefully) is an equilibrium strategy profile allowing them to declare their love to each other. As $p_1$ and $p_2$ are true, it is an equilibrium when they both announce that, as $K_1 p_2 \et K_2 p_1$ is then true and they both win (the other three strategy profiles result in both losing, including another equilibrium namely when both don't declare). But Tony considers it possible that $\neg p_2$ in which case announcing $p_1$, and Maria's behaviour being equal, goal $K_1 \neg p_2 \et \neg K_2 p_1$ will fail. In that case he should have kept his mouth shut to have them win. Given the uncertainty over the game he has to reason about not two but {\bf four} strategies for Maria: depending on whether she is in love or not, whether she would show her feelings or not. What he will do given this information set of two indistinguishable outcomes, also depends on his risk aversity. If he's an optimist, he might still go for it. But if he's a pessimist, maybe better not. Maria's considerations are not dissimilar, but recall that she has a different personality (the goals are a different function from their value of $p_i$, in other words, permuting all occurrences of $1$ and $2$ in the goal results in a different goal). Example~\ref{ex.one} on page \pageref{ex.one} will reveal it all.
\end{example}

\begin{example}[A game of pennies that do not match] \label{example.pennies}
Consider two players Odd and Even both having a penny. They also both have a dice cup wherein they put their penny, shake the cup, and then put it on the table and watch privately whether their penny is heads or tails. Now they decide whether to inform the other player of the result, or not. If they both do or if they both don't, Even wins. So, Even wins if either $2$ players know or $0$ players know, so that one might therefore say that their state of knowledge is `even'. Otherwise, Odd wins. For the outcome it only matters whether they {\bf know} that the penny is heads or tails, it does not matter whether it {\bf is} heads or tails. What should they do?

We let Odd be player $1$ and Even be player $2$, and we let $p_1$ represent `Odd's penny is heads', whereas $p_2$ stands for `Even's penny is heads'. The goals are therefore (where $\Kw_i p_j$ abbreviates $K_i p_j \vel K_i \neg p_j$ and means `player $i$ knows whether $p_j$):
\[\begin{array}{rrrrr}
\gamma_1 & = & \Kw_1 p_2 & \eq & \neg\Kw_2 p_1 \\ 
\gamma_2 & = & \Kw_1 p_2 & \eq & \Kw_2 p_1 
\end{array}\] 
On first sight it seems quite straightforward what they should do, as the outcome does not depend on the valuation of $p_1$ and $p_2$. If Odd and Even both announce the result of their throw with the penny, Odd would then have done better not to make that announcement. But if that were to have happened, Even would have done better not to announce either. And so on. There is no equilibrium. Or is there? Yes, there is. And it is pure. Example~\ref{ex:noNE} on page \pageref{ex:noNE} will reveal it all.
\end{example}

Our framework of Boolean observation games clearly builds upon
\cite{AgotnesHHW13,HerzigLMS16} but a main difference is that these are complete information games whereas ours are incomplete information games. Thus we have very different strategies. Players do not control the \emph{values} of variables, but they control whether they \emph{reveal} the fixed values of variables that only they can observe. In that respect our framework also builds on the public announcement games
of \cite{agotnesetal:2011,agotnesetal.qa:2011}. They only allow strategies that are public announcements wherein the same information is revealed to all players. However, they permit announcing any epistemic formula, not merely propositional variables. A more detailed comparison with all these approaches is only possible after having given our framework in detail and is therefore in a later Section~\ref{section.related}.

Our games are strictly qualitative and thus abstract
from truly Bayesian approaches \cite{harsanyi:1967} with
probabilities. To determine equilibrium we compare information sets, called `expected outcomes'. As the expected outcome may not be a
value and the relation may not be a total order, our work is therefore in ordinal game theory
\cite{DurieuHQS08,cruzsimaan:2000,AmorFS17}.

\medskip

\noindent{\bf Our Contributions.}  We analyse structural and
computational properties of Boolean observation games. We define
Boolean observation games as incomplete and imperfect information
games, a novel perspective in Boolean games. We show that Boolean
observation games form a fragment of strategic games with compact
representation. We determine equilibria based on four different
profitable deviations from information sets, namely defined as: the
worst outcome is better, the best outcome is better, the expected
outcome is better, and the outcome without uncertainty is better (the
outcome is better even if all information sets are singleton, so that
the game is one of complete information). We also provide existence
results for such equilibria, which highly depends on what is
considered a profitable deviation. We identify various fragments of
Boolean observation games including one where your goal may be to keep
others ignorant but not to keep yourself ignorant, the self-positive
goals, and another one where the goals are `knowing whether formulas'
of which the realization does not depend on the valuation. The latter
we call knowing-whether games. We provide an embedding of the standard
Boolean games into a fragment of the knowing-whether games, and we
also provide an embedding of the knowing-whether games into the
Boolean games. Employing these embeddings we show that the
knowing-whether games correspond to Boolean games in terms of
existence of equilibrium outcomes. We also provide complexity results
for the natural questions of verification and checking of emptiness of
equilibrium outcomes in Boolean observation games, for most of the
profitable deviations considered, and stretching the results as much
as possible to also include fragments with ignorance goals. An
overview of these complexity results is found in the conclusions in
Table~\ref{tab:summary}.

\noindent{\bf Overview of Contents.}
Section~\ref{section.model} provides technical preliminaries needed to define Boolean observation games, that are then defined in the subsequent Section~\ref{section.model2}, of which the final Subsection~\ref{section.related} compares our proposal to other epistemic Boolean games. Section~\ref{section.kw} presents the correspondence between Boolean games and Boolean observation games. Section~\ref{section.exist} provides various results for the existence of Nash equilibria and  Section~\ref{sec:complexity} contains the results on the computational complexity of determining whether a strategy profile is an equilibrium, and whether equilibria exist.

\section{Preliminaries} \label{section.model}

In this section we introduce an auxiliary notion that is a complete information strategic game, which is played with strategies that are epistemic actions, that has epistemic formulas as goals and for which we propose a greatly simplified epistemic logic, and where outcomes are the truth values of those goals. Boolean observation games, that are incomplete information strategic games with more complex strategies and outcomes, will then be defined in the next section. The logic is simple in order to ensure a compact representation allowing to obtain complexity results comparable to those for Boolean games. Some logical details that are fairly elementary but that might distract from the game theoretical content that is our focus, are deferred to the Appendix.

\subsection{Strategies Consisting of Players Revealing Observations}

Let $N=\{1,\ldots,n\}$ be a finite set of \emph{players} $i$ and $P$ a finite
set of (propositional) \emph{variables} such that $(P_i)_{i \in N}$
defines a partition of $P$. The set $P_i$ is the set of variables
$p_i$ {\em observed} by player $i$ (that is, of which player $i$, and only player $i$, observes the value). A \emph{valuation} is a subset $v
\subseteq P$, where $p_i \in v$ means that $p_i$ is true and $p_i \notin v$
means that $p_i$ is false. The set $\powerset(P)$ of all valuations is denoted $V$.

A \emph{strategy} for player $i$ is a function $s_i : N \to
\powerset(P_i)$ that assigns to each player $j$ the set $s_i(j)
\subseteq P_i$ of variables that player $i$ reveals (announces) to
player $j$. We require that $s_i(i) = P_i$. Let $S_i$ denote the set
of all strategies of player $i$. A {\em strategy profile} is a member
$s$ of $S=S_1 \times \cdots \times S_n$. The set $P_i(s) = \{ p_j \in
P \mid p_j \in s_j(i) \}$ consists of the variables revealed to $i$ in
$s$. As $s_i(i) \subseteq P_i$, $P_i \subseteq P_i(s)$.  For $i \in
N$, we denote the $n$-tuple $s$ as $(s_i,s_{-i})$ where $s_{-i}$
represents the $(n-1)$-tuple of the strategies of other
players. Strategy $s^\emptyset_i$ is such that for all $j \in N$ with
$j \neq i$, $s_i(j) = \emptyset$. This means that player $i$ does not
reveal anything to anyone. Strategy $s^\all_i$ is such that for all
$i,j \in N$, $s_i(j) = P_i$. This means that player $i$ reveals
everything and to everyone.

Given $i \in N$ and strategy profile
$s$, the {\em observation relation} $\sim_i^s$ on $V$ is defined as, for $v,w \in V$: \begin{quote} $v \sim_i^s w$ \quad iff \quad $v \inter P_i(s) = w \inter P_i(s)$. \end{quote}
Observation relation $\sim_i^s$ encodes the informative effect of $s$. For $\sim^{s^\emptyset}_i$ we write $\sim_i$. This is the {\em initial observation relation}. We further note that  $P_i(s^\all) = V$ for any player $i$, so that $\sim^{s^\all}_i$ is the the identity relation $=$. A $\sim_i^s$ equivalence class, defined as $[v]_i^s := \{ w \in V \mid w \sim_i^s v \}$ (where $[v]_i^{s^\emptyset}$ is denoted $[v]_i$), is also called an {\em information set} (of player $i$ given valuation $v$ and observation relation $\sim_i^s$).

Inasfar as strategies consist of each player $i$ selecting a subset $P'_i$ of her variables $P_i$, these are like the strategies in Boolean games. However we interpret this differently: player $i$ does not make the variables in $P'_i$ true, but reveals the value of the variables in $P'_i$ according to a fixed valuation $v$. Another departure (or generalization) from Boolean games is that different variables are revealed to different agents. This is because we felt that more interesting game theoretical results could be obtained for such a generalization, and because more interesting communicative scenarios could then be treated with the game theoretical machinery.

\begin{example} \label{ex.zooop}
We assume a strategy profile to take place in some instantaneous, synchronous,
fashion, such as, when $s_1(2) = \{p_1,q_1\}$, $s_1(3) =
\{p_1,q_1\}$, and $s_1(4)=\emptyset$, player 1 informing player 2 and player 3 that $p_1$ and $q_1$ are both true, and such that player 4 observes this without being party to the message content (for example, 1 whispering to 2 and 3). In other words, player 4 knows that player 1 informs player 2 and player 3 {\bf whether} $p_1$ and $q_1$, but player 4 remains uncertain of the value of $p_1$ and $q_1$, so does not know that 1 informs 2 and 3 {\bf that} $p_1$ and $q_2$.\footnote{In a different semantics for strategies, less informative to the players, each player only learns what variables have been revealed by others to herself, and what variables she reveals to others. Applied to Example~\ref{ex.zooop}, this would also leave player 4 uncertain whether player 1 has informed player 2 and player 3. See Appendix~\ref{appendix.weaker}.}

Now consider $s'_1$ that is like $s_1$ except that $s_1(4) =
\{p_1,q_1\}$ as well. This is the public announcement of $p_1$ and $q_1$ by player $1$ to all players.

What if for example $s''_1(2) = \{p_1\}$ but $s''_1(3) =
\{p_1,p_2\}$? And what about $s_2$, $s_3$ and $s_4$? This cannot be done instantaneously. But we can ensure independence: all players commit to their $s_i$ before they execute it, and not after they see what variables are revealed to other players before it is their turn to reveal. Instead of whispering we can all have prepared closed envelopes adressed to all others on which is written for example, `from player 1 to player 2: contains the truth about $p_1$ and $p_2$'. All envelopes are collected blindly and then put on the table for all to see and are then handed out. 
\end{example}

Such forms of communication are known as semi-public announcement \cite{hvd.jolli:2002}, see Appendix~\ref{section.del} on dynamic epistemic logic for details. 

\subsection{Goals that are Epistemic Formulas}  \label{sec.logic}
The  \emph{language of epistemic logic} is defined as follows, where $i \in
N$ and $p_i \in P_i$.
\[L^K \ni \quad \quad \alpha := p_i \mid \neg \alpha
\mid \alpha \vee \alpha \mid K_i \alpha\]
Here, $\neg$ is negation, $\vee$ is disjunction, and $K_i \phi$ stands for `player $i$ knows $\phi$.' Other propositional connectives are defined by abbreviation, and
also $\M_i \alpha := \neg K_i \neg \alpha$ (player $i$
considers $\alpha$ possible), and $\Kw_i \alpha := K_i \alpha \vel
K_i \neg \alpha$ (player $i$ knows whether $\alpha$). The members of $L^K$ are {\em goals} and may as well be called, suiting our purposes {\em formulas}. 

The following fragments of $L^K$ also play a role, where $i, j \in N$ and $p_i \in P_i$. 
\[\begin{array}{ll} 
L^B \ni & \quad \quad \alpha := p_i \mid \neg \alpha
\mid \alpha \vee \alpha \\
L^K_\nnf \ni & \quad \quad \alpha := p_i \mid \neg p_i \mid \alpha \et \alpha \mid \alpha \vee \alpha \mid K_i \alpha \mid \M_i \alpha \\
L^+ \ni & \quad \quad \alpha := p_i \mid \neg p_i \mid \alpha \et \alpha \mid \alpha \vee \alpha \mid K_i \alpha \\
L^\Kw \ni & \quad \quad \alpha := \Kw_j p_i \mid \neg \alpha \mid \alpha \vee \alpha \\
L^\Kw_\nnf \ni & \quad \quad \alpha := \Kw_j p_i \mid \neg \Kw_j p_i 
\mid \alpha \vee \alpha \mid \alpha \et \alpha
\end{array}\]
The language $L^B$ of the \emph{Booleans} is the fragment of $L^K$ without $K_i$ modalities.  In the language $L^\Kw$ of {\em knowing whether formulas} (\emph{$\Kw$ formulas}) the constructs $\Kw_j p_i$ play the role of propositional variables. The fragments $L^K_\nnf$ and $L^\Kw_\nnf$ are those of the {\em negation normal form} ($\nnf$) of respectively $L^K$ and $L^\Kw$,  where the language $L^+$ of the {\em positive formulas} is the fragment of $L^K_\nnf$ without $\M_i$ modalities (corresponding to a universal fragment of first-order logic). Note that $L^\Kw$ and $L^\Kw_\nnf$ are really propositional languages, not modal languages. A goal is \emph{guarded} if it has shape $\gamma_i = K_i \alpha$.

Apart from the above fragments yet another fragment plays a role in
our contribution, namely that of the self-positive goals.  The {\em
  self-positive goal formulas} are defined as $L^{\mathsf{self}+} :=
\Union_{j \in N} L^{{j}+}$, where each $L^{{j}+}$ is given by the
following BNF, wherein $i,k\in N$ and $k \neq j$.
\[\begin{array}{lll}
L^{{j}+} \ni \ \ \alpha_j & ::= & p_i \mid \neg p_i \mid \alpha_j \et \alpha_j \mid \alpha_j \vel \alpha_j \mid K_j \alpha_j \mid K_k \alpha_j \mid \M_k \alpha_j
\end{array}\]
Here, $\alpha_j$ is the goal for player $j$. Note that $L^+$ is a fragment of $L^{{j}+}$, namely the fragment where all occurrences of $K_k$ are positive, and that $L^{{j}+}$ is a fragment of $L^K_{\mathsf{nnf}}$, namely the fragment wherein all occurrences of $K_j$ are positive. In a self-positive goal for agent $j$, $j$'s objective is to (get to) know others' variables {\bf and others' knowledge and ignorance}, although other players may either know or remain ignorant of $j$'s knowledge. This implies that $j$'s goal also cannot be for others to know $j$'s ignorance. A larger number of communicative scenarios seem to have self-positive goals than merely positive goals: it seems fairly typical that you wish others to remain ignorant even when you are only interested in obtaining (factual) knowledge.

The inductively defined semantics of $L^K$ formulas are relative to a
valuation $v$ and a strategy profile $s$, where $i \in N$ and $p_i \in
P_i$. 
\[\begin{array}{lcl}
v,{s} \models p_i & \text{iff} & p_i \in v \\
v,{s} \models \neg \alpha & \text{iff} & v,{s} \not\models \alpha \\
v,{s} \models \alpha_1 \vee \alpha_2 & \text{iff} & v,{s} \models
  \alpha_1 \text{ or } v,{s} \models \alpha_2 \\
v,{s} \models K_i\alpha & \text{iff} & w,{s} \models \alpha \text{ for all } w \text{ such that } v \sim^{s}_i w
\end{array}\]
For $v,s^\emptyset\models\alpha$ we write $v \models \alpha$. This is a bit sneaky: by definition this represents what players know after the strategy profile is executed wherein nobody reveals anything, but we can therefore just as well let it stand for what players initially know, before anything has been revealed. 

We let $s \models \alpha$ denote ``for all $v \in V$,
$v,s\models\alpha$,'' and $\models\alpha$ denote ``for all $s \in S$,
$s \models \alpha$''. In our semantics, $K_i p_i$, $K_i \neg p_i$, and
$\Kw_i p_i$ are always true (equivalent to the trivial assertion
$\top$). We therefore informally assume that they do not occur in goal
formulas.

We note that our epistemic semantics is not the usual one for the
epistemic language, interpreted on arbitrary Kripke models, but a
greatly simplified epistemic semantics dedicated to reason about
strategies that are joint revelations of observed variables. We do not
even use the word `model'. And we do not allow announcements
(revelations) of other information than variables. In
Appendix~\ref{section.del} we show how (valuation, strategy) pairs
induce multi-agent Kripke models. All these simplifications are in
order to obtain a smooth comparison with Boolean games and with
comparable complexities, unlike the higher complexities common in
multi-agent epistemic reasoning.

We continue with some elementary properties of this simple logical semantics, in the form of propositions.

\begin{restatable}{proposition}{restatableone}\label{propone}
Each formula in $L^K$ is equivalent to a formula in $L^K_\nnf$. Similarly, each formula in $L^\Kw$ is equivalent to a formula in $L^\Kw_\nnf$. 
\end{restatable}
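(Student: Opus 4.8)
The plan is to exhibit an explicit translation $\nnf : L^K \to L^K_\nnf$ that drives negations inward until they sit directly in front of variables, and then to verify by induction that $\nnf(\alpha)$ both lies in $L^K_\nnf$ and is equivalent to $\alpha$. Since $L^K$ has $\neg$, $\vel$ and $K_i$ as its only primitive constructs (with $\et$ and $\M_i$ abbreviations), I would define $\nnf$ by well-founded recursion on the number of symbols of the formula, rather than on the subformula order: the clause for a negated disjunction recurses on the negations of the disjuncts, which are strictly shorter than the original formula but are not subformulas of it.

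Concretely, the nontrivial clauses are $\nnf(\neg\neg\alpha)=\nnf(\alpha)$, then $\nnf(\neg(\alpha\vel\beta))=\nnf(\neg\alpha)\et\nnf(\neg\beta)$, and $\nnf(\neg K_i\alpha)=\M_i\,\nnf(\neg\alpha)$, together with the base cases $\nnf(p_i)=p_i$ and $\nnf(\neg p_i)=\neg p_i$ and the homomorphic clauses $\nnf(\alpha\vel\beta)=\nnf(\alpha)\vel\nnf(\beta)$ and $\nnf(K_i\alpha)=K_i\,\nnf(\alpha)$. In each recursive call the argument has strictly fewer symbols, so the definition is well-founded. That $\nnf(\alpha)\in L^K_\nnf$ is then immediate by induction, simply by reading off the grammar of $L^K_\nnf$.

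For the equivalence $\nnf(\alpha)\equiv\alpha$ (that is, $v,s\models\nnf(\alpha)$ iff $v,s\models\alpha$ for all $v,s$) I would argue by induction on length, where the only steps requiring more than the induction hypothesis are the three negation-pushing clauses. These reduce to three validities in our semantics: double-negation elimination $\neg\neg\alpha\equiv\alpha$ and De Morgan $\neg(\alpha\vel\beta)\equiv\neg\alpha\et\neg\beta$, both immediate from the clauses for $\neg$ and $\vel$, and the modal duality $\neg K_i\alpha\equiv\M_i\neg\alpha$. The last holds because $\M_i\psi$ abbreviates $\neg K_i\neg\psi$, so $\M_i\neg\alpha=\neg K_i\neg\neg\alpha\equiv\neg K_i\alpha$, and the semantic clause for $K_i$ is a plain universal quantification over the $\sim_i^s$-class whose negation is the matching existential; this uses nothing about $\sim_i^s$ beyond its being a relation, so it is unaffected by the simplification of the semantics.

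The knowing-whether case is the same construction with the modal clause deleted: treating each $\Kw_j p_i$ as an atomic proposition, $L^\Kw$ is a purely propositional language over these atoms with connectives $\neg,\vel$, and $L^\Kw_\nnf$ is its negation normal form. The same recursion (base cases $\Kw_j p_i$ and $\neg\Kw_j p_i$, the disjunction homomorphism, double-negation elimination, and De Morgan) yields an equivalent formula in $L^\Kw_\nnf$. I expect no real obstacle; the only points needing care are the choice of formula length rather than subformula order as the recursion measure, and the check that the modal duality survives the nonstandard semantics, which it does for the reason just given.
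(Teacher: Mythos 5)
Your proposal is correct and follows essentially the same route as the paper, which also argues by induction using exactly the three equivalences $\neg\neg\alpha\eq\alpha$, $\neg(\alpha\vel\beta)\eq(\neg\alpha\et\neg\beta)$, and $\neg K_i\alpha\eq\M_i\neg\alpha$, treating the $L^\Kw$ case as purely propositional. Your explicit choice of formula length as the recursion measure (since the negation-pushing clauses recurse on non-subformulas) is a careful refinement of the paper's looser ``induction on formula structure,'' but it is the same argument.
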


\begin{proof}This well-known result in modal logic for $L^K$
  is shown by induction on formula structure, using the 
  equivalences $\neg\neg\alpha\eq\alpha$, $\neg (\alpha \vel \beta) \eq (\neg \alpha \et \neg \beta)$ and $\neg K_i \alpha \eq \M_i \neg \alpha$. For $L^\Kw$, as this is essentially a propositional and not a modal language, we only need to use the first equivalence.
\end{proof}

\begin{restatable}{proposition}{restatabletwo}\label{lemma.ii}
  \label{lm:KwAllVal}
For all $\phi\in L^\Kw$, valuations $v$, and strategy profiles $s$: $v,s\models\phi$ iff $s\models\phi$.
\end{restatable}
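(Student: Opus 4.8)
The plan is to show the stronger fact that for a knowing-whether formula the truth value at $(v,s)$ does not depend on $v$ at all; the desired equivalence with $s\models\phi$ then follows immediately. Since every $\phi\in L^\Kw$ is a Boolean combination (via $\neg$ and $\vel$) of atoms of the form $\Kw_j p_i$, it suffices to establish the claim for these atoms and then propagate it through the Boolean connectives by a routine induction.

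The heart of the argument is the following characterisation of the atoms: for all $v$, we have $v,s\models\Kw_j p_i$ iff $p_i\in P_j(s)$, a condition that manifestly depends only on $s$. To see this, unfold $\Kw_j p_i = K_j p_i \vel K_j\neg p_i$; by the semantics of $K_j$ this asserts exactly that $p_i$ takes a constant value across the information set $[v]_j^s$. If $p_i\in P_j(s)$, then any $w$ with $w\sim_j^s v$ satisfies $w\inter P_j(s)=v\inter P_j(s)$ and hence agrees with $v$ on $p_i$, so $p_i$ is constant on $[v]_j^s$ and the atom holds. Conversely, if $p_i\notin P_j(s)$, then the two valuations $v\cup\{p_i\}$ and $v\setminus\{p_i\}$ both agree with $v$ on $P_j(s)$, hence both lie in $[v]_j^s$, yet they disagree on $p_i$; thus $p_i$ is not constant on the information set and the atom fails.

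With the atom case in hand, a straightforward induction on the structure of $\phi\in L^\Kw$ shows that the truth of $\phi$ at $(v,s)$ is independent of $v$: the negation and disjunction steps are immediate, since the inductive hypothesis makes each subformula's truth value a function of $s$ alone. Consequently, if $v,s\models\phi$ then $v',s\models\phi$ for every $v'\in V$, which is exactly $s\models\phi$; the converse direction is immediate from the definition of $s\models\phi$ (taking the given $v$). The only genuine content is the atom computation, and within it the case $p_i\notin P_j(s)$, where one must exhibit two indistinguishable valuations disagreeing on $p_i$; I expect this to be the main (though modest) obstacle, the remainder being bookkeeping over the Boolean structure.
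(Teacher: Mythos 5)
Your proof is correct, and its mathematical core is the same as the paper's: the whole content lies in the atom computation $v,s\models\Kw_j p_i$ iff $p_i\in P_j(s)$, which you establish exactly as the paper does (agreement on $P_j(s)$ forces agreement on $p_i$ in one direction; the two valuations $v\cup\{p_i\}$ and $v\setminus\{p_i\}$ witness failure in the other). The one organizational difference is the inductive setup. The paper proves the literal statement ``$v,s\models\phi$ implies $s\models\phi$,'' which is not preserved under negation, so it first passes to negation normal form (invoking Prop.~\ref{propone}) and treats $\neg\Kw_j p_i$ as a separate base case alongside $\Kw_j p_i$, with conjunction and disjunction as the inductive steps. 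You instead strengthen the induction hypothesis to ``the truth value at $(v,s)$ is a function of $s$ alone,'' which \emph{is} closed under $\neg$ and $\vel$, so you can induct directly on the grammar of $L^\Kw$ with no nnf detour and no separate negated-atom case. This buys a marginally cleaner induction at no cost; the paper's version has the minor virtue of exhibiting the explicit characterisation $v,s\models\neg\Kw_j p_i$ iff $p_j\notin P_i(s)$, which it reuses informally elsewhere. Both arguments are complete and correct.
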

The basic but lengthy proof of this proposition is in Appendix~\ref{app.proofs}. 
Prop.~\ref{lemma.ii} says in other words, that if $v,s \models \phi$ for \emph{some} $v \in V$, then $v,s \models \phi$ for \emph{all} $v \in V$.

\begin{restatable}{proposition}{restatablethree}\label{cor.one}
For any $\alpha \in L^\Kw$, $\models \alpha \eq K_i \alpha$.
\end{restatable}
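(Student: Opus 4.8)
The plan is to prove the two implications $\models K_i\alpha \imp \alpha$ and $\models \alpha \imp K_i\alpha$ separately, since $\alpha \eq K_i\alpha$ abbreviates their conjunction. So I fix an arbitrary player $i$, an arbitrary valuation $v$, and an arbitrary strategy profile $s$, and aim to show that $v,s \models \alpha$ iff $v,s \models K_i\alpha$.

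For the direction $K_i\alpha \imp \alpha$ I would simply use that the observation relation $\sim_i^s$ is reflexive, which is immediate from its definition as $v \sim_i^s w$ iff $v \inter P_i(s) = w \inter P_i(s)$. Hence if $v,s \models K_i\alpha$, then instantiating the semantic clause for $K_i$ with the witness $w = v$ gives $v,s \models \alpha$. Note that this direction does not even require $\alpha \in L^\Kw$; it is the usual truth/reflexivity axiom and holds for every formula of $L^K$.

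For the converse $\alpha \imp K_i\alpha$, this is where the restriction $\alpha \in L^\Kw$ does the work, via Proposition~\ref{lemma.ii}. Suppose $v,s \models \alpha$. By Proposition~\ref{lemma.ii}, $v,s \models \alpha$ is equivalent to $s \models \alpha$, that is, to $w,s \models \alpha$ holding for \emph{all} $w \in V$. In particular $w,s \models \alpha$ holds for every $w$ with $v \sim_i^s w$, which by the semantic clause for $K_i$ is exactly $v,s \models K_i\alpha$. Since $v$, $s$, and $i$ were arbitrary, $\models \alpha \eq K_i\alpha$ follows.

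There is essentially no obstacle here: the entire content is packaged into Proposition~\ref{lemma.ii}, the valuation-independence of $\Kw$ formulas, and once that is available the argument is a one-line unfolding of the $K_i$ clause in each direction. The only point demanding a little care is to apply Proposition~\ref{lemma.ii} not just at $v$ but at every $w$ in the information set $[v]_i^s$, so that the universal quantification built into the knowledge operator is met uniformly; since the truth value of $\alpha$ does not vary with the world at all, this is automatic.
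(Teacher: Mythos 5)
Your proof is correct and follows essentially the same route as the paper: the direction $\alpha \imp K_i\alpha$ is obtained by invoking Proposition~\ref{lemma.ii} to upgrade truth at $v$ to truth at all valuations (hence at all $w \sim_i^s v$), and the converse direction uses reflexivity of $\sim_i^s$. Your added observation that the truth direction holds for all of $L^K$ is a nice touch but does not change the argument.
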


\begin{proof}
Let valuation $v$ and strategy profile $s$ be given. 

Assume $v,s \models \alpha$. Then from  Prop.~\ref{lemma.ii} it follows that for all $w \in V$, $w, s \models \alpha$. Therefore, in particular, $w,s \models \alpha$ for all $w \sim_i^s v$, which is by definition $v,s \models K_i \alpha$. 

Now assume $v,s \models K_i \alpha$. From $v \sim_i^s v$ and the semantics of knowledge now directly follows $v,s \models \alpha$. 

As $v$ and $s$ were arbitrary, we have shown $\models \alpha \eq K_i \alpha$.
\end{proof}

As a consequence each formula in the fragment $\Kw_i p_j \mid \neg \alpha \mid \alpha \vee \alpha \mid K_i \alpha$ is equivalent to a formula in $L^\Kw$, in other words, knowledge can then be eliminated. This explains why we defined the fragment $L^\Kw$ without an inductive clause for knowledge. 

Knowledge cannot generally be eliminated from a language with knowing whether variables. For example, Anne (1) may know whether Bill (2) passed the exam ($p_2$), but Bill may be uncertain {\em whether} she knows. So we have $\Kw_1 p_2 \et \neg K_2 \Kw_1 p_2$. Props.~\ref{lemma.ii} and \ref{cor.one} (and the subsequent Prop.~\ref{five}) do not hold for knowing whether fragments on arbitrary Kripke models.

\begin{restatable}{proposition}{restatablefour} \label{five}
For all $i,j,k \in N$: $\models \Kw_i\Kw_j p_k$.
\end{restatable}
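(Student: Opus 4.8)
The plan is to reduce everything to the valuation-independence of $\Kw$ formulas established in Proposition~\ref{lemma.ii}. Recall that $\Kw_i\Kw_j p_k$ abbreviates $K_i(\Kw_j p_k) \vel K_i\neg(\Kw_j p_k)$, and that the semantics of $K_i$ quantifies over the valuations in the current information set. The crucial observation is that the inner formula $\Kw_j p_k$ lies in $L^\Kw$, so by Proposition~\ref{lemma.ii} its truth value under a fixed strategy profile $s$ does not depend on the valuation at all: either $w,s\models\Kw_j p_k$ for every $w\in V$, or $w,s\not\models\Kw_j p_k$ for every $w\in V$.

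First I would fix an arbitrary valuation $v$ and strategy profile $s$, so that it suffices to show $v,s\models\Kw_i\Kw_j p_k$. Then I would split on whether $s\models\Kw_j p_k$. If it holds, then by Proposition~\ref{lemma.ii} every $w\in V$ satisfies $w,s\models\Kw_j p_k$, and in particular every $w$ with $v\sim_i^s w$ does; by the semantics of $K_i$ this gives $v,s\models K_i\Kw_j p_k$, hence $v,s\models\Kw_i\Kw_j p_k$. If instead $s\not\models\Kw_j p_k$, then the ``iff'' of Proposition~\ref{lemma.ii} forces $w,s\not\models\Kw_j p_k$, i.e.\ $w,s\models\neg\Kw_j p_k$, for every $w\in V$; again restricting to the $w$ with $v\sim_i^s w$ yields $v,s\models K_i\neg\Kw_j p_k$, so once more $v,s\models\Kw_i\Kw_j p_k$. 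Since in either case $v,s\models\Kw_i\Kw_j p_k$ and $v,s$ were arbitrary, we conclude $\models\Kw_i\Kw_j p_k$.

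I do not expect any real obstacle here: the whole content is the constancy of $\Kw_j p_k$ across valuations. The only point requiring a little care is the negative case, where one must read Proposition~\ref{lemma.ii} as a genuine equivalence, so that the failure of $s\models\Kw_j p_k$ propagates to failure at \emph{every} valuation rather than merely at some valuation; this is exactly the reformulation recorded in the remark following that proposition. An alternative and essentially equivalent route would apply Proposition~\ref{cor.one} to $\Kw_j p_k\in L^\Kw$ to replace $K_i\Kw_j p_k$ by $\Kw_j p_k$ (and likewise for $\neg\Kw_j p_k$, which also lies in $L^\Kw$), but the direct case analysis above seems cleanest and most self-contained.
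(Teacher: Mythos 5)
Your proof is correct and is essentially the paper's argument: the paper unfolds $\Kw_i\Kw_j p_k$ into the two disjuncts $K_i\Kw_j p_k \vel K_i\neg\Kw_j p_k$ and applies Proposition~\ref{cor.one} to reduce to the tautology $\Kw_j p_k \vel \neg\Kw_j p_k$, which is exactly the route you mention as an alternative at the end. Your direct case split on $s\models\Kw_j p_k$ via Proposition~\ref{lemma.ii} merely inlines the proof of Proposition~\ref{cor.one}, so the two arguments coincide in substance.
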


\begin{proof}
Formula $\Kw_i\Kw_j p_k$ is by definition equivalent to $K_i \Kw_j p_k \vel K_i \neg\Kw_j p_k$. From Prop.~\ref{cor.one} it follows that this is equivalent to $\Kw_j p_k \vel \neg \Kw_j p_k$ which is a tautology. 
\end{proof}

Therefore, in our very simple epistemic logic it is common knowledge whether a player knows a variable. This reflects the dynamics of revealing variables. Suppose all players hold cards named $p_1$, $q_1$, $p_2$, \dots on the back side and the value $0$ or $1$ on the front (face) side. You may not know {\em that} your neighbour has shown to your other neighbour that the value of the card $p_1$ is 1 (true). But you know {\em whether} your neighbour has shown card $p_1$ to your other neighbour. You saw it happen. 


\subsection{Pointed Boolean Observation Games}  

A \emph{pointed Boolean observation game} (pointed observation game) is a pair
$(G,v)$, denoted $G(v)$, where $v \in V$ and where $G$ is a triple $(N,(P_i)_{i \in N}, (\gamma_i)_{i \in N})$, where all $\gamma_i \in
L^K$. The players' strategies in the pointed observation game are the strategies $s_i \in S_i$. The players' goals in the pointed observation game are the $\gamma_i \in L^K$. Given $i \in N$, the \emph{outcome function} $u_i: V \times S \to \{0,1\}$ of a pointed observation game is defined as: \begin{quote} $u_{i}(v,s)=1$ if $v,{s} \models
\gamma_{i}$ and $u_{i}(v,s)=0$ if $v,{s} \not \models \gamma_{i}$. \end{quote} A
strategy profile $s$ is a \emph{Nash equilibrium} of $G(v)$ iff for
all $i \in N$ and $s_i' \in S_i$ we have $u_i(v,s) \geq
u_i(v,(s_i',s_{-i}))$.  That is, no player has a \textit{profitable deviation} from $s$ in $G(v)$, which would therefore be a $s_i' \in S_i$ such that $u_i(v,s) <
u_i(v,(s_i',s_{-i}))$. Observe that a player can only make a profitable deviation from $s$ if her goal is not satisfied in $s$. Let $\Ne(G(v))$ denote the set of Nash equilibria of $G(v)$. 

The pointed observation game is an auxiliary notion, matching the
intuition that after revealing variables a player wins when her goal
has become true.  The game is one of complete information because the
valuation is known to you, the reader. But the valuation is typically
not known to the players. It already uses the parameters of the Boolean observation game that we will now define in the next section.

\begin{example} \label{example.pointed}
We recall Example~\ref{example.tonymaria}. We summarily describe a pointed Boolean observation game and its equilibria, where a fuller development is only given in Example~\ref{ex.one}. Consider pointed game $G(v)$ with $G = (\{1,2\}, (\{p_1\},\{p_2\}), (\{\gamma_1,\gamma_2\})$ where $\gamma_1,\gamma_2$ are as in Example~\ref{example.tonymaria}, and where valuation $v = \{p_1,p_2\}$ (both are in love). The strategies are to reveal nothing or to reveal all, that is: $s_1^\emptyset$, $s_1^\forall$, $s_2^\emptyset$, and $s_2^\forall$.

The strategy profile $(s_1^\forall,s_2^\forall)$ is an equilibrium strategy profile of the pointed game $G(v)$, with outcome $1$ for both players. This is the only way to make $K_1 p_2 \et K_2 p_1$ true.  However, both players not announcing their variable is also an equilibrium with outcomes $0$. 

The pointed game $G(w)$ for valuation $w = \{p_1\}$ (only Tony is in love) has equilibrium $(s_1^\emptyset,s_2^\forall)$. We now need to make $K_1 \neg p_2 \et \neg K_2 p_1$ true. (Another equilibrium $(s_1^\forall,s_2^\emptyset)$ is when both get outcome $0$.)
\end{example}

\section{Defining Boolean Observation Games}
\label{section.model2}


We will now define the {\em Boolean observation game}. A Boolean
observation game is an incomplete information strategic form game with
{\bf uniform strategies} (uniform functions from valuations to
strategies) and {\bf expected outcomes} (information sets of
outcomes), whereas the auxiliary notion of a pointed observation game
is a complete information strategic form game with {\bf strategies}
and with (Boolean-valued) {\bf outcomes}.

\subsection{Boolean Observation Games}

This section contains the crucial game theoretical notions of our contribution. 

\medskip
\noindent {\bf Boolean Observation Game.}
A \emph{Boolean observation game} (or {\em observation game}) is a triple 
$G=(N,(P_i)_{i \in N}, (\gamma_i)_{i \in N})$, where all $\gamma_i \in
L^K$. Formula $\gamma_i$ is the \emph{goal} (\emph{objective}) of
player $i$. It is played with {\em uniform strategies} and the payoffs are {\em expected outcomes}. Both will now be defined.

\medskip

\noindent {\bf Uniform Strategy.}  A \emph{uniform strategy} for
player $i \in N$ is a function $\cs_i: V \to S_i$ such that for all
$v,w$ with $v \sim_i w$, $\cs_i(v) = \cs_i(w)$. It is \emph{globally
  uniform} iff for all $v,w \in V$, $\cs_i(v) = \cs_i(w)$. 
  
So, uniform
means the same for all indistinguishable valuations, which is
different from globally uniform, which means the same for all
valuations.  Let $\CS_i$ denote the set of uniform strategies of
player $i$, and $\CS = \CS_1 \times \dots \times \CS_n$ the set of
\emph{uniform strategy profiles}. Let $\CS_i^g$ and $\CS^g$ denote the set of
globally uniform strategies of player $i$ and the set of globally
uniform strategy profiles respectively.  Given a valuation $v$, a
uniform strategy profile $\cs$ determines a strategy profile
$\cs(v)=(\cs_1(v),\dots,\cs_n(v))$. Note that $(\cs(v)_i, \cs(v)_{-i})
= (\cs_i,\cs_{-i})(v)$. For $i \in N$ and $s_i \in S_i$, we define
$\dot{s_i}\in\CS_i^g$ as: for all $v \in V$,
$\dot{s_i}(v)=s_i$. Similarly for $s \in S$ we define
$\dot{s}\in\CS^g$ as the globally uniform strategy profile such that for all $v \in V$, $\dot{s}(v)=s$. It follows
from the definition that every globally uniform strategy profile $\cs \in
\CS^g$ is of the form $\dot{s}$ for some strategy profile $s \in S$.

\medskip
\noindent {\bf Expected Outcome.}  Given $i \in N$, the
\emph{expected outcome function} is a function $\uu_i: V \times \CS \to
\{0,1\}^*$ that is uniform in $V$, and defined as $\uu_i(v,\cs) =
(u_i(w,\cs(w)))_{w \sim_i v}$. So, {\em expected outcome}
$\uu_i(v,\cs)$ is a vector of outcomes $u_i(w,\cs(w))$ for each
valuation $w$ in the information set of player $i$. In our setting
where outcomes are $0$ (lose) or $1$ (win) this vector is a bitstring.

As far as nomenclature is concerned, we are putting the reader on the wrong foot, as a uniform strategy is not a kind of strategy (as defined in the previous section), nor is expected outcome a kind of (binary valued) outcome. However, we are in good company: an artificial brain is not a brain, and a cable car is not a car. So we hope the reader will allow us this slight abuse of language.

\medskip
\noindent {\bf Outcome Relation and Nash Equilibrium.}
To define the
notion of an equilibrium in observation games, we need to first define
a comparison relation between uniform strategy profiles. Note that
unlike in classical strategic games, the expected outcome function in
observation games generates a vector of outcomes. Therefore, there is
no canonical definition for the comparison relation. We define an
\textit{outcome relation} $>$ over vectors of outcomes and write
$\uu_i(v,\cs) > \uu_i(v,\cs')$ for ``player $i$ prefers $\cs_i$ over
$\cs'_i$ in the information set containing $v$''; we also say that $\cs_i$ is a profitable deviation from $\cs'_i$.

This outcome relation may not be a total order. We therefore prefer not to use notation $\leq$ to compare the bitstrings that are outcome sets, as it is ambiguous whether $x \leq y$ means ($x < y$ or $x = y$) or $x \ngtr y$ (and even when defined as either one or the other, it seems unkind to the reader).

Given an outcome relation $>$, a uniform strategy profile is a Nash equilibrium if no player has a profitable deviation.
\begin{quote}
A uniform strategy profile $\cs$ is a \emph{Nash equilibrium} of $G$ iff for all $i \in N$, $\cs_i' \in \CS_i$ and $v \in
V$, we have that $\uu_i(v,(\cs_i',\cs_{-i})) \ngtr \uu_i(v,\cs)$. 
\end{quote}
Given an observation game $G$, $\Ne(G)$ denotes its Nash equilibria, and among those $\Ne^g(G)$ denotes the globally uniform Nash equilibria.

Also, a uniform strategy
$\cs_i\in \CS_i$ is \textit{dominant} if for all $\cs\in\CS$ with $\cs = (\cs_i,\cs_{-i})$, for all $\cs'_i \in \CS_i$,
and for all $v$, $\uu_i(v, (\cs_i',\cs_{-i})) \ngtr \uu_i(v,\cs)$.\footnote{This is weak dominance of the kind `always at least as good' where we emphasize that we do not define it as `always at least as good and sometimes strictly better', which is also common in game theory.}

\medskip
\noindent {\bf Four Outcome Relations.}
It remains to define the outcome relation. We propose four. 
\[\begin{array}{llll}
\text{optimist}:  & \uu_i(v,\cs) >^\opt \uu_i(v,\cs') & \text{iff} & \max \uu_i(v,\cs) > \max \uu_i(v,\cs') \\
\text{pessimist}: & \uu_i(v,\cs) >^\pess \uu_i(v,\cs') & \text{iff} & \min \uu_i(v,\cs) > \min \uu_i(v,\cs') \\
\text{realist}:  & \uu_i(v,\cs) >^\real \uu_i(v,\cs') & \text{iff} & \Sigma \uu_i(v,\cs) > \Sigma \uu_i(v,\cs') \\
\text{maximal}: & \uu_i(v,\cs) >^\mmax \uu_i(v,\cs') & \text{iff} & u_i(w,\cs(w)) > u_i(w,\cs'(w)) \text{ for some } w \sim_i v
\end{array}\]
The optimist, pessimist and realist outcome relations are (strict)
total orders, as it suffices to assign a number to the information set
constituting an expected outcome.
The maximal outcome relation is not a total order. 

We let $\Ne_{\mathsf{pess}}(G)$,
$\Ne_{\mathsf{opt}}(G)$, $\Ne_{\mathsf{real}}(G)$, 
and $\Ne_{\mathsf{max}}(G)$ denote the Nash equilibria under the
pessimist, optimist, realist and maximal outcome relation, respectively. The optimist, pessimist and realist outcome relations are (strict)
total orders, as it suffices to assign a number to the information set
constituting an expected outcome. The maximal outcome relation is not a total order as illustrated in Example \ref{ex:maxrel}. However, defining this relation is useful since $\Ne_{\mathsf{max}}(G)$ has an interesting
interpretation which we discuss below.

\begin{example}
\label{ex:maxrel}
Let us consider an abstract example where a player has to choose between expected outcomes (bitstrings) 00, 10, 01, 11. We then get (where clustered bitstrings means equally preferred):
\[\begin{array}{lll}
\{01,10,11\} >^\opt 00 \\
11 >^\pess \{00,01,10\} \\
11 >^\real \{01,10\} >^\real 00 \\
ij >^\mmax kl \text{ \ iff  \ } i > k \text{ or } j > l
\end{array}\]
\end{example}

The $>^\mmax$ relation  is neither antisymmetric nor transitive. For instance, in Example \ref{ex:maxrel} we have that $10 >^\mmax 01$ but also $01 >^\mmax 10$, and it is not transitive because $01 >^\mmax 10 >^\mmax 01$ however $01 \ngtr^\mmax 01$. Thus the $>^\mmax$ relation is neither a total order nor a preorder.
However, it has a maximum
and a minimum: the expected outcome where the player always wins is preferred over all
other expected outcomes, and the expected outcome where the player always loses is less
preferred than all other expected outcomes. 

The maximal outcome relation
also satisfies the important property that all outcomes can be compared and therefore, the notion of a Nash equilibrium is well-defined. If $\uu_i(v,\cs) \neq \uu_i(v,\cs')$, then $\uu_i(v,\cs) >^\mmax \uu_i(v,\cs')$ or $\uu_i(v,\cs') >^\mmax \uu_i(v,\cs)$. The disjunction in the consequent is inclusive, both may hold (we recall that $10>^\mmax 01$ as well as $01 >^\mmax 10$, as in Example \ref{ex:maxrel}). To require this property is common in ordinal game theory \cite{DurieuHQS08}.

The outcome relations that we have proposed are qualitative versions of well-known criteria in decision theory and Bayesian reasoning. None assume a probability
distribution, however, all assume a strictly positive probability for each valuation.
\begin{itemize}
\item The optimist outcome relation is the max instantiation (as there is
only one maximal value) of the \emph{minimax regret} decision
criterion \cite{savage:1951}. With respect to the highest possible outcome in the information set, a lower possible outcome in the information set (which can only be 0 instead of 1) would cause regret if this were to happen. 

\item The pessimist outcome relation is the
min instantiation of the \emph{maximin} or \emph{Wald} decision
criterion \cite{wald:1945}. We then choose the information set with
the best worst outcome. This outcome relation has been used to model uncertainty in voting (with similar considerations involving Nash equilibria and dominance) in  \cite{conitzeretal.aaai:2011,hvdetal.TARKvote:2013,bakhtiarietal:2019}.

\item The realist
outcome relation is a qualititative version (lack of justification to
rule out any outcome) of a random decision in Bayesian terms, also
known as the insufficient reason or Laplace decision criterion, or as
the principle of indifference \cite[Chapter IV]{keynes:1921}. 

Instead of taking the sum of the outcomes in the information set we could of course have normalized this so it adds up to 1, suggesting an even distribution of probability mass. Such scaling is irrelevant for our purposes of determining Nash equilibria and dominance, wherein we only need to compare outcomes. That comparison relation remains the same.

This outcome relation was used in \cite{agotnesetal:2011,agotnesetal.qa:2011} to determine
equilibria of similar incomplete information games, but where more complex formulas than mere variables could be `revealed' (however, they could only be publicly announced). An issue for the realist outcome relation is whether bisimilar game states (that therefore satisfy the same goals for all players) should be counted once or twice.\footnote{Personal communication by Martin Otto.} On the one hand, if two game states are bisimilar this is justification / sufficient reason to rule out one of them, according to Laplace.  On the other hand these bisimilar game states might have originated from playing strategy profiles (executing epistemic actions) in initial game states that were non-bisimilar. It is relevant to observe this as we note that this phenomenon {\bf cannot} occur in our simpler setting involving observation relations.

\item 
The notion of Nash equilibrium for the maximal outcome relation has an interesting
interpretation. A  maximal Nash equilibrium is a uniform
strategy profile where no player has a profitable deviation even if the player has complete information about the game. There
is an equivalent formulation of maximal Nash equilibrium as a
qualitative version of {\em ex-post equilibrium} \cite{Apt11}, which
we show in Proposition \ref{prop:NEST}.

\end{itemize}
Various of the above outcome relations have also been considered in \cite{parikhetal:2013}.

\begin{restatable}{proposition}{propNEST}
\label{prop:NEST}
A uniform strategy profile $\cs$ is a maximal Nash equilibrium for $G$ iff for all $v
\in V$, $\cs(v)$ is a Nash equilibrium for $G(v)$. 
\end{restatable}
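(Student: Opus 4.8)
The plan is to unfold both sides of the biconditional into pointwise inequalities between outcomes in $\{0,1\}$ and observe that, after unfolding, the two conditions are syntactic rearrangements of each other. Recall that $\cs$ is a maximal Nash equilibrium of $G$ precisely when, for every player $i$, every uniform deviation $\cs_i' \in \CS_i$, and every $v \in V$, we have $\uu_i(v,(\cs_i',\cs_{-i})) \ngtr^\mmax \uu_i(v,\cs)$. By the definition of $>^\mmax$ together with the fact that outcomes lie in $\{0,1\}$ (so that $a \ngtr b$ coincides with $a \leq b$), this negation unfolds to: for all $w \sim_i v$, $u_i(w,(\cs_i',\cs_{-i})(w)) \leq u_i(w,\cs(w))$. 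On the other side, ``$\cs(v)$ is a Nash equilibrium of $G(v)$ for every $v$'' unfolds to: for all $v$, all $i$, and all $s_i' \in S_i$, $u_i(v,(s_i',\cs(v)_{-i})) \leq u_i(v,\cs(v))$, where I use the identity $(\cs_i',\cs_{-i})(v) = (\cs_i'(v),\cs(v)_{-i})$ noted in the definition of uniform strategies.

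For the left-to-right direction I would fix $v$, $i$, and an arbitrary pointed deviation $s_i' \in S_i$, and lift it to the globally uniform strategy $\dot{s_i'} \in \CS_i^g \subseteq \CS_i$. Applying the unfolded maximal-equilibrium hypothesis to the uniform deviation $\dot{s_i'}$ at player $i$ and valuation $v$ yields $u_i(w,(\dot{s_i'},\cs_{-i})(w)) \leq u_i(w,\cs(w))$ for every $w \sim_i v$; specializing to the coordinate $w = v$ and using $(\dot{s_i'},\cs_{-i})(v) = (s_i',\cs(v)_{-i})$ gives exactly the pointed-equilibrium inequality at $v$. Since $v$, $i$, $s_i'$ were arbitrary, every $\cs(v)$ is a Nash equilibrium of $G(v)$.

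For the converse I would fix $i$, a uniform deviation $\cs_i' \in \CS_i$, and $v$, and verify $\uu_i(v,(\cs_i',\cs_{-i})) \ngtr^\mmax \uu_i(v,\cs)$ by checking the defining inequality at each coordinate $w \sim_i v$ separately. For a fixed such $w$, the component $\cs_i'(w)$ is an element of $S_i$, so the pointed-equilibrium hypothesis for $G(w)$, applied to player $i$ with pointed deviation $\cs_i'(w)$, gives $u_i(w,(\cs_i'(w),\cs(w)_{-i})) \leq u_i(w,\cs(w))$, which is precisely $u_i(w,(\cs_i',\cs_{-i})(w)) \leq u_i(w,\cs(w))$. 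As this holds for every $w \sim_i v$, the maximal outcome relation yields no strict improvement, finishing this direction.

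The only genuine subtlety, and the step I would treat with most care, is the quantifier mismatch between the two equilibrium notions: the maximal-equilibrium condition ranges over \emph{uniform} deviations $\cs_i' \in \CS_i$, whereas each pointed game $G(v)$ permits \emph{arbitrary} deviations $s_i' \in S_i$. This bites in the left-to-right direction and is resolved cleanly by the lift $s_i' \mapsto \dot{s_i'}$ into the globally uniform strategies, combined with the observation that the $\ngtr^\mmax$ condition, though stated as a ``for all $w \sim_i v$'' constraint, can be specialized to the single coordinate $w = v$ to recover the pointed inequality. No finer choice of uniform witness is required, since every globally uniform strategy is automatically uniform.
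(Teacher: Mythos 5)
Your proof is correct and follows essentially the same route as the paper's: both unfold $>^\mmax$ into pointwise comparisons of $\{0,1\}$-valued outcomes and bridge the quantifier mismatch between uniform and pointed deviations by lifting a pointed deviation $s_i'$ to a uniform one. The only cosmetic difference is that the paper argues both directions by contraposition and uses as witness the uniform strategy that equals $s_i'$ on the information set of $w$ and $\cs_i$ elsewhere, whereas you use the globally uniform lift $\dot{s_i'}$ --- both witnesses work equally well.
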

\begin{proof}
Suppose $\cs
  \not\in \Ne_{\mathsf{max}}(G)$.  Then there exist $v \in V$, $i \in N$, and $\cs'_i \in \CS_i$ such that $\uu_i(v,
  (\cs'_i,\cs_{-i})) >^\mmax \uu_i(v,\cs)$. It follows that there is $w
  \sim_i v$ such that $u_i(w, (\cs'_i,\cs_{-i})(w)) > u_i(w, \cs(w))$,
  so $u_i(w, (\cs'_i,\cs_{-i})(w)) = 1$ and $u_i(w, \cs(w)) =
  0$. Therefore $\cs(w)\notin \Ne(G(w))$.

Suppose $\cs(w)\notin \Ne(G(w))$ for some valuation $w$.  Then there exist $i \in N$, $s'_i \in
S_i$ such that $u_i(w, (s'_i,\cs_{-i}(w)) >
u_i(w,\cs(w))$. Let $\cs'_i \in \CS_i$ be the uniform strategy such
that for all $v \sim_i w$, $\cs'_i(v) = s'_i$ (so in particular,
$\cs'_i(w) = s'_i$), and for all $v \not\sim_i w$, $\cs'_i(v) =
\cs_i(v)$. By the maximal relation, from $u_i(w, (\cs'_i,\cs_{-i})(w)) =
u_i(w, (s'_i,\cs(w)_{-i}) > u_i(w,\cs(w))$ it follows that $\uu_i(w,
(\cs'_i,\cs_{-i})) >^\mmax \uu_i(w,\cs)$. Therefore $\cs \notin
\Ne_{\mathsf{max}}(G)$.
\end{proof}

In the remaining sections we focus on optimist, pessimist and maximal
Nash equilibrium and not on realist Nash equilibrium. We use the operational definition of maximal Nash
equilibrium given by the correspondence in Proposition
\ref{prop:NEST}. It is easy to see that a maximal Nash
equilibrium is also an optimist, pessimist, and realist Nash
equilibrium. In that sense the maximal outcome relation is the \emph{strongest} notion, resulting in the smallest number of equilibria for a game (if any).

\subsection{Various Classes of Observation Games}
\label{subsec:subclass}
With all the technical tools now at our disposal, very different observation games are of specific interest. We can distinguish them by which outcome relation they employ, and independently by the shape of the epistemic goals. Concerning goals it is useful to distinguish the following.
\begin{itemize}
\item In \emph{two-player zero-sum games}, $\gamma_i = \neg \gamma_j$ and in \emph{two-player symmetric games} $\gamma_i = \gamma_j$, where $|N| = 2$, $i \neq j$, and $i,j \in N$. In \emph{cooperative games} $\Et_{i \in N} \gamma_i$ is consistent. Example~\ref{ex.one} below is symmetric, and Example~\ref{ex:noNE} is zero-sum (and therefore not consistent). Communicative scenarios obeying the Gricean cooperative principle are clearly consistent observation games (and might still be considered games inasfar as people want to outdo each other in being informative). Whereas security protocol settings with eavesdroppers (consider observing an SMS code that you were sent to confirm a bank transfer) tend to be zero-sum; that is, a generalization of zero-sum: the objectives of the principals are the opposite of those of the eavesdroppers. 

We do not have theoretical results for zero-sum or symmetric games.

\item In \emph{knowing-whether observation games} ({\em knowing-whether games, $\Kw$ games}) all goals $\gamma_i$ are in $L^\Kw$. In knowing-whether games the outcome does not depend on the valuation. Whether some $\Kw_i p_j$ is true only depends on player $j$ revealing $p_j$ to player $i$, and does not depend on the valuation, because the truth of $p_j$ does not depend on the value of $p_j$. 

Section~\ref{sec:knowingwhether} is entirely devoted to knowing-whether games, and Section~\ref{sec:existsNE} contains results on existence of equilibria. They relate well to the usual Boolean game. Not surprisingly, as the outcome does not depend on the valuation, they also score better on the computational complexity of determining whether a uniform strategy profile is a Nash equilibrium, or whether Nash equilibria exist, than other classes of observation game. That will be investigated in Section~\ref{subsec.complkw}.

\item  In observation games with \emph{guarded} goals (of shape $\gamma_i = K_i \alpha$, see Subsection~\ref{sec.logic}) the players know whether they have achieved their objective after playing the game. Whereas in games where the goals are not guarded they may not and need an oracle to inform them of the outcome (such as, when standing in front of an ATM teller, the bank's interface informing them). If goals are guarded, Nash equilibria always exists for the optimist and the pessimist outcome relation, as formulated and shown in Theorem~\ref{thm:existsNEpess} in Section~\ref{sec:existsNE}.

\item In games where all $\gamma_i$ are \emph{positive formulas} (in the fragment $L^+$ where negations do not bind $K_j$ modalities, see Subsection~\ref{sec.logic}), a player's goal is never to remain ignorant of a fact, or even for other players to remain ignorant. Under such circumstances revealing all you know is a dominant strategy. This is therefore rather restricted.

\item More interesting than positive goals are the observation games with {\em self-positive goals} wherein your goal is to become less ignorant yourself although you may wish to keep other players ignorant (see again Subsection~\ref{sec.logic}).  We provide a result for self-positive goals in Corollary~\ref{cor.self} in Section~\ref{sec:existsNE}.

\end{itemize}
For all these, results on existence of equilibria and complexity also depend on which outcome relation is used, as already occasionally listed above. 

Last but not least one can consider \emph{iterated observation games} with temporal eventuality goals, where players successively reveal more and more of their observed variables. An example are (successive) \emph{question-answer games} wherein the strategic aspect is what variable(s) to ask another player(s) to reveal, which seems of particular interest for strategic negotiation (if you give me this, I'll give you that, and so on). All these come with specific questions on compact representation and existence of equilibria. 

We defer the investigation of iterated games and question-answer games to future research. In this work we focus on knowing-whether games and their relation to Boolean games, on the existence of equilibria for various outcome relations (where the realist outcome relation plays no role), and on complexity results for some of our variations.

We now continue with some detailed examples.

\begin{example} \label{ex.one}
Recall Example~\ref{example.tonymaria} (page \pageref{example.tonymaria}) and Example~\ref{example.pointed}. We now give full details.

Consider the observation game $\obsgame$ where $N=\{1,2\}$, $P_1=\{p_1\}$,
$P_2=\{p_2\}$ and the (symmetric) goals:
\[\begin{array}{llllll}
\gamma_1 = \gamma_2 & = &  p_1\et p_2 &\imp& K_1 p_2 \et K_2 p_1 & \et \\
 & & p_1\et \neg p_2 &\imp& K_1 \neg p_2 \et \neg K_2 p_1 & \et \\
 & & \neg p_1\et p_2 &\imp& \neg K_1 p_2 \et \neg K_2 \neg p_1 & \et \\
 & & \neg p_1\et \neg p_2 &\imp& \neg K_1 \neg p_2 \et K_2 \neg p_1 
\end{array}\] 
As there are only two players and each player observes a single variable the strategies are to reveal nothing or to reveal all, that is: $s_1^\emptyset$, $s_1^\forall$, $s_2^\emptyset$, and $s_2^\forall$.

For each valuation $v$ the pointed observation game $G(v)$ has an equilibrium where both players get outcome 1. For example, if $p_1$ and $p_2$ are both true, then
both players revealing (announcing) that is an equilibrium with
outcome 1 for both players. However, both players not announcing their variable is also an equilibrium with outcome 0.

Let us now determine equilibria for $\obsgame$, with uniform strategies instead of strategies, and let us consider the different outcome relations.
\begin{itemize}
\item {\bf pessimist.} Player 1 cannot distinguish between the valuations $\{p_1,p_2\}$ and $\{p_1\}$.
Thus, for all $\cs \in \CS$ and for all $v \in V$, $\min \uu_1(v,\cs)
= 0$. The situation is symmetric for player $2$. Therefore, for all $\cs \in \CS$, $\cs \in
\Ne_{\mathsf{pess}}(G)$.
\item {\bf optimist.} Similarly, for all $\cs \in \CS$,
for all $v \in V$ and for all $i \in \{1,2\}$, $\max \uu_i(v,\cs) =
1$. Therefore, for all $\cs \in \CS$, $\cs \in \Ne_{\mathsf{opt}}(G)$.
\item {\bf realist.} In this example, whatever the valuation $v$, $\Sigma \uu_i(v,\cs) = \max \uu_i(v,\cs) = 1$, so that also, for all $\cs \in \CS$, $\cs \in \Ne_{\mathsf{real}}(G)$.
\item {\bf maximal.} $\Ne_{\mathsf{max}}(G) = \emptyset$. There are no maximal Nash equilibria, because every information set for both players always contains a win and a lose, so if they were to know the real valuation, one of those is not an equilibrium for the pointed game.
\end{itemize}

Possibly, the equilibria depend on what we called the `personalities of Tony and Maria', that is on the shape of the goals? We considered two different personalities that therefore allow four different goals, but (the reader can check that) none makes a difference for any of the four outcome relations, as the property that each information set contains a win and a lose persists throughout such transformations. The best is always win, and the worst is always lose. However for other `personalities' (for lack of a better term) this need not be, for example, change $\neg K_1 p_2 \et \neg K_2 \neg p_1$ in the third conjunct into $\neg K_1 p_2 \et K_2 \neg p_1$ (we removed one negation symbol). It is now dominant for player $1$ to announce the value of $p_1$ in the information set wherein $p_1$ is false.
\end{example}

\begin{example}
\label{ex:noNE}
Recall Example~\ref{example.pennies} on page \pageref{example.pennies} about the pennies that do not match. We can now model this as a knowing-whether Boolean observation game $\obsgame$ where
$N=\{1,2\}$, $P = P_1 \union P_2$ with $P_1=\{p_1\}$, $P_2=\{p_2\}$, and 
\[\begin{array}{rrrrr}
\gamma_1 & = & \Kw_1 p_2 & \eq & \neg\Kw_2 p_1 \\
\gamma_2 & = & \Kw_1 p_2 & \eq & \Kw_2 p_1
\end{array}\] 
For $i = 1,2$, player $i$ has strategy $s_i^\emptyset$ wherein she
reveals nothing (`hide $p_i$') and strategy $s_i^\forall$ wherein she reveals the
value of $p_i$. Irrespective of the valuation, in the strategy profiles
$(s_1^\emptyset,s_2^\emptyset)$ and $(s_1^\forall,s_2^\forall)$, player $1$ has a
profitable deviation in the corresponding pointed observation
game. Similarly, in $(s_1^\emptyset,s_2^\forall)$ and $(s_1^\forall,s_2^\emptyset)$,
player $2$ has a profitable deviation. Thus it can be verified that
$\Ne_{\mathsf{max}}(G) = \emptyset$. Also, within the set of all
globally uniform strategy profiles, $G$ does not have a Nash
equilibrium for the pessimist and optimist outcome relation.

However, this game has a Nash equilibrium with uniform
strategies that are not globally uniform, for the pessimist and for the optimist outcome
relation. Consider the uniform strategy profile $\cs=(\cs_1,\cs_2)$
where in $\cs_1$, player
$1$ reveals $p_1$ to $2$ when $p_1$ is true and hides $p_1$ from $2$
when $p_1$ is false, and in $\cs_2$, player $2$ reveals $p_2$ to $1$ when $p_2$ is true
and hides $p_2$ from $1$ when $p_2$ is false.
Thus $\min \uu_1(v,\cs) = \min \uu_2(v,\cs)=0$. It can then be verified that no player has profitable deviation from $\cs$ and therefore $\cs \in
\Ne_{\mathsf{pess}}(G)$. Similarly, it can be noted that $\max \uu_1(v,\cs) = \max \uu_2(v,\cs) = 1$. Therefore, $\cs \in \Ne_{\mathsf{opt}}(G)$.
\end{example}

\subsection{Comparison to Related Work} \label{section.related} \label{sec.epistemic}

In this section we compare in more detail our epistemic Boolean games to the two prior proposals in the literature known to us \cite{AgotnesHHW13,HerzigLMS16}, that were already mentioned in the introductory section. We recall that these are imperfect information games (they feature epistemic objectives), however they are not incomplete information games. We also succinctly compare our proposal to an incomplete information game, that is however not a Boolean game \cite{agotnesetal:2011}.

\paragraph*{Comparison to `Boolean Games with Epistemic Goals'.}
In `Boolean games with epistemic goals' \cite{AgotnesHHW13} the set of variables $P$ is partitioned into $|N|=n$ mutually disjoint subsets of variables $P_i$, for $i \in N$, such that the variables in $P_i$ can only be controlled by player $i$. This is as usual in Boolean games, and therefore the strategies played are also as usual, so that a strategy profile is a valuation of all variables. However, the goals are different from the usual in Boolean games, and like ours: these are not merely Boolean goals (formulas in the language $L^B$) whose satisfaction depends on this valuation but these are epistemic goals (the language $L^K$) whose satisfaction depends on what the players know about this valuation. This is where another parameter of their games comes into play: apart from a set $P_i$ of `controlled variables' each player $i$ also has a finite `visibility set' consisting of Boolean formulas, that is, some finite subset of the language $L^B$: those are the propositions whose value that player can observe of the outcome valuation. Such Booleans may involve variables not controlled by player $i$ but by other players $j$. Already, this seems to beg some questions on logical closure, for example if $p\et q$ is in the visibility set but neither variable $p$ nor variable $q$ (where we note that the epistemic goal formulas have the usual compositional semantics, so $K_i (p \et q)$ is true if and only if $K_ip$ and $K_i q$ are true). However, a special case is when the visibility set consists of variables only, which \cite{AgotnesHHW13} call \emph{atomic} games, and this suffices for a comparison with our results. The visibility set determines what is known by the players and thus which epistemic goals are satisfied in a valuation. Because the players altogether control the value of all variables the game is not one of incomplete information (strategies do not depend on an unknown initial valuation) although it is one of imperfect information (over the outcome valuation). The authors then determine that model checking goal formulas is PSPACE-complete and that the existence of Nash equilibria is in PSPACE, although they do not show a lower bound. They also provide an interesting embedding of their epistemic Boolean games into the standard Boolean games by observing that an epistemic goal corresponds to an exponentially larger Boolean goal that is the disjunction of all valuations over which the epistemic goal is uncertain. For example, in some given game, $K_i p$ may abbreviate $(p \et \neg q) \vel (p \et q)$. This is therefore a rather different embedding from our embedding of knowing-whether Boolean observation games into Boolean games wherein the goals remain the same but the set of variables (and thus valuations) is larger: we recall that a $\Kw$ game $G$ for variables $p_i$ is transformed into a Boolean game $B_G$ for variables $\Kw_j p_i$: the knowing-whether formulas are now considered atomic propositions. The goals remain the same in our approach, because knowing-whether goals are Booleans in the language wherein $\Kw_j p_i$ are atomic propositions. 

\paragraph*{Comparison to `Epistemic Boolean Games Based on a Logic of Visibility and Control'.} 
The authors of this work \cite{HerzigLMS16} propose a very expressive logical language and semantics for players controlling the value of propositional variables or observing the value of propositional variables. They also axiomatize this logic. They then use the logic to formalize game theoretical primitives, in particular the existence of equilibria, in an epistemic extension of Boolean games. This formalization allows them to determine the complexity of these games. The problems of determining whether a profile is Nash equilibrium as well as the existence of Nash equilibrium are both in PSPACE.

Their language extension includes knowledge, common knowledge, and for control or observation of propositional variables they propose additional propositional variables. We not only have, for example, a variable $p$, but also $S_i p$, for `player $i$ observes the value of $p$' and $C_i p$ for `player $i$ controls the value of $p$'. But also variables like $C_j S_i p$, for `player $j$ controls whether player $i$ observes $p$', and so on for any stack of $C_j$ or $S_i$ predicates. The interest of these complex propositional variables is that they induce relational Kripke models or can be used to formalize strategies in Boolean games. 

In the epistemic Boolean games of \cite{HerzigLMS16} the {\bf strategies} assign values to variables that are stacks of $S_i$ binding some atom $p$ (so without any $C_i$ or $K_i$), as in $S_i S_j p$, saying that $i$ can see whether $j$ can see the value of $p$, whereas the {\bf goals} are epistemic formulas in the language for such atoms $p$ (so without $S_i$ or $C_i$), as in $K_i p \et \neg K_j p$.\footnote{As $S_i$ stacks are arbitrarily long, there is an infinite set of such atoms to consider. However, the partition among players controlling variables is of a finite subset only of that infinite set. This permits $S_i p$ but not $p$ to be in that finite subset, which would rule out to determine the value of a goal $K_i p$ (as no player gives a value to $p$, that is, no player controls $p$). In their accompanying examples, the finite subset jointly controlled by all players is always subformula closed. This therefore seems an omitted requirement.} One might say that their epistemic Boolean games essentially remain \emph{Boolean} games, because the players still only control the value of variables, but this is only by a (quite smart) stretch of the modeling imagination, because their Boolean variables hard-code arbitrarily complex higher-order multi-agent observations. However, these are not games of incomplete information.

The focus of \cite{HerzigLMS16} is the axiomatization of their logic of visibility and control (it also contains program modalities with primitive operations assigning values to variables). The game-theorical contribution is mainly `proof of concept'.

\paragraph*{Comparison to `Public Announcement Games'.}
`Public announcement games' which are studied in \cite{agotnesetal:2011} and the related `question answer games'  \cite{agotnesetal.qa:2011} also present incomplete games of imperfect information. Expected outcomes are compared with the realist outcome relation. The value of variables is not controlled in any way in \cite{agotnesetal:2011,agotnesetal.qa:2011}, the valuations are fixed. Public announcement games are not Boolean games, because the players' strategies are revelations of any formula, not merely of Booleans. Of course one could consider a class of public announcement games wherein the strategies are restricted to announcing propositional variables only. However, we recall that public announcements are revelations of the same information to all players simultaneously, so this is not as general as our proposal.

\section{Knowing-Whether Boolean Observation Games}\label{section.kw}
\label{sec:knowingwhether}

In this section we show a correspondence between knowing-whether
Boolean observation games ($\Kw$ games) and Boolean games. We provide 
polynomial time reductions that convert a Boolean game to a $\Kw$
 game and vice-versa.

We first recall the definition of Boolean game. We then show that every Boolean game defines a $\Kw$ Boolean observation game, and that every $\Kw$ Boolean observation game defines a Boolean game. These embeddings are different, the first is not the converse of the second.

We further show a utility preserving
equivalence between strategies in Boolean games and equivalence
classes of globally uniform strategies in $\Kw$ games
(Lemmas \ref{lm:BoolObsBiject}, \ref{lm:ObsBoolBiject}). As a
consequence, we prove a correspondence between the existence of Nash
equilibria in Boolean games and the existence of maximal Nash
equilibria in $\Kw$ games, and for both reductions (Theorems
\ref{thm:BoolToKwObs}, \ref{thm:KwObsToBool}). 

We finally show that there always exists a pessimist equilibrium for
$2$-player $\Kw$ games, but not for $\Kw$ games in general: we give an $8$-player $\Kw$ game without a Nash equilibrium (where we do not know if such games exist for between $3$ and $7$ players).

Recall that for any $v \in V$ and $\alpha \in L^\Kw$, $v,s \models
\alpha$ iff $s \models \alpha$ (Prop.~\ref{lemma.ii}). This justifies writing $u_i(s)$ for
the outcome $u_i(v,s)$ of a pointed $\Kw$ game. Now consider a
globally uniform strategy profile $\dot{s}\in \CS^g$. As $\uu_i(v,\dot{s})= u_i(v,\dot{s}(v)) = u_i(v,s)$, this
justifies writing $\uu_i(\dot{s})$ for
the expected outcome of a such a $\Kw$ game. 

\subsection{Boolean Games}

Boolean games have the same parameters
as Boolean observation games but simpler strategies. A Boolean
game is denoted $\boolgame$ to distinguish it from a Boolean observation game
$\obsgame$.

A {\em Boolean game} is a tuple $\boolgame=(N,(P_i)_{i\in
  N},(\gamma_i)_{i \in N})$ where all $\gamma_i \in L^B$ (all goals are Boolean). For $i
\in N$, a \emph{strategy} $v_i$ for player $i$ is a \emph{(local)
valuation} $v_i \subseteq P_i$, where, slightly abusing notation, we
identify a \emph{strategy profile} $v = (v_1,\dots,v_n)$ with a
valuation $v = (v_1 \union \dots \union v_n) \in V$. For Boolean
games, the \emph{outcome function} is denoted $\boolutil$ to
distinguish it from the outcome function $u$ of pointed Boolean observation
games. We define $\boolutil_i(v)=1$ if $v \models \gamma_i$ and
$\boolutil_i(v)=0$ if $v \not\models \gamma_i$. Equilibrium is as for
pointed observation games: a strategy profile $v \in V$ is a
\emph{Nash equilibrium} in $B$ if for all $i \in N$ and $v_i'
\subseteq P_i$, $u_i^\boolgame(v) \geq
u_i^\boolgame(v_i',v_{-i})$. Given $B$, its Nash equilibria are
denoted $\Ne(B)$.

\medskip

Let us emphasize the difference between Boolean games and Boolean
observation games. In Boolean observation games, as in Boolean games,
a player $i$ selects a subset $v_i$ of her local variables
$P_i$. However, in Boolean observation games this subset may be a
different subset $s_i(j) \subseteq P_i$ for each other player
$j$. Also, in Boolean games, excuting strategy $v_i$ means that the
$p_i \in v_i$ \emph{become true} whereas the $p_i \in P_i\setminus
v_i$ become false. Whereas in Boolean observation games, executing
strategy with component $s_i(j)$ means that the $p_i \in s_i(j)$, that
already have an observed truth value, are \emph{revealed} (to $j$).

\subsection{Boolean Games to Knowing-Whether Games}
\label{subsec:BoolToKw}
We construct a $\Kw$ game denoted $G_B$ from a Boolean game $B$ as
follows. Let $\boolgame=(N,(P_i)_{i\in N},(\gamma_i)_{i \in N})$. Then
$G_B := (N,(P_i)_{i\in N},(\beta_i)_{i \in N})$ where each $\beta_i
:=\lambda(\gamma_i)$ is defined as follows. Let ${i^+} := i+1$ for $i
= 1,\dots,n-1$ and ${n^+} := 1$. Then $\lambda: L^B \imp L^\Kw$ is
inductively defined as: for all $i$, $p_i \in P_i$, $\lambda(p_i) :=
\Kw_{i^+}p_i$, and (trivially) $\lambda(\neg\alpha) := \neg
\lambda(\alpha)$ and $\lambda(\alpha_1\vel\alpha_2) :=
\lambda(\alpha_1) \vel \lambda(\alpha_2)$.
Note that $B$ and $G_B$ are defined for the same players and variables.

Given a strategy profile $v \in V$ for $B$, we define globally uniform strategy profile $\dot{s^v} \in \CS^g$ for $G_B$ such that for all $i \in N$ and $p_i \in P_i$: $p_i \in s^v_i(i^+)$ if $p_i \in v$; $s^v_i(i)= P_i$; and for all $j \in N$ with $j \neq i,i^+$, $s^v_i(j) = \emptyset$. Note that for all valuations $w$, including $v$, $\dot{s^v}(w) = s^v$. Notation $s^v$ is therefore not to be confused
with notation $\cs(v)$ for uniform profiles $\cs$. In this section we will show how the $v$ strategy for $B$ corresponds to the $\dot{s^v}$ strategy for $G_B$.

Note that for all $i \in N$,
  we have $|\beta_i|=\bigO(|\gamma_i|)$ where $|\beta_i|$ and
  $|\gamma_i|$ denote the size of (number of symbols in) $\beta_i$ and
  $\gamma_i$ respectively. Thus given $B$, the associated $\Kw$ game
  $G_B$ can be constructed in polynomial time.

\begin{example} \label{ex.btogb}
We illustrate how to construct a $\Kw$ game $G_B$ from a Boolean game $B$. (We will not analyze the equilibria of the game, if any.) Consider \[ B = (\{1,2,3\}, (\{p_1\},\{p_2\},\{p_3\}), (p_1 \eq p_3, p_3 \imp p_1, \neg p_1 \imp p_2)) \] Then $G_B$ has the same variables $p_1,q_1,p_2,p_3$ but different goals, namely $\Kw_2 p_1 \eq \Kw_1 p_3$ for player 1,  $\Kw_1 p_3 \imp \Kw_2 p_1$ for player 2, and $\neg \Kw_2 p_1 \imp \Kw_3 p_2$ for player 3. 

In the Boolean game, for player 1 to obtain her goal $\gamma_1 = p_1
\eq p_3$, player $1$ has to make $p_1$ true, it does not matter
whether player $2$ makes $p_2$ true or false, and player $3$ has to
make $p_3$ true. In the $\Kw$ game, in order to achieve the goal
$\beta_1 = \Kw_2 p_1 \eq \Kw_1 p_3$, player $1$ has to reveal $p_1$ to
player $2$, it does not matter whether player $2$ reveals $p_2$ to
player $3$, and player $3$ has to reveal $p_3$ to player $1$, and all
three do this independently from the valuation. Because in fact, for
example, player $1$ reveals the \emph{value} of $p_1$ to player $2$,
but what the value is does not matter as the outcome of a $\Kw$ game
is independent from the valuation. So they players execute globally
uniform strategies. More precisely, in order to ensure $\beta_1$
globally uniform strategy $\dot{s}$ is required such that
$s_1(2)=\{p_1\}$, $s_3(2)$ does not matter, and $s_3(1)=\{p_3\}$.
\end{example}

\begin{restatable}{lemma}{lmboolobsoutcome}
    \label{lm:BoolObsOutcome}
    Let $\obsgame_\boolgame$ be the $\Kw$ game associated with
    the Boolean game $\boolgame$.  For all $i \in N$, for all $w \in
    V$, $s^w \models \lambda(\gamma_i) \text{ iff } w \models \gamma_i$.
\end{restatable}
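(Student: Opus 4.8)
The plan is to prove the biconditional by induction on the structure of the Boolean goal $\gamma_i \in L^B$. Since $L^B$ is generated from variables by negation and disjunction, there are exactly three cases, and since $\lambda$ maps $L^B$ into $L^\Kw$, every formula $\lambda(\gamma_i)$ arising in the argument is a knowing-whether formula, so Proposition~\ref{lemma.ii} is available at each step. This is what will make the inductive cases routine: because $s^w \models \phi$ abbreviates ``$v,s^w \models \phi$ for all $v \in V$,'' one would normally worry that this universal quantifier does not commute with negation, but Proposition~\ref{lemma.ii} tells us that for $\phi \in L^\Kw$ the truth of $v,s^w\models\phi$ is the same for all $v$, so that $s^w \models \neg\phi$ iff $s^w \not\models \phi$, and $s^w \models \phi_1 \vel \phi_2$ iff $s^w \models \phi_1$ or $s^w \models \phi_2$.

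First I would handle the base case $\gamma_i = p_k$ with $p_k \in P_k$, where $\lambda(p_k) = \Kw_{k^+} p_k$; this is the heart of the argument. Unfolding the abbreviation, $s^w \models \Kw_{k^+}p_k$ holds iff player $k^+$ knows whether $p_k$, which by the semantics of $K_{k^+}$ happens exactly when the value of $p_k$ is constant across each $\sim^{s^w}_{k^+}$-class, that is, iff $p_k \in P_{k^+}(s^w)$. Here I would use that $V = \powerset(P)$ contains valuations differing only in $p_k$, so that if $p_k \notin P_{k^+}(s^w)$ then some $\sim^{s^w}_{k^+}$-class genuinely contains both a valuation with $p_k$ and one without, making knowing-whether fail. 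Then, by the definition of the profile $s^w$, namely $p_k \in s^w_k(k^+)$ iff $p_k \in w$, I obtain $p_k \in P_{k^+}(s^w)$ iff $p_k \in w$, which is precisely $w \models p_k$.

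For the inductive cases $\gamma_i = \neg\alpha$ and $\gamma_i = \alpha_1 \vel \alpha_2$, I would combine the commutation facts recorded above with the defining clauses $\lambda(\neg\alpha) = \neg\lambda(\alpha)$ and $\lambda(\alpha_1\vel\alpha_2)=\lambda(\alpha_1)\vel\lambda(\alpha_2)$ and the induction hypothesis: $s^w \models \neg\lambda(\alpha)$ iff $s^w \not\models \lambda(\alpha)$ iff $w \not\models \alpha$ iff $w \models \neg\alpha$, and analogously for disjunction.

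The main obstacle is the base case, specifically the direction $s^w \models \Kw_{k^+}p_k \Rightarrow p_k \in w$: one must argue that non-revelation of $p_k$ actually produces ignorance rather than merely permitting it, which is exactly where the richness of $V = \powerset(P)$ is used. The inductive steps are genuinely routine once Proposition~\ref{lemma.ii} has been invoked to make the connectives behave classically under the $s^w \models$ notation.
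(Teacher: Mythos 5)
Your proof is correct and takes essentially the same route as the paper's: induction on the structure of the Boolean goal, with all the real work in the base case, where $s^w \models \Kw_{k^+}p_k$ iff $p_k \in w$ by the definition of the embedding. You supply considerably more detail than the paper's one-line argument — in particular the explicit check that non-revelation of $p_k$ forces failure of $\Kw_{k^+}p_k$ because $V=\powerset(P)$ contains valuations differing only in $p_k$, and the appeal to Proposition~\ref{lemma.ii} to justify that negation and disjunction commute with the universally quantified notation $s^w\models$ — but the underlying argument is the same.
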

\begin{proof} 
This is shown by induction where only the base case is not-trivial. For that, we have that $s^w \models \Kw_{i^+}p_i$ iff $w \models p_i$ by definition of the embedding. \end{proof}
Therefore, for any $i \in N$ and $v \in V$: $\uu_i(v,\dot{s^w}) = u_i(v,s^w) = u_i(s^w) = u^B_i(w)$. This correspondence allows us to relate Nash equilibria in $B$ to Nash
equilibria in $G_B$. The result uses an interesting property of
$\Ne_{\mathsf{max}}$ in $\Kw$ games (this property does not hold for,
e.g., the pessimist outcome relation).

\begin{restatable}{lemma}{lmkwglobalne}
  \label{lm:kwglobalNE}
  Let $\Kw$ game $\obsgame$ be given. Let $\cs \in
  \Ne_{\mathsf{max}}(\obsgame)$ and $v \in V$. Let $s = \cs(v)$. Then
  $\dot{s}\in \Ne_{\mathsf{max}}(\obsgame)$.
\end{restatable}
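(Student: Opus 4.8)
The plan is to route everything through the correspondence of Proposition~\ref{prop:NEST}, which says that a uniform strategy profile is a maximal Nash equilibrium of $\obsgame$ exactly when it induces a Nash equilibrium of the pointed game $\obsgame(w)$ at every valuation $w \in V$. Applying this direction of the equivalence to the hypothesis $\cs \in \Ne_{\mathsf{max}}(\obsgame)$, I first read off that $s = \cs(v)$ is a Nash equilibrium of the pointed game $\obsgame(v)$.

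The heart of the argument is to upgrade this single-valuation fact to all valuations, and here I would use the valuation-independence of $\Kw$-goals from Proposition~\ref{lemma.ii}. Concretely, for any strategy profile $t$ and any player $i$, in a $\Kw$ game we have $u_i(w,t) = u_i(t)$ for every $w \in V$, since the truth of $\gamma_i \in L^\Kw$ does not depend on the point. Hence the defining inequalities of the Nash condition for $\obsgame(w)$, namely $u_i(w,s) \geq u_i(w,(s_i',s_{-i}))$ for all $i \in N$ and $s_i' \in S_i$, collapse to $u_i(s) \geq u_i(s_i',s_{-i})$, which no longer mentions $w$. Consequently $s$ is a Nash equilibrium of $\obsgame(v)$ if and only if it is a Nash equilibrium of $\obsgame(w)$ for every $w \in V$.

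Finally I would assemble these pieces. Since $s$ is a Nash equilibrium of $\obsgame(v)$, the previous paragraph yields that $s$ is a Nash equilibrium of $\obsgame(w)$ for every $w \in V$. By definition $\dot{s}(w) = s$ for all $w$, so $\dot{s}(w)$ is a Nash equilibrium of $\obsgame(w)$ at every valuation; applying Proposition~\ref{prop:NEST} in the converse direction gives $\dot{s} \in \Ne_{\mathsf{max}}(\obsgame)$, as required.

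I do not expect any serious obstacle: the only step with genuine content is the valuation-independence collapse, and it is handed to us by Proposition~\ref{lemma.ii}. The one point requiring care is that the deviations $s_i'$ quantified over in the Nash condition for the pointed game range over all of $S_i$ (arbitrary strategies, not merely those arising from a uniform profile), so I must make sure Proposition~\ref{lemma.ii} is applied to each profile $(s_i',s_{-i})$ for every such deviation --- which is immediate, since it holds for all strategy profiles. It is also worth noting, as the surrounding discussion flags, that this argument is specific to the maximal outcome relation: it relies on Proposition~\ref{prop:NEST}, the pointwise characterisation that only the maximal relation enjoys, and would fail for, e.g., the pessimist relation.
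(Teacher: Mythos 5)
Your proposal is correct and rests on exactly the same two ingredients as the paper's proof: the pointwise characterisation of maximal Nash equilibria (Proposition~\ref{prop:NEST}) and the valuation-independence of $\Kw$-goals (Proposition~\ref{lemma.ii}); the paper merely phrases the argument contrapositively, transporting a hypothetical profitable deviation at some $w$ back to the valuation $v$, whereas you argue directly that the Nash condition for $G(w)$ collapses to one independent of $w$. This is essentially the same proof.
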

\begin{proof}
  Consider an arbitrary $v \in V$ and let $s=\cs(v)$. Suppose that
  $\cs \in \Ne_{\mathsf{max}}(\obsgame)$, we claim that the globally
  uniform strategy profile $\dot{s} \in \Ne_{\mathsf{max}}(\obsgame)$.
  Suppose not. Then there exists $i \in N$, $w \in V$ and $s_i' \in
  S_i$ such that $\util_i(w, (s_i',\dot{s}_{-i}(w))) >
  \util_i(w,\dot{s}(w))$. This implies that $w,
  (s_i',\dot{s}_{-i}(w))\models \gamma_i$ and $w, (\dot{s}(w))
  \not\models \gamma_i$. Since for all $j \in N$,
  $\dot{s}_j(w)=\cs_j(v)$, we have $w, (s_i',\cs_{-i}(v)) \models
  \gamma_i$. Since $\gamma_i \in L^\Kw$,
  we have $v, (s_i',\cs_{-i}(v))\models \gamma_i$. Also, since
  $w,\dot{s}(w)\not\models \gamma_i$,
  we have $v, \dot{s}(w) \not\models \gamma_i$ and by definition of
  $\dot{s}$, we have $v, \cs(v) \not\models \gamma_i$.  Therefore,
  $\cs \not\in \Ne_{\mathsf{max}}(\obsgame)$ which gives the required
  contradiction.
\end{proof}

\begin{corollary}
\label{cor:kwglobalNE}
  Let $\Kw$ game $\obsgame$ be given. If $\Ne_{\mathsf{max}}(\obsgame) \neq \emptyset$ then $\Ne^g_{\mathsf{max}}(\obsgame) \neq \emptyset$.
\end{corollary}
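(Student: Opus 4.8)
The plan is to read this off directly from Lemma~\ref{lm:kwglobalNE}, which has already done the substantive work. The hypothesis gives $\Ne_{\mathsf{max}}(\obsgame) \neq \emptyset$, so first I would fix some witness $\cs \in \Ne_{\mathsf{max}}(\obsgame)$. Since $V = \powerset(P)$ is always nonempty (it contains at least the empty valuation $\emptyset$), I can also fix an arbitrary $v \in V$ and set $s := \cs(v) \in S$.

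Next I would apply Lemma~\ref{lm:kwglobalNE} to this choice of $\cs$ and $v$, which directly yields $\dot{s} \in \Ne_{\mathsf{max}}(\obsgame)$. Finally, I would observe that by construction $\dot{s}$ is globally uniform, i.e.\ $\dot{s} \in \CS^g$, since by definition $\dot{s}(w) = s$ for every $w \in V$. Hence $\dot{s}$ lies in $\Ne_{\mathsf{max}}(\obsgame) \cap \CS^g = \Ne^g_{\mathsf{max}}(\obsgame)$, which witnesses $\Ne^g_{\mathsf{max}}(\obsgame) \neq \emptyset$ and completes the argument.

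As for the main obstacle: at the level of the corollary itself there is essentially none, since all the content is packaged into Lemma~\ref{lm:kwglobalNE}. The only points worth double-checking are bookkeeping: that $V$ is nonempty so that a valuation $v$ at which to evaluate $\cs$ exists, and the definitional link that $\Ne^g_{\mathsf{max}}(\obsgame)$ is exactly the set of maximal Nash equilibria lying in $\CS^g$, so that membership of $\dot{s}$ in both $\Ne_{\mathsf{max}}(\obsgame)$ and $\CS^g$ suffices. The genuine difficulty --- showing that flattening a uniform maximal equilibrium at a single valuation into a globally uniform profile preserves the equilibrium property --- is precisely what Lemma~\ref{lm:kwglobalNE} establishes, crucially exploiting that $\Kw$ goals are valuation-independent (Prop.~\ref{lemma.ii}); this is the step that fails for the pessimist and optimist outcome relations, as Example~\ref{ex:noNE} shows.
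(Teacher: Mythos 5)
Your proof is correct and is exactly the argument the paper intends: the corollary is stated immediately after Lemma~\ref{lm:kwglobalNE} with no separate proof precisely because it follows by instantiating the lemma at any witness $\cs$ and any $v \in V$ and noting $\dot{s} \in \CS^g$. Your bookkeeping remarks (nonemptiness of $V$, the definition of $\Ne^g_{\mathsf{max}}$) are accurate and the attribution of the real work to the lemma is right.
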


\noindent {\bf An Equivalence Relation over Global 
Strategy Profiles.} Recall that every $\cs \in \CS^g$ is of the
form $\dot{s}$ where $s \in S(\obsgame_\boolgame)$. We define an
equivalence relation over $\CS^g$ in
  $\obsgame_\boolgame$ as follows. For
$i \in N$, $\dot{s}_i \equiv_i \dot{t}_i$ iff
$s_i(i^+)=t_i(i^+)$. For $\dot{s}, \dot{t} \in
\CS^g$, we define $\dot{s} \equiv \dot{t}$ iff for
all $i \in N$, $\dot{s}_i \equiv_i \dot{t}_i$. Let
$\CS^g\qslash \equiv$ denote the set of
equivalence classes and $[\cs]$ denote the equivalence class
containing $\cs \in \CS^g$.

\begin{lemma}
  \label{lm:kwglobalpayoff}
Given $\cs \in \CS^g$, for all $\cstwo \in [\cs]$,
for all $i \in N$, for all $v \in V$,
$\util_i(v,\cs(v))=u_i(v,\cstwo(v))$.
\end{lemma}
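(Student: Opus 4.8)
The plan is to reduce the claim to a single observation about the shape of the goals in the $\Kw$ game $\obsgame_\boolgame$: the truth of every goal depends only on the ``forward'' reveal sets $s_j(j^+)$, which are exactly the data held fixed by the equivalence relation $\equiv$. Since every $\cs \in \CS^g$ has the form $\dot{s}$ and every $\cstwo \in \CS^g$ the form $\dot{t}$, with $\cs(v)=s$ and $\cstwo(v)=t$ for all $v$, and since $\cstwo \in [\cs]$ unfolds to $s_j(j^+)=t_j(j^+)$ for every $j \in N$, it suffices to prove that $u_i(v,s)=u_i(v,t)$ for all $i \in N$ and $v \in V$.

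First I would record the semantics of the relevant atoms. By construction every goal $\gamma_i=\beta_i=\lambda(\gamma_i^\boolgame)$ of $\obsgame_\boolgame$ is a Boolean combination of atoms of the form $\Kw_{j^+}p_j$, since $\lambda(p_j):=\Kw_{j^+}p_j$ and $\lambda$ commutes with negation and disjunction. I would then verify that $v,s \models \Kw_{j^+}p_j$ iff $p_j \in s_j(j^+)$. Unfolding $\Kw_{j^+}p_j = K_{j^+}p_j \vel K_{j^+}\neg p_j$, this asserts that $p_j$ has a constant value across the information set $[v]^s_{j^+}$. When $p_j \in P_{j^+}(s)$ the constraint $w \cap P_{j^+}(s)= v \cap P_{j^+}(s)$ fixes the value of $p_j$ throughout $[v]^s_{j^+}$, so the atom holds; when $p_j \notin P_{j^+}(s)$ the full powerset $V$ contains witnesses for both values of $p_j$ in the information set, so the atom fails. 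Hence $v,s\models \Kw_{j^+}p_j$ iff $p_j \in P_{j^+}(s)$, and $p_j \in P_{j^+}(s)$ iff $p_j \in s_j(j^+)$ because $p_j \in P_j$.

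Next I would combine this with the hypothesis. As $s_j(j^+)=t_j(j^+)$ for all $j$, the profiles $s$ and $t$ satisfy exactly the same atoms $\Kw_{j^+}p_j$. A routine induction on the structure of $\beta_i$ — the base case being the atomic equivalence just established, and the negation and disjunction cases immediate — then yields $s\models\beta_i$ iff $t\models\beta_i$. By Prop.~\ref{lemma.ii}, since $\beta_i \in L^\Kw$, this is equivalent to $v,s\models\beta_i$ iff $v,t\models\beta_i$ for every $v$, whence $u_i(v,s)=u_i(v,t)$, i.e. $\util_i(v,\cs(v))=u_i(v,\cstwo(v))$, for all $i$ and $v$. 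The argument is essentially bookkeeping; the only step requiring care is the atomic semantic equivalence, where one must read $\Kw$-truth off the information set $[v]^s_{j^+}$ determined by $P_{j^+}(s)$ rather than off the raw valuation, and must identify ``$p_j$ is revealed to $j^+$'' with membership of $p_j$ in $s_j(j^+)$. Everything else follows from the shape of the image of $\lambda$ and the valuation-independence guaranteed by Prop.~\ref{lemma.ii}.
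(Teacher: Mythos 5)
Your proof is correct and follows essentially the same route as the paper's: an induction on the structure of the goal formula, resting on the fact that the goals of $\obsgame_\boolgame$ are Boolean combinations of atoms $\Kw_{j^+}p_j$ whose truth is determined exactly by $s_j(j^+)$, the data preserved by $\equiv$. You additionally spell out the atomic base case ($v,s\models\Kw_{j^+}p_j$ iff $p_j\in s_j(j^+)$) that the paper leaves implicit; this is the only step requiring care and you handle it correctly.
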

\begin{proof}
  Let $\cs = \dot{s}$ and $\cstwo=\dot{t}$. For all $i \in N$, since
  $\cstwo \in [\cs]$, we have $s_i(i^+)=t_i(i^+)$. By
  induction of the structure of $\gamma_i$, we can prove the
  following: for all $v \in V$, for all $i \in N$ and for all
  $\gamma_i \in L^\Kw$, we have $v,(\dot{s}(v)) \models \gamma_i$ iff
  $v, (\dot{t}(v)) \models \gamma_i$. This implies that
  for all $i \in N$, for all $v \in V$,
  $\util_i(v,\dot{s}(v))=u_i(v,\dot{t}(v))$.
\end{proof}

\noindent {\bf An Outcome Preserving Bijection.} We now show that
there is an outcome preserving bijection $\chi$ between strategy
profiles in $\boolgame$ and equivalence classes in
$\CS^g\qslash \equiv$. For a Boolean game
$\boolgame$, and $v \in V$, $\chi(v) = [\dot{s^v}]$.

\begin{lemma}
  \label{lm:BoolObsBiject}
  Given a Boolean game $\boolgame$, the function $\chi: V \to
  \CS^g\qslash \equiv$ is a bijection.
\end{lemma}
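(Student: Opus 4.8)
The plan is to exploit the observation that the relation $\equiv$ remembers of a globally uniform profile $\dot{s}$ only the tuple $(s_i(i^+))_{i \in N}$ of sets that each player reveals to her successor, and to match this tuple against the valuation $v$ coordinatewise along the partition $(P_i)_{i \in N}$. The single fact doing all the work is the identity $s^v_i(i^+) = v \inter P_i$, valid for every $v \in V$ and every $i \in N$: it is immediate from the definition of $s^v$, since player $i$ places $p_i$ into $s^v_i(i^+)$ exactly when $p_i \in v$, and $s^v_i(i^+) \subseteq P_i$. I would record this identity first, as both directions of the bijection reduce to it.

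For injectivity I would suppose $\chi(v) = \chi(w)$, that is $\dot{s^v} \equiv \dot{s^w}$. Unfolding the definition of $\equiv$ gives $s^v_i(i^+) = s^w_i(i^+)$ for all $i \in N$, hence $v \inter P_i = w \inter P_i$ for all $i$ by the identity above. Since $(P_i)_{i \in N}$ is a partition of $P$, taking the union over $i$ yields $v = w$.

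For surjectivity, given a class in $\CS^g\qslash\equiv$, I would choose a representative $\cs = \dot{s}$ (legitimate, since every element of $\CS^g$ has the form $\dot{s}$ for some $s \in S$) and define the candidate valuation $v := \Union_{i \in N} s_i(i^+)$. Because $s_i(i^+) \subseteq P_i$ and the $P_i$ are pairwise disjoint, intersecting with $P_j$ annihilates every term except $j$, so $v \inter P_j = s_j(j^+)$ for each $j$. Combined with the identity this gives $s^v_j(j^+) = s_j(j^+)$ for all $j$, i.e.\ $\dot{s^v} \equiv \dot{s} = \cs$, whence $\chi(v) = [\dot{s^v}] = [\cs]$.

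I do not expect a genuine obstacle here: the content is bookkeeping, and $\chi$ is well defined as a map into classes without any representative choice (the only place representatives enter is the surjectivity argument, and there a different representative $\dot{t} \in [\cs]$ has $t_i(i^+) = s_i(i^+)$ by definition of $\equiv_i$, so it produces the same $v$). The one point requiring care is to verify that $\equiv$ is genuinely insensitive to the components $s_i(j)$ for $j \neq i, i^+$ and to the forced value $s_i(i) = P_i$, which is precisely what the definition of $\equiv_i$ guarantees. In passing I would note that the argument really identifies $\CS^g\qslash\equiv$ with the product $\prod_{i \in N} \powerset(P_i) \cong \powerset(P) = V$, with $\chi$ realizing this canonical bijection.
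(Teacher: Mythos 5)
Your proof is correct and follows essentially the same route as the paper's: surjectivity by reading off the valuation $v$ from the sets $s_i(i^+)$, and injectivity from the fact that $v \inter P_i = s^v_i(i^+)$ together with $(P_i)_{i\in N}$ being a partition (the paper phrases this contrapositively, from $v \neq w$ to $\dot{s^v}\not\equiv\dot{s^w}$, but the content is identical). The explicit identity $s^v_i(i^+) = v \inter P_i$ and the closing remark identifying $\CS^g\qslash\equiv$ with $\prod_i \powerset(P_i)$ are nice clarifications but not a different argument.
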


\begin{proof}
  Given $\dot{s} \in \CS^g$, consider $v \in V$ defined as follows:
  for all $i \in N$ and $p_i \in P_i$, $p_i \in v$ iff $p_i \in
  s_i(i^+)$. We then have $\chi(v)=[\dot{s}]$ and therefore
  $\chi$ is onto. For $v, w \in V$ such that $v \neq w$, there
  exists $i \in N$ , there exists $p_i \in P_i$ such that $p_i \in v$ and
  $p_i \not\in w$. Thus, for $\chi(v)=[\dot{s}]$ and
  $\chi(w)=[\dot{t}]$, we have $\dot{s} \not \equiv_i \dot{t}$,
  which implies that $\dot{s} \not\equiv \dot{t}$. Therefore, $\chi$
  is a bijection.
\end{proof}

Consequently, we can prove a correspondence between Nash equilibria existence.

\begin{restatable}{theorem}{thmbooltoobs}
  \label{thm:BoolToKwObs}
Let $B$ be a Boolean game. Then $\Ne_{\mathsf{max}}(\obsgame_\boolgame) \neq
  \emptyset$ iff $\Ne(B) \neq \emptyset$.
\end{restatable}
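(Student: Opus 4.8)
The plan is to prove both directions of the equivalence by exploiting the machinery already assembled, chiefly the outcome-preserving bijection $\chi$ from Lemma~\ref{lm:BoolObsBiject} together with Corollary~\ref{cor:kwglobalNE}. The key reduction is from the existence of \emph{arbitrary} maximal Nash equilibria in $\obsgame_\boolgame$ to the existence of \emph{globally uniform} maximal Nash equilibria, which is exactly what Corollary~\ref{cor:kwglobalNE} provides: since $\Kw$ goals do not depend on the valuation, any maximal Nash equilibrium $\cs$ can be collapsed to a globally uniform one $\dot{s}$ with $s = \cs(v)$ for some $v$. This means it suffices to match up globally uniform maximal Nash equilibria of $\obsgame_\boolgame$ with Nash equilibria of $\boolgame$.

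First I would handle the direction $\Ne(B) \neq \emptyset \Imp \Ne_{\mathsf{max}}(\obsgame_\boolgame) \neq \emptyset$. Take a Nash equilibrium $v \in \Ne(B)$ and consider the globally uniform profile $\dot{s^v}$, using the construction of $s^v$ preceding Lemma~\ref{lm:BoolObsOutcome}. By Lemma~\ref{lm:BoolObsOutcome} and the remark following it, the expected outcomes satisfy $\uu_i(w,\dot{s^v}) = u^B_i(v)$ for every $i$ and every $w$, and more generally the outcome of any deviation in $\obsgame_\boolgame$ corresponds via $\lambda$ and $s^{(\cdot)}$ to the outcome of the matching Boolean-game deviation. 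I would argue that a profitable maximal deviation for player $i$ against $\dot{s^v}$ would, by Proposition~\ref{prop:NEST}, yield a valuation $w$ where $\cs(w)$ is not a pointed equilibrium, which (since outcomes are valuation-independent) transfers back through $\chi^{-1}$ to a profitable deviation against $v$ in $\boolgame$, contradicting $v \in \Ne(B)$. The cleanest route is to verify that $\dot{s^v}(w) = s^v$ is a Nash equilibrium of every pointed game $G_B(w)$, and then invoke Proposition~\ref{prop:NEST}.

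For the converse $\Ne_{\mathsf{max}}(\obsgame_\boolgame) \neq \emptyset \Imp \Ne(B) \neq \emptyset$, I would first apply Corollary~\ref{cor:kwglobalNE} to obtain a globally uniform maximal equilibrium $\dot{s} \in \Ne^g_{\mathsf{max}}(\obsgame_\boolgame)$. Via the bijection $\chi$, let $v = \chi^{-1}([\dot{s}])$, i.e. the valuation determined by $p_i \in v$ iff $p_i \in s_i(i^+)$. Lemma~\ref{lm:kwglobalpayoff} guarantees that the outcome depends only on the equivalence class $[\dot{s}]$, so the maximal-equilibrium property of $\dot{s}$ is a property of the class, hence attaches to $v$ through $\chi$. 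Using Lemma~\ref{lm:BoolObsOutcome} to translate deviations, a profitable deviation $v_i'$ against $v$ in $\boolgame$ would lift to a globally uniform deviation $\dot{s^{(v_i',v_{-i})}}_i$ that is profitable against $\dot{s}$ under $>^\mmax$ (a single-valuation improvement suffices since outcomes are valuation-independent), contradicting $\dot{s} \in \Ne_{\mathsf{max}}(\obsgame_\boolgame)$.

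The main obstacle I anticipate is bookkeeping around the maximal outcome relation rather than any deep difficulty. Because $>^\mmax$ requires only \emph{some} $w \sim_i v$ with a strict pointwise improvement, and because in a $\Kw$ game all entries of the expected-outcome bitstring are equal (by Proposition~\ref{lemma.ii}), I must be careful to confirm that ``profitable maximal deviation'' collapses exactly to ``the single Boolean-game outcome improves.'' The slight subtlety is that a maximal deviation $\cs_i'$ need not be globally uniform, so when going from $\obsgame_\boolgame$ to $\boolgame$ I should either restrict attention to the globally uniform equilibrium furnished by Corollary~\ref{cor:kwglobalNE} (for existence of some equilibrium in $B$, only the outcome values matter, and these are fixed by $s_i'(i^+)$ via $\lambda$) or invoke Proposition~\ref{prop:NEST} to reduce to pointed games where the correspondence with $\boolutil$ is immediate. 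Everything else is the routine induction already packaged in Lemmas~\ref{lm:BoolObsOutcome}--\ref{lm:BoolObsBiject}.
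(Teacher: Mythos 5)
Your proposal is correct and follows essentially the same route as the paper's proof: both directions rest on Lemma~\ref{lm:kwglobalNE}/Corollary~\ref{cor:kwglobalNE} to reduce to globally uniform profiles, the bijection $\chi$ of Lemma~\ref{lm:BoolObsBiject}, and the outcome-preservation facts of Lemmas~\ref{lm:BoolObsOutcome} and~\ref{lm:kwglobalpayoff} to transfer profitable deviations between $\boolgame$ and $\obsgame_\boolgame$. Your observation that a maximal deviation need not be globally uniform, but collapses to a single pointed deviation $t_i = \cs_i'(w)$ whose outcome is valuation-independent by Proposition~\ref{lm:KwAllVal}, is exactly the step the paper handles implicitly by unfolding the definition of $>^\mmax$.
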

\begin{proof}
  ($\Leftarrow$) We argue that if $w \in \Ne(\boolgame)$ then
  $\dot{s^w} \in \Ne_{\mathsf{max}}(\obsgame_\boolgame)$. Suppose not,
  then there exists $i \in N$, $v \in V$ and $t_i \in S_i$ such that
  $\util_i(v,(t_i,\dot{s^w}_{-i}(v))) > \util_i(v,\dot{s^w}(v))$. Let
  $w'=\chi^{-1}([\dot{t}_i,\cs_{-i}])$. 
  From Lemmas \ref{lm:BoolObsOutcome}, \ref{lm:kwglobalpayoff} and
  \ref{lm:BoolObsBiject} it follows that
  $\boolutil_i(w')=\util_i(v,(\dot{t}_i,\cs_{-i})(v)) >
  \util_i(v,\dot{s^w}(v))=\boolutil_i(w)$ for all $v \in V$. Therefore
  $w \not\in \Ne(\boolgame)$ which is a contradiction.
  
  ($\Rightarrow$) Suppose $\Ne_{\mathsf{max}}(\obsgame_\boolgame) \neq
  \emptyset$. By Lemma~\ref{lm:kwglobalNE} there exists $\cs \in
  \CS^g$ such that $\cs \in \Ne_{\mathsf{max}}(\obsgame_\boolgame)$.
  Let $w=\chi^{-1}([\cs])$. We claim that $w \in
  \Ne(\boolgame)$. Suppose not, then there exists $i \in N$ and $w_i'$
  such that $\boolutil_i(w_i',w_{-i})>\boolutil_i(w)$. Let $w' =
  (w_i',w_{-i})$. Note that by definition $w \neq w'$. From Lemma
  \ref{lm:BoolObsOutcome} it follows that for all $v$ we have
  $\boolutil_i(w') = \util_i(v,\dot{s^{w'}}(v))$. From Lemmas
  \ref{lm:BoolObsOutcome}, \ref{lm:kwglobalpayoff} and
  \ref{lm:BoolObsBiject} it follows that $\boolutil_i(w) =
  \util_i(v,\cs(v))$ for all $v \in V$. Therefore for all $v \in V$, $
  \util_i(v,\dot{s^{w'}}(v)) = \boolutil_i(w') > \boolutil_i(w) =
  \util_i(v,\cs(v))$ which contradicts the fact that $\cs \in
  \Ne_{\mathsf{max}}(\obsgame_\boolgame)$.
\end{proof}

\medskip
\noindent {\bf Other ways to get a Kw Game from a Boolean Game.}  
Let us imagine our $n$ players sitting round a table numbered in clockwise fashion.
In the embedding  $\lambda: L^B \imp L^\Kw$ with basic clause \[ \lambda(p_i) :=
\Kw_{i^+}p_i,\] every player $i$ reveals the value of her observed variable $p_i$ to her left neighbour (while other players observe her doing that). There are many other embeddings that would serve equally well to obtain our results. For example, every player $i$ could reveal her variable to her {\bf right} neighbour. This would be a $\lambda'$ with basic clause \[ \lambda'(p_i) :=
\Kw_{i^-}p_i \] where $i^-$ is $i-1$ except for $1^- := n$. A more interesting embedding would be every player publicly announcing $p_i$ to all other players. We then have a $\lambda''$ for which \[\lambda''(p_i) :=
\Et_{j\in N} \Kw_j p_i.\]

\subsection{Knowing-Whether Games to Boolean Games}  
\label{subsec:KwToBool}
\noindent {\bf A Kw Game to a Boolean Game.} We now construct a Boolean
game denoted $B_G$ from a knowing-whether Boolean observation game $G$.
Let $\obsgame=(N,(P_i)_{i\in N},(\gamma_i)_{i \in N})$.
Assume that the goals $\gamma_i$ do not contain trivial constituents
$\Kw_i p_i$.\footnote{As such $\Kw_i p_i$ are always true, this would
otherwise cause a problem in the translation, because the players in
the constructed Boolean game would then be able to control the value
of propositional variables $\Kw_i p_i$ (which is undesirable), unlike
the players in the given $\Kw$ game. One can also address this
formally, without assumptions, with an inductively defined translation
mapping $\Kw_i p_i$ to $\top$.}
Then $B_G :=
(N,(Q_i)_{i\in N},(\gamma_i)_{i \in N})$ where for all $i \in N$,
$Q_i = \{ \Kw_j p_i \mid p_i \in P_i, i \neq j\}$.
We view $\Kw_j p_i$, for each $i$ and $j$ with $i \neq j$, as atomic
propositions in $B_G$.  Let $Q = \Union_{i \in N} Q_i$.

\medskip

\noindent{\bf Observation.} Both $G$ and $B_G$ are defined over
  the same set of players and goal formulas. The number of
  variables in $B_G$ for each $i \in N$ is $|Q_i| = (n-1)|P_i|$.
  Thus given $G$, the associated Boolean game $B_G$ can be constructed
  in polynomial time. Also, note that $B_{G_B} \neq B$ and $G_{B_G}
  \neq G$, the constructions are unrelated. Let us give an example of
  that.

\begin{example}
As an illustration to construct a Boolean game from a $\Kw$ game, let
us take the $\Kw$ game just constructed in Example~\ref{ex.btogb}. We
recall that  \[G_B = (\{1,2,3\}, (\{p_1,q_1\}, \{p_2\}, \{p_3\}),  (\gamma_1, \gamma_2, \gamma_3))\] where 
\begin{itemize}
\item $\gamma_1=\Kw_2
p_1 \eq \Kw_1 p_3$,
\item $\gamma_2=\Kw_1 p_3 \imp \Kw_2 p_1$,
\item $\gamma_3=\neg \Kw_2 p_1 \imp \Kw_3
p_2$.
\end{itemize}

The Boolean game $B_{G_B}$ constructed from that has the same
goals but has more variables, namely $\Kw_i p_j$ for all $i,j\in N$
with $i \neq j$ and for all $p_j \in P_i$, that is: $\Kw_1 p_2$,
$\Kw_1 p_3$, $\Kw_2 p_1$, $\Kw_2 q_1$, $\Kw_2 p_3$, $\Kw_3 p_1$,
$\Kw_3 q_1$, $\Kw_3 p_2$. Therefore $B_{G_B}$ has more variables than
$B$. The constructions are not each other's converse. However, in
order to realize the goals of $B_{G_B}$ the players only need to
assign a value to variables $\Kw_i p_j$ occurring in the goal
formulas, so with respect to playing {\bf this game} the extra
variables do not play a role. After replacing $\Kw_2 p_1$ by $p_1$,
etcetera for other variables ocurring in goal formulas, we recover the
original Boolean game for, however, far more variables that are not
used in goals.
\end{example}

Let $W = \mathcal{P}(Q)$ be the set of valuations over $Q$. We define
a function $\eta: \CS^g \to W$ and argue that it is a bijection which
is outcome equivalent.
Given $\dot{s} \in \CS^g$, define $w=\eta(\dot{s})$ as
follows: for $i \in N$, $\Kw_j p_i \in \eta(\dot{s})_i$ iff $p_i \in
s_i(j)$.

\begin{restatable}{lemma}{lmobsbooloutcome}
  \label{lm:ObsBoolOutcome}
Let $\boolgame_\obsgame$ be the Boolean game associated with the
$\Kw$ game $\obsgame$. For all $i \in N$, for all $s \in S$ and
for all $\gamma_i$, $s \models \gamma_i \text{ iff } \eta(\dot{s})
\models \gamma_i$.
\end{restatable}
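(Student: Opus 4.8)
The plan is to prove the biconditional by induction on the structure of the goal formula $\gamma_i \in L^\Kw$, reducing everything to the base case where $\gamma_i$ is an atom $\Kw_j p_i$. Since every member of $L^\Kw$ is built from such atoms using only $\neg$ and $\vee$, and since in the Boolean game $\boolgame_\obsgame$ these $\Kw_j p_i$ are genuine propositional atoms evaluated against the valuation $\eta(\dot{s})$, the whole argument amounts to checking that the observation-game truth value of each atom agrees with the value $\eta(\dot{s})$ assigns to it, and that $\neg$ and $\vee$ are handled compatibly on both sides.

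First I would settle the base case, which I expect to be the heart of the proof. I need to show that $s \models \Kw_j p_i$ iff $p_i \in s_i(j)$, so that by the definition of $\eta$ we get $s \models \Kw_j p_i$ iff $\Kw_j p_i \in \eta(\dot{s})$, i.e.\ iff $\eta(\dot{s}) \models \Kw_j p_i$ in $\boolgame_\obsgame$. The key observation is that $p_i$ lies in the revealed set $P_j(s) = \{ p_k \in P \mid p_k \in s_k(j) \}$ precisely when $p_i \in s_i(j)$, because $p_i \in P_i$ can only be revealed to $j$ by player $i$. If $p_i \in P_j(s)$, then any $v \sim_j^s w$ agree on $P_j(s)$ and hence on $p_i$, so within each information set the value of $p_i$ is constant; thus at every $v$ either $K_j p_i$ or $K_j \neg p_i$ holds, giving $v,s \models \Kw_j p_i$ for all $v$, i.e.\ $s \models \Kw_j p_i$. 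Conversely, if $p_i \notin P_j(s)$, I would take $v = \{p_i\}$ and $w = \emptyset$: these agree on $P_j(s)$, hence $v \sim_j^s w$, and they witness $\neg K_j p_i \et \neg K_j \neg p_i$ at $v$, so $s \not\models \Kw_j p_i$.

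For the inductive cases ($\neg$ and $\vee$), the main tool is Proposition~\ref{lemma.ii}, which guarantees that for any $\phi \in L^\Kw$ the truth of $v,s \models \phi$ is independent of $v$, so that $s \models \phi$ is unambiguous and behaves like a classical (valuation-free) truth value. Using this, ``$s \models \neg\alpha$'' (i.e.\ $v,s \not\models \alpha$ for all $v$) collapses to the negation of ``$s \models \alpha$'', and ``$s \models \alpha_1 \vee \alpha_2$'' collapses to ``$s \models \alpha_1$ or $s \models \alpha_2$''; these match exactly the propositional clauses for $\neg$ and $\vee$ used to evaluate $\gamma_i$ at $\eta(\dot{s})$ in the Boolean game. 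Combining these with the induction hypothesis closes both cases immediately.

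The only genuinely delicate step is the base case, and within it the direction $p_i \notin P_j(s) \Rightarrow s \not\models \Kw_j p_i$, where I must produce an explicit pair of $\sim_j^s$-indistinguishable valuations witnessing player $j$'s ignorance of $p_i$. Everything else is bookkeeping that follows directly from the definitions of $\sim_j^s$ and $\eta$ together with Proposition~\ref{lemma.ii}.
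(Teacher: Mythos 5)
Your proof is correct and follows essentially the same route as the paper: induction on the structure of $\gamma_i$, with the base case $s \models \Kw_j p_i$ iff $p_i \in s_i(j)$ iff $\Kw_j p_i \in \eta(\dot{s})$, and the $\neg$/$\vee$ cases discharged via Proposition~\ref{lemma.ii}. The paper states this only as a one-line sketch, so your version simply supplies the details (including the explicit witnessing pair $v=\{p_i\}$, $w=\emptyset$, which is exactly the right construction given that only player $i$ can reveal $p_i$ and that trivial constituents $\Kw_i p_i$ are excluded from goals).
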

\begin{proof}
This is shown by induction using as the base case that $s \models
\Kw_j p_i$, iff $\eta(\dot{s}) \models \Kw_j p_i$
The other cases are trivial.
\end{proof}
It therefore also follows, similarly to the above, that $\uu_i(\dot{s}) = u_i(s) = u^B_i(w^s)$.
\begin{lemma}
  \label{lm:ObsBoolBiject}
Given a $\Kw$ game $\obsgame$, let $\boolgame_\obsgame$ be the
associated Boolean game. The function $\eta: \CS^g \to W$ is a
bijection.
\end{lemma}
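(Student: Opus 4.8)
The plan is to show that $\eta$ is both injective and surjective, mirroring the structure of the proof of Lemma~\ref{lm:BoolObsBiject}. The key observation is that a globally uniform strategy profile $\dot{s} \in \CS^g$ is completely determined by the underlying profile $s \in S$, and the only free data in $s$ are the sets $s_i(j)$ for $i \neq j$: the diagonal components are fixed by the requirement $s_i(i) = P_i$. On the other side, a valuation $w \in W = \powerset(Q)$ is determined by which atoms $\Kw_j p_i$ (necessarily with $i \neq j$, since $Q_i = \{\Kw_j p_i \mid p_i \in P_i, i \neq j\}$) it contains. Thus $\eta$ simply reads off, for each triple $(i,j,p_i)$ with $i\neq j$ and $p_i \in P_i$, the bit $p_i \in s_i(j)$ and records it as the bit $\Kw_j p_i \in w$; the fixed diagonal entries $s_i(i)$ have no counterpart in $Q$, which is precisely why the coordinates match up.

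For injectivity, I would suppose $\eta(\dot{s}) = \eta(\dot{t})$ for $\dot{s}, \dot{t} \in \CS^g$. Then for every $i,j$ with $i \neq j$ and every $p_i \in P_i$ we have $p_i \in s_i(j)$ iff $\Kw_j p_i \in \eta(\dot{s}) = \eta(\dot{t})$ iff $p_i \in t_i(j)$, so $s_i(j) = t_i(j)$. Together with $s_i(i) = P_i = t_i(i)$ this gives $s_i = t_i$ for all $i$, hence $s = t$ and therefore $\dot{s} = \dot{t}$.

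For surjectivity, given $w \in W$ I would define $s$ by setting $s_i(i) := P_i$ and, for $j \neq i$, $s_i(j) := \{p_i \in P_i \mid \Kw_j p_i \in w\}$. Each $s_i$ is then a well-defined strategy (since $s_i(i) = P_i$ as required), so $\dot{s} \in \CS^g$, and by construction $\Kw_j p_i \in \eta(\dot{s})$ iff $p_i \in s_i(j)$ iff $\Kw_j p_i \in w$ for all $i \neq j$, whence $\eta(\dot{s}) = w$.

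Since this is essentially a coordinatewise relabeling of two products of binary choices, there is no genuine obstacle; the only point requiring care is the bookkeeping around the partition $Q = \Union_{i \in N} Q_i$ and the constraint $s_i(i) = P_i$, ensuring that the diagonal components (which carry no information in $W$) do not interfere with either injectivity or surjectivity.
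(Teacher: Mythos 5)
Your proof is correct and follows essentially the same route as the paper's: the same explicit surjectivity construction (setting $s_i(j) := \{p_i \in P_i \mid \Kw_j p_i \in w\}$ for $j \neq i$ and $s_i(i) := P_i$), and the same coordinatewise injectivity argument, which the paper merely phrases contrapositively ($\dot{s} \neq \dot{t}$ implies $\eta(\dot{s}) \neq \eta(\dot{t})$). Your explicit remark that the diagonal components $s_i(i) = P_i$ carry no information and so do not interfere is a small clarity bonus but not a different argument.
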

\begin{proof}
For an arbitrary $w \in W$, consider $\dot{s} \in \CS^g$ defined as
follows. For all $i \in N$, and for all $p_i \in P_i$, $p_i \in
s_i(j)$ iff $i=j$ or $\Kw_j p_i \in w_i$. 
By definition, $\eta(\dot{s})=w$ and thus $\eta$ is onto.

Consider $\dot{s}, \dot{t} \in \CS^g$ where $\dot{s} \neq
\dot{t}$. Then there exists $i,j \in N$ with $i \neq j$ and there
exists $p_i \in P_i$ such that $p_i \in s_i(j)$ and $p_i \not\in
t_i(j)$. This implies that $\Kw_j p_i \in \eta(\dot{s})_i$ and $\Kw_j
p_i \not\in \eta(\dot{t})_i$. Therefore $\eta$ is a bijection.
\end{proof}

\noindent 
{\bf Non-global Uniform Strategies as Mixed Strategies for Boolean Games.} 
We allow ourselves a little detour. We can straightforwardly adjust the function $\eta$ mapping globally uniform strategy profiles of the $\Kw$ game to valuations that are strategy profiles of the Boolean game, to a function mapping arbitrary uniform strategy profiles of the $\Kw$ game $G$ to {\em mixed strategy profiles} of the Boolean game $B_G$. We simply define the `revised $\eta$' on the level of strategy profiles $s \in S$. Given a uniform strategy profile $\cs \in \CS$, for each $s \in S$ such that $\cs(v)=s$ for some $v \in V$, we let $\pi(s) := |\{v \in V \mid
\cs(v) = s\}| / 2^{|P|}$.  Note that $2^{|P|} = |V|$. So $\pi(s)$ is
the probability that a valuation is mapped in $s$, given $\cs$. We can now define a mixed
strategy profile $w^\cs$ of the Boolean game $B_G$ as
the one executing each $s \in S$ with probability $\pi(s)$. We defer the investigation of embeddings into a mixed equilibrium to future research and for now restrict ourselves to a relevant example, finally closing the loop with matching pennies.

\begin{example} \label{ex:noNElast}
Once more we recall Example~\ref{example.pennies} on page \pageref{example.pennies} about the pennies that do not match, already further developed in Example~\ref{ex:noNE}, wherein it was shown that this game does not have a Nash
equilibrium with globally uniform strategies for the pessimist and optimist outcome relation, but has a Nash equilibrium with uniform
strategies that are not globally uniform: the uniform strategy profile $\cs=(\cs_1,\cs_2)$
where in $\cs_1$, player $1$ reveals $p_1$ to $2$ when $p_1$ is true
and hides $p_1$ from $2$ when $p_1$ is false, and in $\cs_2$ player
$2$ reveals $p_2$ to $1$ when $p_2$ is true and hides $p_2$ from $1$
when $p_2$ is false.

When translating this game into a Boolean game, we can now observe
that this uniform strategy $\cs$ becomes a mixed strategy $\eta(\cs)$
where player $1$ randomly chooses between revealing or hiding her
propositional variable $\Kw_2 p_1$ and where player $2$ randomly
chooses between revealing or hiding his propositional variable $\Kw_1
p_2$. To realize that this is indeed random it is important to observe
that in Example~\ref{example.pennies} the probability of observing
$p_1$ or $\neg p_1$ was determined by Odd flipping its penny before
privately watching the outcome under the dice cup, and similarly for
$p_2$ and Even. So after all, for those who may have wondered, there
was a reason for setting up the experiment just like that.
\end{example}

We continue with a relevant result for the maximal outcome
relation.

\begin{restatable}{theorem}{thmobstobool}
\label{thm:KwObsToBool}
Let $\obsgame$ be a $\Kw$ game. Then $\Ne(\boolgame_{\obsgame}) \neq \emptyset$ iff $\Ne_{\mathsf{max}}(\obsgame) \neq \emptyset$.
\end{restatable}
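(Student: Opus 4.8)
The plan is to mirror the proof of Theorem~\ref{thm:BoolToKwObs}, but now transporting equilibria across the bijection $\eta\colon \CS^g \to W$ of Lemma~\ref{lm:ObsBoolBiject} rather than $\chi$. The two ingredients I would lean on are the outcome-preservation of $\eta$ (Lemma~\ref{lm:ObsBoolOutcome}), which gives $\util_i(s)=\boolutil_i(\eta(\dot s))$ for every $i$ and every $s\in S$ (using that in a $\Kw$ game the pointed outcome is valuation-independent, Prop.~\ref{lemma.ii}), and the collapse of maximal equilibria onto globally uniform ones (Corollary~\ref{cor:kwglobalNE}). The key intermediate claim I would isolate is: \emph{for every $\dot s\in\CS^g$, $\dot s\in\Ne_{\mathsf{max}}(\obsgame)$ iff $\eta(\dot s)\in\Ne(\boolgame_\obsgame)$.} Granting this claim the theorem is immediate: for the direction from $\Ne_{\mathsf{max}}(\obsgame)\neq\emptyset$ I would first apply Corollary~\ref{cor:kwglobalNE} to obtain a \emph{globally uniform} maximal equilibrium $\dot s$, whence $\eta(\dot s)\in\Ne(\boolgame_\obsgame)$; and for the converse, any $w\in\Ne(\boolgame_\obsgame)$ equals $\eta(\dot s)$ for a unique $\dot s\in\CS^g$ since $\eta$ is a bijection, and the claim yields $\dot s\in\Ne_{\mathsf{max}}(\obsgame)$.

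To prove the claim I would rewrite both equilibrium conditions so that they quantify over the \emph{same} deviations. On the Boolean side, $w=\eta(\dot s)\in\Ne(\boolgame_\obsgame)$ says exactly that for every $i$ and every $w_i'\subseteq Q_i$, $\boolutil_i(w)\ge\boolutil_i(w_i',w_{-i})$. On the observation side I would replace $\dot s\in\Ne_{\mathsf{max}}(\obsgame)$ by its pointwise reformulation via Proposition~\ref{prop:NEST}: $\dot s\in\Ne_{\mathsf{max}}(\obsgame)$ iff $\dot s(v)=s\in\Ne(\obsgame(v))$ for all $v$, and since $\Kw$-goals are valuation-independent this is just ``for every $i$ and every local deviation $s_i'\in S_i$, $\util_i(s)\ge\util_i(s_i',s_{-i})$''. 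The point is that $\eta$ acts coordinatewise, restricting to a bijection $\eta_i\colon S_i\to\powerset(Q_i)$ (since $\Kw_j p_i\in\eta(\dot s)_i$ iff $p_i\in s_i(j)$, and $Q_i$ enumerates exactly these $\Kw_j p_i$ with $j\neq i$). Thus as $s_i'$ ranges over $S_i$, its image $w_i'=\eta_i(s_i')$ ranges over all of $\powerset(Q_i)$ with $\eta(\dot{(s_i',s_{-i})})=(w_i',w_{-i})$, and by outcome-preservation $\util_i(s_i',s_{-i})=\boolutil_i(w_i',w_{-i})$ while $\util_i(s)=\boolutil_i(w)$. The two families of quantified inequalities therefore hold simultaneously or fail simultaneously, giving the claim.

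The step I expect to be the main obstacle is the reformulation of $\Ne_{\mathsf{max}}$ in terms of \emph{local} deviations. The maximal-equilibrium definition quantifies over arbitrary uniform deviations $\cs_i'\in\CS_i$, which are generally not globally uniform and hence have no direct image under $\eta$; moreover a profitable such deviation need only beat $\dot s$ at a single indistinguishable valuation under $>^\mmax$. Proposition~\ref{prop:NEST} is exactly what tames this: it converts the maximal condition into ``$\cs(v)$ is a Nash equilibrium of the pointed game $\obsgame(v)$ for every $v$'', whose deviations are local strategies $s_i'\in S_i$, and valuation-independence of $\Kw$-goals collapses the ``for every $v$'' to a single condition. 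I would make sure to invoke Proposition~\ref{prop:NEST} before comparing with the Boolean game, since attempting to match a non-globally-uniform $\cs_i'$ directly against $\eta$ would otherwise force an ad hoc ``globalize the deviation at its witnessing valuation'' argument. With that reduction in place, the remaining bookkeeping through $\eta_i$ and Lemma~\ref{lm:ObsBoolOutcome} is routine.
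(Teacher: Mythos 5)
Your proposal is correct and takes essentially the same route as the paper: the paper also reduces to globally uniform profiles via Lemma~\ref{lm:kwglobalNE} (equivalently Corollary~\ref{cor:kwglobalNE}), transports profitable deviations coordinatewise across the bijection $\eta$ using the outcome preservation of Lemma~\ref{lm:ObsBoolOutcome}, and implicitly invokes the pointwise reformulation of Proposition~\ref{prop:NEST} by phrasing a failure of $\Ne_{\mathsf{max}}$ as a local deviation $t_i\in S_i$ at some valuation $v$. Your packaging of this as a single ``iff'' claim for globally uniform profiles, with the local-deviation reformulation made explicit, is only a cosmetic restructuring of the paper's two contrapositive arguments.
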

\begin{proof}
 ($\Leftarrow$) Suppose $\Ne_{\mathsf{max}}(\obsgame_\boolgame) \neq
  \emptyset$. By Lemma~\ref{lm:kwglobalNE} there exists $\cs \in
  \CS^g$ such that $\cs \in
  \Ne_{\mathsf{max}}(\obsgame_\boolgame)$. Let $w =\eta(\cs)$, we
  argue that $w$ is a Nash equilibrium. Suppose not, there exists $i
  \in N$, there exists $w_i'$ such that $\boolutil_i(w_i',w_{-i}) >
  \boolutil_i(w_i,w_{-i})$. Consider the globally uniform strategy
  profile $\cstwo=\eta^{-1}(w_i',w_{-i})$ (this is well defined by
  Lemma \ref{lm:ObsBoolBiject}). From Lemmas \ref{lm:ObsBoolOutcome}
  and \ref{lm:ObsBoolBiject} it follows that $\util_i(v, \cstwo(v)) =
  \boolutil_i(w_i',w_{-i}) > \boolutil_i(w) = \util_i(v,\cs(v))$. This
  implies that $\cs \not\in \Ne_{\mathsf{max}}(\obsgame_\boolgame)$
  which is a contradiction.

  ($\Rightarrow$) Suppose $w \in \Ne(\boolgame_\obsgame)$. Let
  $\cs=\eta^{-1}(w)$, we claim that $\cs \in
  \Ne_{\mathsf{max}}(\obsgame)$. Suppose not, then there exists $i \in
  N$, $v \in V$ and $t_i \in S_i$ such that
  $\util_i(v,(t_i,\cs_{-i}(v))) > \util_i(v,\cs(v))$. Let
  $w'=\eta(\dot{t_i},\cs_{-i})$.
  From Lemmas \ref{lm:ObsBoolOutcome} and \ref{lm:ObsBoolBiject} it
  follows that $\boolutil_i(w') = \util_i(v,(\dot{t_i},\cs_{-i})(v)) >
  \util_i(v,\cs(v)) = \boolutil_i(w)$. This implies that $w \not\in
  \Ne(\boolgame_\obsgame)$ which is a contradiction.
\end{proof}

\section{Existence of Nash Equilibrium}\label{section.exist}
\label{sec:existsNE}
In this section we focus on the question of existence of Nash
equilibria for observation games and identify various subclasses in which 
a Nash equilibrium is guaranteed to exist.

\subsection{Existence of Pessimist Nash Equilibrium in
  Knowing-Whether Games}
  \label{subsec:existsNEKW}

Example \ref{ex:noNE} shows that in the
$\Kw$ fragment a maximal Nash equilibrium is not guaranteed to exist
even for two-player games. It is natural to ask if a similar
observation holds for pessimist Nash equilibrium. We first show that
for two-player $\Kw$ games, a pessimist Nash equilibrium always exists
(Proposition~\ref{prop:KWNEexists}). However, for general $\Kw$ games,
existence is not guaranteed.  Example~\ref{ex:KWnoNEpess} gives an
$8$-player $\Kw$ game without a Nash equilibrium.

\begin{restatable}{proposition}{propKWNEexists}
  \label{prop:KWNEexists}
All two-player $\Kw$ games have a pessimist Nash
equilibrium.
\end{restatable}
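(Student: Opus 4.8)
The plan is to reduce a two-player $\Kw$ game to an ordinary two-player Boolean-outcome form and then exhibit a pessimist equilibrium by a short case analysis on maximin values. First I would use Proposition~\ref{lemma.ii} to note that in a two-player $\Kw$ game the only non-trivial atoms occurring in the goals are $\Kw_2 p_1$ (for $p_1 \in P_1$) and $\Kw_1 p_2$ (for $p_2 \in P_2$), that $\Kw_2 p_1$ holds under $s$ exactly when $p_1 \in s_1(2)$, and that $\Kw_1 p_2$ holds exactly when $p_2 \in s_2(1)$. Since $s_i(i) = P_i$ is forced and $\sim_i$ identifies valuations agreeing on $P_i$, a uniform strategy $\cs_1$ is fully described by a function $X_1 : \powerset(P_1) \to \powerset(P_1)$ sending each observation $a = v \inter P_1$ to the set $X_1(a) = \cs_1(v)(2)$ that player $1$ reveals to player $2$; symmetrically $\cs_2$ is described by $X_2 : \powerset(P_2) \to \powerset(P_2)$. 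Writing $g_i(Y_1, Y_2) \in \{0,1\}$ for the truth value of $\gamma_i$ when the true non-trivial atoms are exactly $\Kw_2 p_1$ ($p_1 \in Y_1$) and $\Kw_1 p_2$ ($p_2 \in Y_2$), Proposition~\ref{lemma.ii} yields $u_i(w,\cs(w)) = g_i(X_1(w \inter P_1), X_2(w \inter P_2))$.

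The second step is to read off the pessimist payoff. For player $1$ at observation $a$ the information set consists of all $w$ with $w \inter P_1 = a$, and as $w$ ranges over it $w \inter P_2$ ranges over all of $\powerset(P_2)$; hence $\min \uu_1(v,\cs) = \min_{Y_2 \in R_2} g_1(X_1(a), Y_2)$ where $R_2 := \mathrm{im}(X_2)$, and symmetrically for player $2$ with $R_1 := \mathrm{im}(X_1)$. The crucial observation is that this value depends on player $1$'s strategy only through the single action $X_1(a)$ and not through the rest of $\cs_1$ (in particular not through $R_1$); a deviation by player $1$ cannot change $R_2$, so the no-profitable-deviation condition decouples across observations into the requirement that every played action be a maximin maximizer. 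Thus $\cs$ is a pessimist equilibrium iff $R_1 \subseteq \arg\max_{Y_1} \min_{Y_2 \in R_2} g_1(Y_1,Y_2)$ and $R_2 \subseteq \arg\max_{Y_2} \min_{Y_1 \in R_1} g_2(Y_1,Y_2)$. Conversely, any nonempty $R_1, R_2$ meeting these inclusions is realizable, since $|R_i| \le 2^{|P_i|}$ equals the number of observations of player $i$, so some $X_i$ has image exactly $R_i$.

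It then remains to produce suitable ranges. Set $\alpha_1 := \max_{Y_1} \min_{Y_2 \in \powerset(P_2)} g_1(Y_1,Y_2)$ and $\alpha_2$ symmetrically; both lie in $\{0,1\}$. If $\alpha_1 = 1$, I would pick a safe action $Y_1^\ast$ with $g_1(Y_1^\ast, Y_2) = 1$ for all $Y_2$, let $Y_2^\ast$ maximize $g_2(Y_1^\ast, \cdot)$, and take the globally uniform profile with $R_1 = \{Y_1^\ast\}$ and $R_2 = \{Y_2^\ast\}$: player $1$ scores the maximal value $1$ and player $2$ best-responds, so both inclusions hold (the symmetric construction handles $\alpha_2 = 1$). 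If instead $\alpha_1 = \alpha_2 = 0$, I would take $R_1 = \powerset(P_1)$ and $R_2 = \powerset(P_2)$, each player spreading bijectively over all of its observations; then $\min_{Y_2 \in \powerset(P_2)} g_1(Y_1,Y_2) = 0$ for every $Y_1$ (its max being $\alpha_1 = 0$) and likewise for player $2$, so both argmax sets are the whole of $\powerset(P_i)$ and the inclusions hold trivially. Since the four combinations of $\alpha_1, \alpha_2 \in \{0,1\}$ are all covered, a pessimist equilibrium always exists. This dichotomy recovers the non-global equilibrium of Example~\ref{ex:noNE}, where $\alpha_1 = \alpha_2 = 0$ and the full-spread profile yields payoff $0$ to both.

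I expect the main idea (rather than a genuine obstacle) to be the reduction of the first two paragraphs: recognizing that for the pessimist criterion a uniform strategy matters only through the action it plays at a given observation together with the image of the opponent's strategy, which collapses the equilibrium condition into a pair of decoupled maximin problems. This decoupling leans essentially on the two-player structure, since there the information set of a player sweeps out the opponent's entire valuation space $\powerset(P_{-i})$; with more players the relevant $\min$ would run over a product of several ranges and the clean value-based case analysis would break down, consistent with the $8$-player counterexample promised in Example~\ref{ex:KWnoNEpess}.
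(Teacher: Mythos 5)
Your proof is correct and follows essentially the same route as the paper's: the same dichotomy between one player having a guaranteed-win (dominant) action, handled by a globally uniform profile with a best response, and the case where every action of each player can be punished, handled by spreading each player's uniform strategy bijectively over its $2^{|P_i|}$ observations so that every pessimist value is $0$ and no deviation can improve it. Your $g_i$, $R_i$, $\alpha_i$ formalisation makes the decoupling of the equilibrium condition into two maximin problems explicit where the paper leaves it at ``it can be verified,'' but the underlying argument is the same.
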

\begin{proof}
We construct a uniform strategy profile $(\cs_1^*,\cs_2^*)$ as
follows. For $i \in \{1,2\}$, let $\obar{\imath}$ denote the player
such that $\obar{\imath} \neq i$. If $i \in \{1,2\}$ has a uniform
strategy $s_i$ that is dominant, then set $\cs_i^*(v)=s_i$ for all $v
\in V$ and let $\cs_{\obar{\imath}}^*$ be the best response to
$\cs_i^*$. It can be verified that $\cs^*$ as defined above is a Nash
equilibrium.

If neither player has a uniform strategy that is dominant, then we have
the following
\begin{itemize}
\item For all $s_1 \in S_1$, there exists $s_2 \in S_2$ such that $v,
  (s_1,s_2) \not\models \gamma_1$ for all $v \in V$.
  \item For all $t_2\in S_2$, there exists $t_1 \in S_1$ such that $v,
    (t_1,t_2) \not\models \gamma_2$ for all $v \in V$.
\end{itemize}
For two player games there is a bijection between the set of strategies
$S_i$ and the set of (local) valuations $V_i$ (one can think of a
strategy as deciding for each proposition whether to reveal to the
other player). Therefore for each $t_1$ and $s_2$ as described above,
we can set $\cs_1^*(v^1)=t_1$ and $\cs_2^*(v^2)=s_2$ appropriately for
some $v^1$ and $v^2$.

To see that $(\cs_1^*,\cs_2^*)$ is a Nash equilibrium, note that for
all $i \in \{1,2\}$ and for all $v \in V$, $\min\uu_i(v, \cs^*) = 0$. Also, for
all $v \in V$, $\min
\uu_i(v,(\cs_i',\cs^*_{-i})) =0$ due to the above condition.
\end{proof}

However, for more than two players a $\Kw$ game need not have a
pessimist Nash equilibrium. We present a counterexample for eight
players.

\begin{example}
 \label{ex:KWnoNEpess}
      Consider the observation game $\obsgame$ where
    $N=\{1,2,\ldots,8\}$ and $P_i=\{p_i\}$ for $i \in N$. Player 8
    acts as an ``observer'' whose goal $\gamma_8=\top$.  To specify
    the goals of the other players we use the following formulas. \vspace{-.1cm}
\[\begin{array}{ll}
A = \Kw_3 p_1 \wedge \Kw_4 p_1 \quad & B = \Kw_3 p_1 \wedge \neg
      \Kw_4 p_1\\
C=\neg \Kw_3 p_1 \wedge \Kw_4 p_1 &D= \neg \Kw_3 p_1
      \wedge \neg \Kw_4 p_1
\end{array}\]

    The main idea is to exploit that player 1 controls a
    single variable. Therefore in any uniform strategy 
player 1 can
    choose to satisfy at most two of $A,B,C,D$. E.g.,
 ``when $p_1$ is true reveal $p_1$ to 3 and
    4 ($A$), when $p_1$ is false reveal $p_1$ to 4 but not to 3 ($C$).

    \medskip
    
\paragraph*{Definition of the Goal Formulas.} For players
$\{2,\ldots,7\}$ the goal formulas are as follows.

      \[\begin{array}{lll}
        \gamma_2 = &((C \vee A) \imp \Kw_8 p_2)
        \wedge ( \neg (C \vee A) \imp \neg \Kw_8 p_2),\\
        \gamma_3=  &(((B \vee D) \imp \Kw_8 p_3) \wedge (\neg (B \vee D) \imp \neg\Kw_8 p_3)) \vee\\
        &((A \vee D) \wedge \Kw_8 p_2 \wedge \neg \Kw_8 p_3),\\
        \gamma_4= &((D \vee C) \imp \Kw_8 p_4) \wedge (\neg (D
        \vee C) \imp \neg\Kw_8 p_4) \wedge\\
        & (((B \vee C)\wedge \Kw_8 p_5) \imp \neg \Kw_8 p_4),\\
        \gamma_5= &(((A \vee B) \imp \Kw_8 p_5) \wedge
        (\neg (A \vee B) \imp \neg\Kw_8 p_5)) \vee\\
        & ((A \vee C) \wedge \Kw_8 p_7 \wedge \neg \Kw_8 p_5), \\
        \gamma_6= &(((A \vee D) \imp \Kw_8 p_6) \wedge (\neg (A \vee D) \imp \neg\Kw_8 p_6)) \vee\\
        &((A \vee B) \wedge (\Kw_8 p_3 \vee \Kw_8 p_7) \wedge \neg \Kw_8 p_6),\\
        \gamma_7= &((B \vee C) \imp \Kw_8 p_7) \wedge (\neg (B \vee C) \imp \neg\Kw_8 p_7)\\
    \end{array}
      \]

    The goal of player $1$ is defined as $\gamma_1 := \Vel_{j=1}^6
    \alpha_j$ where
    $\alpha_1 = \Kw_8 p_2 \wedge D$, $\alpha_2 = \Kw_8 p_3 \wedge A$,
    $\alpha_3 = \Kw_8 p_4 \wedge B$, $\alpha_4 = \Kw_8 p_5 \wedge
    C$, $\alpha_5 = \Kw_8 p_6 \wedge B$ and $\alpha_6 = \Kw_8 p_7
    \wedge A$.
    We will now verify that $\Ne_{\mathsf{pess}}(G) =
    \emptyset$. 

The goals of the players (except 8) involve assertions about whether
players $2,
\ldots, 7$ reveal the proposition that they control to player 8 along
with whether 1 reveals $p_1$ to players 3 and 4.
For the purpose of this example, note
that for all $j \in \{2, \ldots,7\}$, for all $s_j \in S_j$ and
$k \neq 8$, the value of $s_j(k)$ is irrelevant.

We now argue that $\Ne_{\mathsf{pess}}(\obsgame) = \emptyset$. To
simplify the presentation, we split the reasoning into two
parts. First we argue that no globally uniform strategy of player 1
can be part of a pessimist Nash equilibrium in $G$. In the second part
we extend this to cover all uniform strategy profiles.

\medskip


\paragraph*{Globally Uniform Strategies of Player 1.} We show
that for all uniform strategy profiles $\cs \in \CS$, if $\cs_1$ is
globally uniform then $\cs \not\in \Ne_{\mathsf{pess}}(G)$. In other
words, no uniform strategy profile in $\obsgame$ with a globally
uniform strategy for player $1$ can be a pessimist Nash equilibrium.

Consider an arbitrary uniform strategy profile $\cs \in \CS$ where $\cs_1= \dot{s}_1 \in
\CS_1^g$. The uniform strategy $\dot{s}_1$ satisfies exactly one of the
formulas $A, B, C, D$. From the goal formulas we can see that there
exists a non-empty subset of players $X \subseteq \{2, \ldots, 7\}$
such that for all $j \in X$, $\min\uu_j(v,
  (\dot{s}_1,\dot{s}^\forall_j,\cs_{N-\{1, j\}}))= 1$ for all $v$.
Thus if $\cs \in \Ne_{\mathsf{pess}}(\obsgame)$ then $\uu_j(v,\cs)= 1$
for all $j \in X$ and for all $v$. From the goal formulas it also
follows that there exists $\dot{s'}_1 \neq \dot{s}_1$ such that
$\min\uu_1(v, (\dot{s'}_1,\dot{s}^\forall_j,\cs_{N-\{1, j\}}))= 1$ for all
$j \in X$ and for all $v$.

In Table \ref{tab:one} we list all such possibilities.
The first column in Table \ref{tab:one} lists the formula in $A, B, C,
D$ that is satisfied by a globally uniform strategy $\dot{s}_1$. The
second column lists the players $j \in \{2, \ldots, 7\}$ who can
ensure an outcome 1 with $\dot{s}^\forall_j$ given $\dot{s}_1$. The
third column gives the corresponding formulas in $A, B, C, D$ that
player 1 should satisfy to achieve an outcome 1. For example suppose
$\dot{s}_1$ satisfies $A$ (first row), players 2, 5 and 6 can reveal
their proposition to player 8 to ensure an outcome of 1 given
$\dot{s}_1$. Player 1 can then choose to satisfy $D$ (corresponding to
$\alpha_1$), $C$ (corresponding to $\alpha_4$), $B$ (corresponding to
$\alpha_5$), respectively to achieve an outcome of 1. Using
Table \ref{tab:one} it can be verified that any $\cs \in \CS$ where
$\cs_1$ is a globally uniform strategy is not a pessimist Nash
equilibrium.

\begin{table}
  \centering
  \begin{tabular}{ | l | c | r| }
    \hline
    $A$ & $2, 5, 6$ & $D, C, B$\\ \hline
    $B$ & 3, 5, 7 & $A, C, A$\\ \hline
    $C$ & 2, 4, 7 & $D, B, A$ \\ \hline
    $D$ & 3, 4, 6 & $A, B, B$\\
    \hline
  \end{tabular}
        \caption{Uniform strategies for player 1. Explanations are given in the text. \label{tab:one}}
        \end{table}

\paragraph*{Arbitrary Uniform Strategies of Player 1.}
Next, note that since player 1 controls a single proposition $p_1$,
any uniform strategy (of player 1) can satisfy at most two of
$A,B,C,D$. For example, consider the uniform strategy $\cs_1$: when
$p_1$ is true reveal $p_1$ to 3 but not to 4 ($B$), when $p_1$ is
false do not reveal $p_1$ to 3 and do not reveal $p_1$ to 4 ($D$).
Using an argument similar to the one above we can show that if $\cs
\in \Ne_{\mathsf{pess}}(G)$ then $\min\uu_1(v,\cs)=1$ for all $v \in
V$.

Now consider the uniform strategy $\cs_1$ which is mentioned above for
player 1 that satisfies $B$ or $D$. We can argue that there is no
uniform strategy $\cs^*$ with $\cs^*_1 = \cs_1$ such that $\cs^* \in
\Ne_{\mathsf{pess}}(G)$. Given the uniform strategy $\cs_1$ of player 1,
we have the following.

\begin{itemize}
\item Player 2 can ensure an outcome of 1 (for all $v \in V$) by
  not revealing $p_2$ to player 8.
  \item Player 3 can ensure an outcome of 1 (for all
    $v \in V$) by revealing $p_3$ to player 8. But this would in
    turn imply that player 1 can satisfy $\alpha_2$ by deviating to a
    uniform strategy that satisfies $A$.
    \item If player 4 reveals $p_4$ to player 8 then player 1 can
      satisfy $\alpha_3$ by deviating to a uniform strategy that
      satisfies $B$.
    \item If player 5 reveals $p_5$ to player 8 then player 1 can
      satisfy $\alpha_4$ by deviating to a uniform strategy that
      satisfies $C$.
    \item If player 6 reveals $p_6$ to player 8 then player 1
      can satisfy $\alpha_5$ by deviating to a uniform strategy
      that satisfies $B$.
    \item If player 7 reveals $p_7$ to player 8 then player 1
      can satisfy $\alpha_6$ by deviating to a uniform strategy
      that satisfies $A$.
\end{itemize}

Recall that the goal of player 1 is $\gamma_1 := \Vel_{j=1}^6
\alpha_j$. Now consider any uniform strategy profile $\cs^*$ where
$\cs^*_1 = \cs_1$, players 2 and 3 are playing their best responses
and players 4,5,6 and 7 do not reveal the proposition that they
control to player 8. Then $\alpha_3$, $\alpha_4$, $\alpha_5$ and
$\alpha_6$ are not satisfied by $\cs^*$. From items 1 and 2 above, we
have that in $s_3^*$, player 3 reveals $p_3$ to player 8 and
subsequently player 1 has a profitable deviation to a uniform strategy
that satisfies $A$. Thus $s^* \not \in \Ne_{\mathsf{pess}}(G)$. If at
least one of the players $4, \ldots, 7$ reveal the proposition that
they control to player 8, then for player 1 to ensure the outcome 1
for all $v$, it need not necessarily have to satisfy $\alpha_2$ but can
choose to satisfy the corresponding formula $\alpha_3, \ldots,
\alpha_6$. But this would also imply deviating from $\cs_1$ as listed
in items 3-6 above. Thus we can conclude that for all uniform
strategy $\cs^*$ with $\cs^*_1 = \cs_1$ we have that $\cs^* \not\in
\Ne_{\mathsf{pess}}(G)$.

In Table \ref{tab:two} we enumerate all possibilities. In the first
column we list all the possible combinations of the formulas in
$A,B,C,D$ that player 1 can possibly satisfy in any uniform
strategy. Corresponding to each row which denotes a uniform strategy
$\cs_1$ of player 1, in the second column in Table \ref{tab:two}, we
list the minimal set of players $X$ which satisfy the following
conditions.  Given the uniform strategy $\cs_1$ of player 1,
\begin{itemize}
\item For all $j \in X$, player $j$ cannot ensure the outcome 1 (for
  all $v \in V$) by not revealing its proposition to player 8
  (assuming player 1 chooses $\cs_1$).
\item If for all $j \in X$, $\cs_j$ is a uniform strategy of player
  $j$ that reveals $p_j$ to player 8 then in the resulting uniform
  strategy profile $\cs$, we have $\uu_1(v, \cs)=1$ for all $v \in V$.
\end{itemize}
In other words, if all the players in $X$ reveal their proposition to
player 8 then the outcome for player 1 under the strategy $\cs_1$ is 1
for all $v$. For example, consider the uniform strategy $\cs_1'$ of
player 1 defined as follows: ``when $p_1$ is true reveal $p_1$ to 3
and 4 ($A$), when $p_1$ is false reveal $p_1$ to 4 but not to 3
($C$)''. Given $\cs_1'$, player 3 violates the first condition above
as player 3 can ensure the outcome 1 by not revealing $p_3$ to player
8. If both the players 5 and 7 reveal $p_5$ and $p_7$, respectively,
to player 8, then in the resulting uniform strategy profile $\cs'$ we
have $\uu_1(v,\cs')=1$ for all $v \in V$. However, note that then
$\cs' \not \in \Ne_{\mathsf{pess}}(G)$. In $\cs'$ since player 7
reveals $p_7$ to player 8 and $\cs_1'$ satisfies $C \vee A$, player 5
can deviate to not reveal $p_5$ and ensure an outcome of 1 (for all $v
\in V$). Thus player 5 has a profitable deviation from $\cs'$ and
therefore, $\cs' \not \in \Ne_{\mathsf{pess}}(G)$. A similar reasoning
applies to the other rows in Table \ref{tab:two}.  From the goal formulas
$\gamma_1, \ldots, \gamma_7$ we can verify that for every such set
$X$, there is a player $k \in X$ who can ensure an outcome of 1 by not
revealing $p_k$ to player 8. Therefore $\Ne_{\mathsf{pess}}(\obsgame)
= \emptyset$.
\begin{table}
  \centering
  \begin{tabular}{ | l | c | r| }
    \hline
    $C \vee A$ & $\{5,7\}$\\ \hline
    $B \vee D$ & $\emptyset$ \\ \hline
    $D \vee C$ & $\emptyset$ \\ \hline
    $A \vee B$ & $\{3,6\}, \{6,7\}$\\ \hline
    $A \vee D$ & $\{2,3\}$\\ \hline
    $B \vee C$ & $\{4,5\}$\\
    \hline
  \end{tabular}
        \caption{Uniform strategies for player 1. Explanations are given in the text. \label{tab:two}}
        \end{table}
 
   \end{example}

\subsection{Existence of Nash Equilibrium in the General Case}
\label{subsec:existsNEGeneral}

Following up on Example 26 we now determine more generally for which
fragments of observation games the existence of a Nash equilibrium is
guaranteed.  An initial step would be to consider observation games
where the goal formulas for all players are restricted to the positive
fragment of $L^K$. For this fragment, the following result is
straightforward.

\begin{proposition}
  \label{prop:existsNePositive}
Let $\obsgame=(N,(P_i)_{i \in N}, (\gamma_i)_{i \in N})$ be an
observation game where $\gamma_i \in L^+$ for all $i \in N$. Then
$\Ne(G) \neq \emptyset$ for all outcome relations.
\end{proposition}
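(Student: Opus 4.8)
The plan is to exhibit a single uniform strategy profile that is a Nash equilibrium simultaneously for every outcome relation, namely the globally uniform profile $\dot{s^\forall}$ in which every player always reveals all of her variables to everyone. Since a maximal Nash equilibrium is in particular an optimist, pessimist, and realist Nash equilibrium (if the bitstring $\uu_i(v,\cs)$ dominates $\uu_i(v,(\cs_i',\cs_{-i}))$ componentwise, then it dominates it under $\max$, $\min$, and $\Sigma$ as well), it suffices to prove $\dot{s^\forall} \in \Ne_{\mathsf{max}}(G)$. By Proposition~\ref{prop:NEST} this reduces to showing that the ordinary strategy profile $s^\forall$ is a Nash equilibrium of the pointed game $G(v)$ for every valuation $v \in V$.

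The engine of the argument is a monotonicity property of positive formulas, which I would establish first as a claim: if $s,s'$ are strategy profiles with $P_j(s) \subseteq P_j(s')$ for every $j \in N$ (so that $\sim_j^{s'} \subseteq \sim_j^s$, i.e.\ $s'$ reveals at least as much as $s$), then for every $\alpha \in L^+$ and every $v \in V$, $v,s \models \alpha$ implies $v,s' \models \alpha$. The proof is by induction on $\alpha$. The literals $p_i$ and $\neg p_i$ depend only on $v$ and not on the profile, so the base cases are immediate; conjunction and disjunction follow routinely from the induction hypothesis. The only interesting case is $\alpha = K_j\beta$: assuming $v,s \models K_j\beta$, take any $w$ with $w \sim_j^{s'} v$; since $\sim_j^{s'} \subseteq \sim_j^s$ we also have $w \sim_j^s v$, hence $w,s \models \beta$, and the induction hypothesis (applicable because $\beta \in L^+$) gives $w,s' \models \beta$; as $w$ was arbitrary, $v,s' \models K_j\beta$. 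It is precisely because every occurrence of knowledge in $L^+$ is positive that refining the observation relations can only help.

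Now fix $v$ and a player $i$ with an arbitrary deviation $s_i' \in S_i$. I would observe that passing from $s^\forall$ to $(s_i',s^\forall_{-i})$ can only decrease the information revealed: for $j \neq i$ we have $P_j((s_i',s^\forall_{-i})) = (P \setminus P_i) \cup s_i'(j) \subseteq P = P_j(s^\forall)$, while for $j=i$ the set $P_i$ of variables observed by $i$ is unchanged, since a player's own information does not depend on her own revelations and $s_i(i)=P_i$ always. Hence $s^\forall$ reveals at least as much as $(s_i',s^\forall_{-i})$, and the claim applies with $s=(s_i',s^\forall_{-i})$ and $s'=s^\forall$: from $v,(s_i',s^\forall_{-i}) \models \gamma_i$ we obtain $v,s^\forall \models \gamma_i$. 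Equivalently $u_i(v,s^\forall) \geq u_i(v,(s_i',s^\forall_{-i}))$, so no deviation is profitable and $s^\forall \in \Ne(G(v))$. As $v$ was arbitrary, Proposition~\ref{prop:NEST} yields $\dot{s^\forall} \in \Ne_{\mathsf{max}}(G)$, and the domination remark above gives $\Ne(G) \neq \emptyset$ for all four outcome relations.

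The main obstacle is the monotonicity claim, and within it the bookkeeping of which observation relations actually change under a unilateral deviation: the point that player $i$'s own relation stays the identity under $s^\forall_{-i}$ (so she never loses the factual knowledge feeding her goal) while only the relations of the other players can coarsen. Everything else is routine once this monotonicity is in place, and it is exactly the absence of the $\M_j$ (possibility) modalities in $L^+$ --- equivalently, the absence of goals demanding someone's ignorance --- that makes revealing everything a weakly dominant choice for each player.
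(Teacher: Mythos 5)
Your proof is correct and takes essentially the same route as the paper: the paper's proof is a one-line observation that $\dot{s}^\forall$ is dominant for positive goals, and your monotonicity lemma for $L^+$ (truth of positive formulas is preserved when observation relations are refined) is precisely the justification that observation requires. The only cosmetic difference is that you route the conclusion through Proposition~\ref{prop:NEST} and the fact that a maximal Nash equilibrium is also one for the other outcome relations, whereas the paper appeals directly to dominance; both are sound.
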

\begin{proof}
Observe that when $\gamma_i \in L^+$ for all $i \in N$, the globally
  uniform strategy $\dot{s}^\forall_i$ (public announcement by player
  $i$ of $P_i$) is dominant for all $i \in N$. Thus $\dot{s}^\forall
  \in \Ne(G)$ for any outcome relation.
\end{proof}

In this section, we present a more general structural result that
identifies a class of observation games in which a Nash equilibrium is
guaranteed to exist. Our results show that the existence of
equilibrium crucially depends on the combination of positive/negative
epistemic assertions made by players in their goal
formulas. Observation games where the goal formulas are
  restricted to the positive fragment of $L^K$ can be
  viewed as a particular simple case in this setting.

We assume that the goal formulas are in negation normal form. For $i,j
\in N$ (where $j$ may be $i$) and $\gamma_i$ in $L^K_\nnf$, we first define
$x^j_{i}(\gamma_i)$ for $x \in \{+,-\}$. Intuitively,
$+_i^j(\gamma_i)$ and $-_i^j(\gamma_i)$ encode the fact that player
$i$ makes a positive and negative (respectively) epistemic assertion
about a variable assigned to player $j$ in the goal formula
$\gamma_i$. Formally, $x^j_{i}(\gamma_i)$ is defined as follows.

\begin{itemize}
\item For $\gamma_i = p_j$ and  $\gamma_i = \neg p_j$ we have $+^j_{i}(p_j)$ and $ +^j_{i}(\neg p_j)$.
\item $\gamma_i = K_k \phi$ (where $k \in N$): $+^j_{i}(K_k \phi) \text{ iff } +^j_{i}(\phi)$ and $-^j_{i}(K_k \phi) \text{ iff } -^j_{i}(\phi)$.
\item $\gamma_i = \M_k \phi$ (where $k \in N$): $+^j_{i}(\M_k \phi) \text{ iff} -^j_{i}(\phi)$ and $-^j_{i}(\M_k \phi) \text{ iff} +^j_{i}(\phi)$.
  \item $\gamma_i = \phi \et  \psi$:
    \begin{itemize}
    \item $+^j_{i}(\phi \et \psi) \text{ iff}
      +^j_{i}(\phi)\ \text{or}\ +^j_{i}(\psi)$ and $-^j_{i}(\phi \et
      \psi) \text{ iff} -^j_{i}(\phi)\ \text{or}\ -^j_{i}(\psi)$.
    \end{itemize}
  \item $\gamma_i = \phi \vel \psi$:
    \begin{itemize}
    \item $+^j_{i}(\phi \vel \psi) \text{ iff}
      +^j_{i}(\phi)\ \text{or}\ +^j_{i}(\psi)$ and $-^j_{i}(\phi \vel
      \psi) \text{ iff} -^j_{i}(\phi)\ \text{or}\ -^j_{i}(\psi)$.      
    \end{itemize}
\end{itemize}

Note that the definition of $+_i^j(\gamma_i)$ is intended to encode
the fact that player $i$ makes a positive epistemic assertion about a
variable assigned to player $j$ in the goal formula $\gamma_i$. So in
item 4, we have that $+_i^j(\phi \et \psi)$ holds iff the same holds
for at least one of the subformulas $\phi$ or $\psi$. A similar
comment applies to definition of $-_i^j(\gamma_i)$.

For every player $i$, we define $\mathit{type}(i) \subseteq
\{+,-,c+,c-\}$ as follows.
For $x \in \{+,-\}$,
\begin{itemize}
\item $x \in \mathit{type}(i)$ if there is a player
  $j \neq i$ such that $x^j_i(\gamma_i)$,
    \item $cx \in
      \mathit{type}(i)$ if $x^i_i(\gamma_i)$.
\end{itemize}

In other words, $+$ and $-$ are in $\mathit{type}(i)$ if there exists
some player $j$ with $j \neq i$ such that player $i$ makes a positive and
negative (respectively) epistemic assertion about a variable assigned
to player $j$ in $\gamma_i$. Likewise, $c+$ and $c-$ is in
$\mathit{type}(i)$ if player $i$ makes a positive and negative
(respectively) epistemic assertion about its own variable in
$\gamma_i$.

For example, for $i \in N$, consider the goal formula $\gamma_i$ given
in Example \ref{ex:noNE}, under its translation to negation normal
form (NNF). We have the following for player 1.
\begin{itemize}
\item $K_1 p_2$ occurs as a subformula in the NNF of $\gamma_1$ and therefore $+ \in \mathit{type}(1)$. $\M_1 p_2$  occurs as a subformula in the NNF of $\gamma_1$ and therefore $- \in \mathit{type}(1)$.
\item $K_2 p_1$ occurs as a subformula in the NNF of $\gamma_1$ and therefore $c+ \in \mathit{type}(1)$. $\M_2 p_1$  occurs as a subformula in the NNF of $\gamma_1$ and therefore $c- \in \mathit{type}(1)$.
\end{itemize}

For player 2, the reasoning is similar and therefore, we have that
$\mathit{type}(i)=\{+,-,c+,c-\}$ for all $i \in N$. In fact, Theorem
\ref{thm:existskwNE} given below shows that it is crucial that
$|\mathit{type}(i)|>3$ in this example.

Based on the notion of $\mathit{type}$, we define the following
subsets of $N$.  Let
\begin{itemize}
\item $X_+=\{i \in N \mid c+ \in \mathit{type}(i)\}$,
$X_-=\{i \in N \mid c- \in \mathit{type}(i)\}$,
\item $W_l=\{i \in N \mid \mathit{type}(i) = \{c+,c-\}\}$,
\item $W_+=\{i \in N \mid \mathit{type}(i) = \{+, c+, c-\}\}$,
  $W_-=\{i \in N \mid \mathit{type}(i) = \{-, c+, c-\}\}$.
\end{itemize}
For the proofs in this section, we also find it useful to define an
ordering $\succcurlyeq$ over the set of strategy profiles. Let $X
\subseteq N$ and $s_X, t_X \in S_X$. We say that $s_X \succcurlyeq
t_X$ if for all $i \in X$ and $j \in N$, $t_i(j) \subseteq s_i(j)$. We
can then show the following existence result.


\begin{algorithm}[ht]
    \caption{\label{alg:NEpess}}
\KwIn{$G=(N,(P_i)_{i \in N}, (\gamma_i)_{i \in N})$.}
\KwOut{A uniform strategy profile $\cs \in \Ne_{\mathsf{pess}}(G)$.}

Let $W_o: =\obar{X_+} \cup \obar{X_-} \cup W_l$; 

$\forall i \in \obar{X_+}$, $\forall v \in V$, set
$\cs_i(v):= s_i^\emptyset$; ~~~~~~~~~~ \tcc{a dominant uniform strategy}

$\forall i \in \obar{X_-}\setminus \obar{X_+}$, 
$\forall v \in V$, set $\cs_i(v):=s_i^\forall$; ~~ \tcc{a dominant uniform strategy}

$\forall i \in W_l$, $\forall v \in V$, if $\exists s \in S$ such
that $\forall w: w \sim_i v$, $u_i(w,s)=1$ then $\forall w: w\sim_i
v$, set $\cs_i(w):=s_i$ else $\cs_i(w):=s_i^\emptyset$;~~ \tcc{$u_i$ does not depend on others' choice}

$\forall i \in W_{+}$, $\forall j \in W_{-}$, $\forall v \in V$ set
$\cs_i(v):=s_i^\emptyset$; $\cs_j(v):=s_j^\forall$; ~~ \tcc{initialisation}

$\forall v \in V$, set $Y(v):=\emptyset$; $Z(v):=\emptyset$;

\Repeat{$\forall v \in V$, $Y(v)=Y'(v)$ and $Z(v)=Z'(v)$}{
  $\forall v \in V$, set $Y'(v):=Y(v)$; $Z'(v):=Z(v)$;

  \tcc{process players who make positive assertions about variables controlled by others}

\While{ $\exists v \in V$, $\exists i \in W_+\setminus Y(v)$, $\exists
  s_i$, such that $\forall w: w \sim_i v$, $\forall
  s_{W_{-}\setminus Z(w)}$, we have
  ${\scriptsize (s_i, \cs_{W_{+}\setminus\{i\}}(w), \cs_{Z(w)}(w), s_{W_{-}\setminus
    Z(w)}, \cs_{W_o}(w)), w \models \gamma_i}$}
{$\forall w: w \sim_i v$, set $\cs_i(w):=s_i$; 
$Y(w):=Y(w) \cup \{i\}$; }

\tcc{process players who make negative assertions about variables controlled by others}

\While{ $\exists v \in V$, $\exists i \in W_-\setminus Z(v)$, $\exists
  s_i$ such that $\forall w: w \sim_i v$, $\forall s_{W_{+}\setminus
    Y(w)}$, we have
  ${\scriptsize (s_i,\cs_{W_{-}\setminus\{i\}}(w),\cs_{Y(w)}(w),s_{W_{+}\setminus
      Y(w)},\cs_{W_o}(w)),w \models \gamma_i}$}
      {$\forall w: w \sim_i v$, set $\cs_i(w):=s_i$;
        $Z(w):=Z(w) \cup \{i\}$; } }

$\forall i \in W_+ \setminus Y(v)$, $\forall j \in W_{-}\setminus
Z(v)$ and $\forall v \in V$, set $\cs_j(v)(i):= \emptyset$;

$\forall i \in W_- \setminus Z(v)$, $\forall j \in W_{+}\setminus
Y(v)$ and $\forall v \in V$, set $\cs_j(v)(i):= P_j$;

{\bf return} $\cs$;
\end{algorithm}


\begin{restatable}{theorem}{thmexistsNEpess}
  \label{thm:existsNEpess}
  Let $G=(N,(P_i)_{i \in N}, (\gamma_i)_{i \in N})$ be an observation
  game where all goals $\gamma_i$ are guarded.
  If for all $i \in
  N$, $|\mathit{type}(i)| \leq 3$, then
  \begin{enumerate}
  \item $\Ne_{\mathsf{pess}}(G) \neq  \emptyset$.
  \item $\Ne_{\mathsf{opt}}(G) \neq \emptyset$.
  \end{enumerate}
\end{restatable}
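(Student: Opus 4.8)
The plan is to prove existence constructively, by verifying that the uniform strategy profile $\cs$ returned by Algorithm~\ref{alg:NEpess} lies in $\Ne_\pess(G)$, and then running the mirror construction for $\Ne_\opt(G)$. The first thing I would check is that the algorithm is well defined: since $|\mathit{type}(i)|\le 3$ rules out the type $\{+,-,c+,c-\}$, the five sets $\obar{X_+}$, $X_+\setminus X_-$ (the players with $c+$ but not $c-$), $W_l$, $W_+$ and $W_-$ partition $N$, so every player is assigned a strategy exactly once (the $W_o$ players in the opening lines, the $W_+,W_-$ players through the loop). I would also record that the repeat-loop terminates, because across iterations the sets $Y(v)$ and $Z(v)$ only grow and are bounded by $W_+$ and $W_-$.

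The two technical engines are a monotonicity lemma and the guardedness of the goals. The monotonicity lemma, proved by induction on $\gamma_i\in L^K_\nnf$, states that if $-^j_i(\gamma_i)$ fails then $u_i$ is nondecreasing as the revelation of $p_j$ grows in the $\succcurlyeq$ ordering, and dually that if $+^j_i(\gamma_i)$ fails then $u_i$ is nonincreasing; the inductive step is exactly the bookkeeping that $K_k$ preserves and $\M_k$ reverses polarity. Guardedness ($\gamma_i=K_i\alpha_i$) guarantees that each player knows whether it has won, so the event ``$i$ wins at $w$'' is stable across $i$'s refined information set and the winning conditions of the algorithm, phrased as $w\models\gamma_i$, correctly drive the pessimist minimum. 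Together these justify the $W_o$ assignments immediately: a player in $\obar{X_+}$ makes no positive assertion about its own variable, so by monotonicity hiding ($s_i^\emptyset$) is weakly dominant for it; symmetrically $s_i^\forall$ is weakly dominant for $X_+\setminus X_-$; and a $W_l$ player's goal depends only on its own strategy, so the ``secure a win across $[v]_i$ or else hide'' rule attains $\min\uu_i(v,\cs)=1$ whenever that is attainable and certifies $\min\uu_i(v,\cs)=0$ for every deviation otherwise.

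The heart of the proof is the treatment of the interactive players $W_+$ and $W_-$, and this is where I expect the main difficulty. The invariant to maintain through the repeat-loop is: every player placed in $Y(v)$ (resp.\ $Z(v)$) is locked into a strategy securing $\min\uu_i(v,\cs)=1$ \emph{against all strategies of the still-undecided opponents}. I must check that this invariant survives later insertions into $Z$ (resp.\ $Y$): it does precisely because the securing condition was universally quantified over $s_{W_-\setminus Z}$, so fixing some opponents to concrete strategies only shrinks the adversary set and preserves the win. The delicate converse is that the players undecided at the fixpoint have \emph{no} profitable deviation. For $i\in W_+\setminus Y(v)$ the fixpoint condition says that for every $s_i$ there are $w\sim_i v$ and $s_{W_-\setminus Z}$ making $i$ lose; since $i$ is positive about others, the monotonicity lemma lets me take the witnessing adversary to be \emph{minimal} revelation, i.e.\ the $W_-$ players hiding, which is exactly what the final two assignment lines realize. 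Hence $\min\uu_i(v,\cs)=0$ and no deviation can raise it; the symmetric argument handles $W_-\setminus Z$. Here $|\mathit{type}(i)|\le 3$ is indispensable: a player positive \emph{and} negative about others would not be monotone in a single direction, so no single extremal opponent profile would be worst-case, and the cyclic profitable deviations of Example~\ref{ex:noNE} would reappear.

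Finally, for claim (2) I would run the dual of Algorithm~\ref{alg:NEpess}, interchanging the roles of $\min$ and $\max$ and of $s_i^\emptyset$ and $s_i^\forall$. Under the optimist relation a player improves only by raising $\max\uu_i(v,\cs)$, i.e.\ by securing a win in \emph{some} world of its information set, so the winning conditions weaken to existential quantification over $w\sim_i v$; the same type-based partition and the same monotonicity lemma apply with all inequalities reversed, and the fixpoint/invariant argument transfers verbatim, yielding a profile in $\Ne_\opt(G)$.
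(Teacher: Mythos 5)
Your Part~1 is essentially the paper's own proof: the same Algorithm~\ref{alg:NEpess}, the same five-way partition into $\obar{X_+}$, $\obar{X_-}\setminus\obar{X_+}$ ($=X_+\setminus X_-$), $W_l$, $W_+$ and $W_-$, the same invariant that a player entered into $Y(v)$ (resp.\ $Z(v)$) has secured a win against \emph{all} strategies of the still-undecided opponents (so the guarantee survives later lockings, by shrinking the adversary set and by $\succcurlyeq$-monotonicity within $W_+$, resp.\ $W_-$), and the same fixpoint argument that the final two assignment lines realize the worst-case adversary for the undecided players, forcing $\min\uu_i(v,\cs)=0$ with no deviation able to raise it. Turning the polarity bookkeeping of $+^j_i/-^j_i$ into an explicit monotonicity lemma in the $\succcurlyeq$ ordering is a presentational improvement over the paper, which uses this inline.

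Part~2, however, contains a genuine error as stated. You propose to ``interchange the roles of $s_i^\emptyset$ and $s_i^\forall$'' and to apply ``the same monotonicity lemma with all inequalities reversed.'' Neither is correct: whether player $i$'s goal is monotone increasing or decreasing in the revelation of some $P_j$ is a property of $\gamma_i$ alone, not of the outcome relation, so the monotonicity lemma is unchanged, and the weak-dominance assignments for the $W_o$ players must stay exactly as in Part~1 --- assigning $s_i^\forall$ to a player in $\obar{X_+}$ with $c-\in\mathit{type}(i)$ would hand that player a profitable deviation under any outcome relation. The only modification actually needed (and the one the paper makes) is the one you also state in the same sentence: weaken the while-loop guards from ``for all $w\sim_i v$'' to ``for some $w\sim_i v$,'' since under $>^\opt$ a deviation is profitable exactly when it raises $\max\uu_i(v,\cdot)$ from $0$ to $1$. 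Drop the swap and the reversal, keep only that quantifier change, and your Part~2 goes through.
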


\begin{proof}
  \noindent \textbf {Part 1.} $\Ne_{\mathsf{pess}}(G) \neq
  \emptyset$. Consider the procedure described as
  Algorithm~\ref{alg:NEpess}. We argue that
  Algorithm~\ref{alg:NEpess} always terminates and constructs a
  uniform strategy profile $\cs \in \Ne_{\mathsf{pess}}(G)$.
  First, we note that the sets $\obar{X_+}$, $\obar{X_-} \setminus
  \obar{X_+}$, $W_l$, $W_+$ and $W_-$ form a partition of $N$.

In each iteration of the outer loop in Algorithm~\ref{alg:NEpess}
(steps 7 - 13), the size of the set $Y(v)$ or $Z(v)$ strictly
increases for some $v \in V$. We also have that for all $v \in V$, $0
\leq |Y(v)| \leq |N|$ and $0 \leq |Z(v)| \leq |N|$. It follows that
Algorithm~\ref{alg:NEpess} always terminates. Let $\cs$ be the
strategy profile constructed by Algorithm~\ref{alg:NEpess}. From the
description of the procedure, it can also be verified that $\cs$ is a
uniform strategy profile. Thus to prove the claim, it suffices to show
that $\cs \in \Ne_{\mathsf{pess}}(G)$.

Note that for all $i \in \obar{X_+}$, $\cs_i^\emptyset$ is a dominant
uniform strategy and for all $i \in \obar{X_-}$, $\cs_i^\forall$ is a
dominant uniform strategy. Therefore, for all $v \in V$, for all $i \in
\obar{X_+} \cup \obar{X_-}$ and for all $s_i' \in S_i$, $u_i(v,\cs(v))
\geq u_i(v,(s_i',\cs_{-i}(v)))$.

For all $i \in W_l$, for all $v \in V$, we have $\uu_i(v,
(\cstwo_i,\cstwo_{-i}))=\uu_i(v,(\cstwo_i,\cstwo'_{-i}))$ for all $v
\in V$, $\cstwo_i \in \CS_i$ and for all $\cstwo_{-i}, \cstwo'_{-i}
\in \CS_{-i}$. Therefore, by the choice of $\cs_i$ made in line 4 of
Algorithm~\ref{alg:NEpess}, we have for all $i \in W_l$, for all $v
\in V$, for all $\cs_i' \in \CS_i$, $\uu_i(v,\cs) \geq
\uu_i(v,(\cs_i',\cs_{-i}))$.

Now consider a player $i \in W_{+}$. For $v \in V$, suppose $\cs_i(v)$
is assigned a value in the while loop (steps 9-10). Let $\cs^k$ denote
the resulting strategy profile after this assignment (step 10) and
$Z^k$ denote the value of $Z$ in the corresponding iteration. By
definition of the while loop, for all $w$ with $v \sim_i w$, for
all $s_{W_{-} \setminus Z^k(w)}$,
$(s_i,\cs^k_{W_{+}\setminus\{i\}}(w),\cs_{Z^k}(w),s_{W_{-}\setminus
  Z^k(w)},\cs_{W_o}(w)),w \models \gamma_i$. Since $i \in W_{+}$, this
implies that for all $t_{W_{+}\setminus\{i\}} \in
S_{W_{+}\setminus\{i\}}$ such that $t_{W_{+}\setminus\{i\}}
\succcurlyeq$ 
$\cs^k_{W_{+}\setminus\{i\}}(w)$, for all $s_{W_{-} \setminus
  Z^k(w)}$,
$(s_i,t_{W_{+}\setminus\{i\}},\cs_{Z^k(w)}(w),s_{W_{-}\setminus
  Z^k(w)},\cs_{W_o}(w)),w \models \gamma_i$.
By definition, we have $\cs_{W_+\setminus \{i\}}(w) \succcurlyeq
\cs^k_{W_+\setminus \{i\}}(w)$ and for all $j \in Z^k(w)$,
$\cs_j^k(w)=\cs_j(w)$.  Therefore, it follows that 
$(s_i,\cs_{W_{+}\setminus\{i\}},\cs_{Z}(w),s_{W_{-}\setminus
  Z(w)},\cs_{W_o}(w)), w \models \gamma_i$ and $\uu_i(v,\cs)=1$.

Consider a player $i \in W_{-}$. For $v \in V$, suppose $\cs_i(v)$ is
assigned a value in the while loop (steps 11-12). Let $\cs^k$ denote
the resulting strategy profile after this assignment (step 12) and
$Y^k$ denote the value of $Y$ in the corresponding iteration. By
definition of the while loop, for all $w$ with $v \sim_i w$, for
all $s_{W_{+} \setminus Y^k(w)}$,
$(s_i,\cs^k_{W_{-}\setminus\{i\}}(w),\cs_{Y^k(w)}(w),s_{W_{+}\setminus
  Y^k(w)},\cs_{W_o}(w)),w \models \gamma_i$. Since $i \in W_{-}$, this
implies that for all $t_{W_{-}\setminus\{i\}} \in
S_{W_{-}\setminus\{i\}}$ such that $\cs^k_{W_{-}\setminus\{i\}}(w)
\succcurlyeq t_{W_{-}\setminus\{i\}}$, for all $s_{W_{+} \setminus
  Y^k(w)}$,
$(s_i,t_{W_{-}\setminus\{i\}},\cs_{Y^k(w)}(w),s_{W_{+}\setminus
  Y^k(w)},\cs_{W_o}(w)),w \models \gamma_i$.
By definition, $\cs^k_{W_-\setminus \{i\}}(w) \succcurlyeq
\cs_{W_-\setminus \{i\}}(w)$.  Thus
$(s_i,\cs_{W_{-}\setminus\{i\}},\cs_{Y}(w),s_{W_{+}\setminus
  Y(w)},\cs_{W_o}(w)),w \models \gamma_i$. Therefore, $\uu_i(v,\cs)=1$.

Now suppose there exists $v \in V$ and $i \in W_{+}$ such that $i
\notin Y(v)$ (on termination of the repeat loop, steps 7-13). By
definition, for all $s_i$, there exists $w$ with $v \sim_i w$ and
there exists $t_{W_{-}\setminus Z(w)}$ such that
$(s_i,\cs_{W_{+}\setminus\{i\}}(w),\cs_{Z(w)}(w),t_{W_{-}\setminus
  Z(w)},\cs_{W_o}(w)),w \not\models \gamma_i$. Since $i \in W_+$ and
$s_j(v)(i) = \emptyset$ for all $j \in W_-\setminus Z(v)$, it follows
that for all $s_i$, there exists a $w$ with $v \sim_i w$ such that
$(s_i,\cs_{W_{+}\setminus\{i\}}(w),\cs_{Z}(w),s_{W_{-}\setminus
  Z(w)},\cs_{W_o}(w)),w \not\models \gamma_i$. Therefore, for all $\cs_i'
\in \CS_i$, $\uu_i(v,\cs) \geq \uu_i(v,(\cs_i',\cs_{-i}))$.

Suppose there exists $v \in V$ and $i \in W_{+}$ such that $i \notin
Z(v)$. Using a similar proof as above and using the fact that
$s_j(v)(i) = \emptyset$ for all $j \in W_-\setminus Z(v)$ we can argue
that for all $s_i$, there exists a $w$ with $v \sim_i w$ such that
$(s_i,\cs_{W_{-}\setminus\{i\}}(w),\cs_{Y(w)}(w),s_{W_{+}\setminus
  Y(w)},\cs_{W_o}(w)),w \not\models \gamma_i$. Therefore, for all $\cs_i'
\in \CS_i$, $\uu_i(v,\cs) \geq \uu_i(v,(\cs_i',\cs_{-i}))$.

\noindent {\bf Part 2.} To show that $\Ne_{\mathsf{opt}}(G) \neq
\emptyset$, we modify Algorithm~\ref{alg:NEpess} to reflect the
optimist decision rule.
This is achieved by changing the conditional in both the While loops
(line 9 and line 11) as described below. Note that the only change is
a switch to existential quantification over the valuations $w$ in
order to capture the definition of the optimist decision rule.

\noindent{\textit Line 9.}\\
{
\noindent{\bf While}
$\exists v \in V$, $\exists i \in W_+\setminus Y(v)$, $\exists
  s_i$, such that $\exists w: w \sim_i v$, $\forall
  s_{W_{-}\setminus Z(w)}$, we have
  ${\scriptsize (s_i, \cs_{W_{+}\setminus\{i\}}(w), \cs_{Z(w)}(w), s_{W_{-}\setminus
      Z(w)}, \cs_{W_o}(w)), w \models \gamma_i}$ {\bf do}.
}

\noindent{\textit Line 11.}\\
{
  \noindent{\bf While}
  $\exists v \in V$, $\exists i \in W_-\setminus Z(v)$, $\exists
  s_i$ such that $\exists w: w \sim_i v$, $\forall s_{W_{+}\setminus
    Y(w)}$, we have
  ${\scriptsize (s_i,\cs_{W_{-}\setminus\{i\}}(w),\cs_{Y(w)}(w),s_{W_{+}\setminus
      Y(w)},\cs_{W_o}(w)),w \models \gamma_i}$ {\bf do}.
}    
\end{proof}

The result in Theorem \ref{thm:existsNEpess} is tight in the sense
that there exist observation games where $|\mathit{type}(i)| =4$ for
$i \in N$ and $\Ne_{\mathsf{pess}}(G) = \emptyset$. This is
illustrated in Example \ref{ex:KWnoNEpess} where for players $i \in
\{2, \ldots, 7\}$, $\mathit{type}(i) =\{+, -, c+, c-\}$.

An interesting corollary of Theorem~\ref{thm:existsNEpess} is
for \emph{self-positive} goals: my objective is never to remain
  ignorant of others' variables even when it may be for others to
  remain ignorant. (We recall their definition in Section~\ref{sec.logic}.)

\begin{corollary} \label{cor.self}
Let $G=(N,(P_i)_{i \in N}, (\gamma_i)_{i \in N})$
be an observation game where all $(\gamma_i)_{i \in N}$
are guarded and self-positive, then $\Ne_{\mathsf{pess}}(G) \neq
\emptyset$ and $\Ne_{\mathsf{opt}}(G) \neq  \emptyset$.
\end{corollary}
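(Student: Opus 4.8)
The plan is to obtain the corollary as an immediate instance of Theorem~\ref{thm:existsNEpess}. That theorem already delivers \emph{both} a pessimist and an optimist equilibrium as soon as every player's goal is guarded and satisfies $|\mathit{type}(i)| \le 3$. Guardedness is assumed in the corollary, so the whole task reduces to the purely syntactic claim that every goal in the self-positive fragment automatically meets the type bound: for all $i \in N$, $\gamma_i \in L^{i+}$ implies $|\mathit{type}(i)| \le 3$. Once this is in place, both conclusions $\Ne_{\mathsf{pess}}(G) \neq \emptyset$ and $\Ne_{\mathsf{opt}}(G) \neq \emptyset$ follow at once from Parts~1 and~2 of the theorem, with nothing further to check about the equilibrium itself.

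To prove the type bound I would first normalise each $\gamma_i$ via Proposition~\ref{propone}; this keeps us inside $L^{i+}$, which is by definition a fragment of $L^K_\nnf$. I would then use the characterisation that self-positivity for player $i$ is precisely the absence of the modality $\M_i$ (equivalently, all occurrences of $K_i$ are positive): the only modalities available over the own variable are $K_i$, $K_k$ and $\M_k$ with $k \neq i$. The target is to show that one of the four markers is necessarily missing, the natural candidate being $c-$, i.e.\ that $-^i_i(\gamma_i)$ fails, so that $\mathit{type}(i) \subseteq \{+,-,c+\}$. This would be an induction on the structure of $\gamma_i$. The base cases $p_m,\neg p_m$ are immediate since $-^i_i$ is not generated at atoms, and the Boolean clauses together with $K_i$ and $K_k$ are routine because $-^i_i$ merely propagates unchanged downward through them, so by the induction hypothesis it stays false.

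The main obstacle is the modal clause $\M_k\phi$. Here $-^i_i(\M_k\phi)$ holds iff $+^i_i(\phi)$ holds, so the parity flip threatens to manufacture a negative self-assertion exactly when $\phi$ mentions the owner's variable $p_i$ positively (e.g.\ an occurrence of $p_i$ sitting under an odd number of $\M$-modalities). This is the one place where self-positivity must be used essentially rather than formally, and it is where the proof stands or falls: I would have to argue that in a genuinely self-positive goal no occurrence of $p_i$ or $\neg p_i$ lands under an odd $\M$-depth — reading ``self-positive'' as the requirement that player $i$ never asks for any agent to be ignorant of $i$'s \emph{own} variable — and to discharge the degenerate subcases using the standing convention that trivial self-knowledge atoms such as $K_i p_i$ and $\Kw_i p_i$ are assumed absent from goals. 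I would carry this ``no $p_i$ under odd $\M$-depth'' statement through the induction as a strengthened invariant sitting alongside the $c-$ claim. With the type bound secured, the five sets $\obar{X_+}$, $\obar{X_-}\setminus\obar{X_+}$, $W_l$, $W_+$, $W_-$ of Algorithm~\ref{alg:NEpess} again partition $N$ (this is automatic once $|\mathit{type}(i)| \le 3$, since $c+,c-\in\mathit{type}(i)$ then forces at most one of $+,-$), and a direct appeal to Theorem~\ref{thm:existsNEpess} finishes the argument.
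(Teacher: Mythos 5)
Your overall strategy---reduce the corollary to Theorem~\ref{thm:existsNEpess} by showing that guarded self-positive goals satisfy $|\mathit{type}(i)|\le 3$---is exactly the paper's; its entire proof is the one-line observation that the type bound holds. The divergence, and the gap, lies in \emph{which} marker you exclude. You aim to show $c-\notin\mathit{type}(i)$, i.e.\ that $-^i_i(\gamma_i)$ fails, whereas the paper claims $-\notin\mathit{type}(i)$, i.e.\ that $\gamma_i$ makes no negative epistemic assertion about \emph{other} players' variables, giving $\mathit{type}(i)\subseteq\{+,c+,c-\}$. Your choice is refutable inside the fragment as defined, and your own analysis of the $\M_k$ clause shows why: $\gamma_i=K_i\M_k p_i$ with $k\neq i$ is guarded, lies in $L^{i+}$ (the BNF bans only $\M_i$, not $\M_k$ applied to $i$'s own atoms), and the parity flip at $\M_k$ yields $-^i_i(\gamma_i)$, hence $c-\in\mathit{type}(i)$. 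The reading you propose in order to rescue the step---that player $i$ never asks any agent to be ignorant of $i$'s \emph{own} variable---is not a harmless sharpening: it excludes precisely the scenario the self-positive fragment was introduced to capture (``you wish others to remain ignorant even when you are only interested in obtaining factual knowledge''), so under it the corollary would say much less than intended. The invariant you should carry through the induction is the dual one: in a self-positive goal the only ignorance player $i$ may demand concerns $i$'s own variables, so no atom $p_j$ with $j\neq i$ sits under an odd $\M$-depth, whence $-\notin\mathit{type}(i)$ and the theorem applies.

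That said, your instinct that the $\M_k$ clause is where the argument ``stands or falls'' is sound, and it cuts both ways: the BNF for $L^{i+}$ as printed also generates $\M_k p_m$ for $m\neq i$, which would place $-$ in $\mathit{type}(i)$ and defeat the paper's exclusion just as $\M_k p_i$ defeats yours. Some restriction on what the $\M_k$ may scope over therefore has to be read into the definition either way; the one consistent with the paper's proof and with its motivating discussion restricts the demanded ignorance to the goal owner's own variables, which is the opposite of the restriction you chose. The remainder of your proposal (appealing to both parts of Theorem~\ref{thm:existsNEpess}, the observation that the five sets used by Algorithm~\ref{alg:NEpess} partition $N$ once the type bound holds) is correct and matches the paper.
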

\begin{proof}
  Follows from Theorem~\ref{thm:existsNEpess}, since $\forall i$, $-
  \not\in \mathit{type}(i)$.
\end{proof}

For $\Ne_{\mathsf{max}}(G)$ we show a weaker result (Theorem
\ref{thm:existsNEmax}) which can be strengthened for $\Kw$ games
(Theorem \ref{thm:existskwNE}).

\begin{restatable}{theorem}{thmexistsNEmax}
\label{thm:existsNEmax}
Let $G=(N,(P_i)_{i \in N}, (\gamma_i)_{i \in N})$ be an observation
game where the goal formulas $(\gamma_i)_{i \in N}$ are guarded.
If for all $i \in N$, $|\mathit{type}(i)| \leq 2$ then
$\Ne_{\mathsf{max}}(G) \neq \emptyset$.
\end{restatable}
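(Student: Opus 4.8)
The plan is to prove existence by the reduction of Proposition~\ref{prop:NEST}: rather than reasoning about expected-outcome vectors directly, I would construct a single uniform strategy profile $\cs$ that is simultaneously a Nash equilibrium of the pointed game $G(v)$ at \emph{every} valuation $v$, since such a profile is precisely a maximal Nash equilibrium. The hypothesis $|\mathit{type}(i)|\le 2$ is exactly what makes this tractable: membership in $W_+$ or $W_-$ requires the three-element type $\{+,c+,c-\}$ or $\{-,c+,c-\}$, so the bound forces $W_+=W_-=\emptyset$. Consequently $N$ is partitioned into only $\obar{X_+}$, $\obar{X_-}\setminus\obar{X_+}$ and $W_l$, the three classes handled by lines~2--4 of Algorithm~\ref{alg:NEpess}, and the entire repeat-loop (lines~7--13), where the pessimist argument does its delicate work for $W_\pm$, becomes vacuous. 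I would therefore take as candidate the profile produced by lines~2--4: $\cs_i^\emptyset$ for $i\in\obar{X_+}$, $\cs_i^\forall$ for $i\in\obar{X_-}\setminus\obar{X_+}$, and for $i\in W_l$ a strategy realizing $\gamma_i$ on each information set whenever one exists (else $\cs_i^\emptyset$).

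The crux is to upgrade these from "pessimist-dominant" to \emph{worldwise} (pointwise) best responses, i.e.\ dominant for the maximal relation: at every world $w$, against \emph{any} opponent choice, the strategy attains the best achievable value of $u_i$. This rests on a structural decoupling peculiar to this simplified logic. Because valuations form the full product $\powerset(P)$, the truth of an atomic epistemic claim $K_k p_m$ (equivalently $\M_k p_m$) at $(w,s)$ depends only on $w(p_m)$ and on whether $p_m\in s_m(k)$ — that is, only on the owner $m$'s revelation of $p_m$ to $k$, never on revelations of other variables. Hence player $i$'s strategy $s_i$ influences $\gamma_i$ only through subformulas about $i$'s own variables (the $c+/c-$ content) and not at all through the $+/-$ content about other players' variables, which is fixed by $s_{-i}$ and the valuation.

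From the decoupling the three dominance claims follow, as they are used in the proof of Theorem~\ref{thm:existsNEpess}. For $i\in\obar{X_+}$ we have $c+\notin\mathit{type}(i)$, so every literal of an own variable occurs at odd $\M$-depth; since $\M_k$-claims are monotone decreasing under the revelation order $\succcurlyeq$ while $K_k$-claims are monotone increasing, $\gamma_i$ is monotone decreasing in $s_i$ and $\cs_i^\emptyset$ is worldwise best. Dually, for $i\in\obar{X_-}\setminus\obar{X_+}$ we have $c-\notin\mathit{type}(i)$, every own literal sits at even $\M$-depth, $\gamma_i$ is monotone increasing in $s_i$, and $\cs_i^\forall$ is worldwise best. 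For $i\in W_l$, $\mathit{type}(i)=\{c+,c-\}$ contains neither $+$ nor $-$, so $\gamma_i$ mentions only $i$'s own variables; by decoupling its truth depends solely on $v(P_i)$ (which $i$ observes, so the choice respects uniformity) and on $s_i$, and the line-4 choice attains the best achievable outcome on each information set independently of the others. In each case the strategy is optimal at every world against all opponents, hence dominant under $>^\mmax$.

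Assembling these is immediate: a profile in which every player plays a worldwise-optimal strategy is, at each valuation $v$, a profile in which all players best-respond, so $\cs(v)\in\Ne(G(v))$ for all $v$, and Proposition~\ref{prop:NEST} yields $\cs\in\Ne_{\mathsf{max}}(G)$. The main obstacle is the dominance step, specifically making the decoupling and the two monotonicity directions rigorous: one must verify by induction on the negation-normal form of $\gamma_i$ that even-depth ($c+$) and odd-depth ($c-$) occurrences of $i$'s literals behave monotonically under $\succcurlyeq$, while the $+/-$ occurrences are genuinely independent of $s_i$. I expect guardedness ($\gamma_i=K_i\alpha_i$) to enter only here, keeping the outer epistemic layer $\sim_i^s$-invariant so that "worldwise optimal" is well defined on information sets; it plays no role in the modality-counting that drives the monotonicity, which is why the same partition-plus-dominance scheme that gives $|\mathit{type}(i)|\le 3$ for the pessimist case degrades to $|\mathit{type}(i)|\le 2$ here — the maximal relation affords no analogue of the pessimist trick that salvages $W_\pm$ through guaranteed $0$ minima.
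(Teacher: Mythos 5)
Your proposal is correct and follows essentially the same route as the paper's proof: the hypothesis $|\mathit{type}(i)|\leq 2$ empties $W_+$ and $W_-$, and the profile assigning $s_i^\emptyset$ to $\obar{X_+}$, $s_i^\forall$ to $X_+\setminus X_-$, and an information-set-wise goal-realizing strategy to $W_l$ is shown to be a pointwise best response at every valuation, whence Proposition~\ref{prop:NEST} gives a maximal Nash equilibrium. You in fact spell out the decoupling and monotonicity arguments behind the dominance claims in more detail than the paper, which simply asserts them.
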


\begin{proof}
Let $G=(N,(P_i)_{i \in N}, (\gamma_i)_{i \in N})$ be an observation
game. Let $X_+=\{i \in N \mid c+ \in \mathit{type}(i)\}$ and $X_-=\{i
\in N \mid c- \in \mathit{type}(i)\}$. Consider the uniform strategy
profile $\cs$ defined as follows.
\begin{itemize}
  \item For $i \in X_+ \cap X_-$, we define $\cs_i$ using the
    iterative procedure: for $v \in V$ where $\cs_i(v)$ is
    not defined, if there exists $s \in S$ such that $u_i(v, \cs)=1$
    then set $\cs_i(w)=s_i$ for all $w: v \sim_i w$. Otherwise set
    $\cs_i(w)= s_i^\emptyset$ for all $w: v \sim_i w$.

  \item For all $i \in \obar{X_+}$, for all $v \in V$, let
    $\cs_i(v)=s_i^\emptyset$.
  \item For all $i \in N \setminus [(X_+ \cap X_-) \cup \obar{X_+}]$,
    for all $v \in V$, let $\cs_i(v)=s_i^\forall$.
\end{itemize}

Note that for all $i \in \obar{X_+}$, $\cs_i^\emptyset$ is a dominant
uniform strategy. For all $i \in \obar{X_-}$, $\cs_i^\forall$ is a
dominant uniform strategy and for all $i \in (\obar{X_+} \cup
\obar{X_-})$, both $\cs_i^\emptyset$ and $\cs_i^\forall$ are dominant
uniform strategies.

Since the goal formulas are guarded, we have for all $i \in N$ and
for all $v \in V$, $v, \cs(v) \models \gamma_i$ iff $w, \cs(v)
\models \gamma_i$ for all $w: v \sim_i w$. Also, for all $i \in X_+
\cap X_-$, with $\mathit{type}(i) \leq 2$, we have that $u_i(v,\cs(v))=u_i(v, (\cs_i(v),s'_{-i}))$ for all $s'_{-i} \in
S_{-i}$. It then follows that $\cs \in \Ne_{\mathsf{max}}(G)$.
\end{proof}

\subsection{Existence of Maximal Nash Equilibrium in Knowing-Whether Games}

For the subclass of $\Kw$ games, we show that Theorem \ref{thm:existsNEmax} can be
strengthened. 
We argue that if $G$ is a $\Kw$ game where for all $i \in N$, $|\mathit{type}(i)| \leq 3$ then
the output of Algorithm~\ref{alg:NEmaxkw} is a
globally uniform strategy profile $\cs$ such that $\cs \in
\Ne_{\mathsf{max}}(G)$.


  \begin{algorithm}[ht]
\caption{\label{alg:NEmaxkw}}
\KwIn{A $\Kw$ game $G=(N,(P_i)_{i \in N}, (\gamma_i)_{i \in N})$.}
\KwOut{A uniform strategy profile $\cs \in \Ne_{\mathsf{max}}(G)$.}

Let $W_o: =\obar{X_+} \cup \obar{X_-} \cup W_l$;

$\forall i \in \obar{X_+}$, $\forall v \in V$, set
$\cs_i(v):= s_i^\emptyset$; ~~~~~~~~~~ \tcc{a dominant uniform strategy}

$\forall i \in \obar{X_-}\setminus \obar{X_+}$, 
$\forall v \in V$, set $\cs_i(v):=s_i^\forall$; ~~ \tcc{a dominant uniform strategy}

$\forall i \in W_l$, if $\exists s \in S$, and $\exists v \in V$ such
that $u_i(v,s)=1$ then $\forall w \in V$ set $\cs_i(w):=s_i$ else set
$\cs_i(w):=s_i^\emptyset$; ~~ \tcc{$u_i$ does not depend on others' choice}

$\forall i \in W_{+}$, $\forall j \in W_{-}$, $\forall v \in V$ set
$\cs_i(v):=s_i^\emptyset$; $\cs_j(v):=s_j^\forall$; ~~ \tcc{initialisation}

Set $Y:=\emptyset$; $Z:=\emptyset$;

\Repeat{$Y=Y'$ and $Z=Z'$}{
set $Y':=Y$; $Z':=Z$;

\tcc{process players who make positive assertions about
    variables controlled by others}

\While{ $\exists i \in W_+\setminus Y(v)$, $\exists s_i$ and $\exists
  v \in V$ such that $\forall s_{W_{-}\setminus Z}$, we have
  $v (s_i,\cs_{W_{+}\setminus\{i\}}(v),\cs_{Z}(v),s_{W_{-}\setminus
    Z},\cs_{W_o}(v))\models \gamma_i$}
      {$\forall w \in V$, set
        $\cs_i(w):=s_i$;
        $ Y :=Y \cup \{i\}$; }
      
\tcc{process players who make negative assertions about
    variables controlled by others}
      
\While{ $\exists i \in W_-\setminus Z$, $\exists s_i$, $\exists v \in
  V$, such that $\forall s_{W_{+}\setminus Y}$, we have $v
  (s_i,\cs_{W_{-}\setminus\{i\}}(v),\cs_{Y}(v),s_{W_{+}\setminus
    Y},\cs_{W_o}(v)) \models \gamma_i$}
      {$\forall w \in V$, set $\cs_i(w):=s_i$;
        $Z:=Z \cup \{i\}$; } }

$\forall i \in W_+ \setminus Y$, $\forall j \in W_{-}\setminus Z$,
$\forall v \in V$, set $\cs_j(v)(i):= \emptyset$;

$\forall i \in W_- \setminus Z$, $\forall j \in
W_{+}\setminus Y$, $\forall v \in V$, set $\cs_j(v)(i):= P_j$;

{\bf return} $\cs$;
  \end{algorithm}

  \begin{lemma}
  \label{lm:NEforall}
  Algorithm~\ref{alg:NEmaxkw} always terminates and it satisfies the following properties.
  \begin{itemize}
  \item After each iteration of the while loops, steps 9-10 and steps
    11-12, the strategy profile $\cs$ constructed is a globally
    uniform strategy profile.
    \item The strategy profile $\cs$ which is the output of
      Algorithm~\ref{alg:NEmaxkw} is a globally uniform strategy
      profile.
  \end{itemize}    
\end{lemma}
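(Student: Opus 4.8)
The plan is to establish the three assertions in the order: termination, the global-uniformity invariant maintained across the \textbf{while} loops, and global uniformity of the returned profile. Throughout I would work under the hypothesis $|\mathit{type}(i)|\le 3$ under which the algorithm is invoked: exactly as recorded in the proof of Theorem~\ref{thm:existsNEpess}, this guarantees that $\obar{X_+}$, $\obar{X_-}\setminus\obar{X_+}$, $W_l$, $W_+$ and $W_-$ partition $N$, so every player is assigned a strategy during initialization (steps 2--5) and the output is well defined.

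For termination I would observe that the only possible source of non-termination is the two inner loops and the outer \textbf{repeat}. Each execution of the body of the first \textbf{while} loop (step 10) adds a fresh element to $Y$, since its witness $i$ is drawn from $W_+\setminus Y$; symmetrically each execution of step 12 adds a fresh element to $Z$. As $Y\subseteq W_+$ and $Z\subseteq W_-$ are finite and neither set ever shrinks, both inner loops halt. For the outer \textbf{repeat} loop, $Y$ and $Z$ are monotonically non-decreasing across iterations, and the loop re-enters only when $Y\neq Y'$ or $Z\neq Z'$, i.e.\ only when at least one element was added during the iteration. Since $|Y|+|Z|$ is bounded by $|W_+|+|W_-|$, this happens only finitely often, so the algorithm terminates.

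For the uniformity claims I would prove the invariant that \emph{at every point after initialization, $\cs_i$ is constant over $V$ for every $i\in N$}, i.e.\ $\cs\in\CS^g$. The base case is initialization: steps 2, 3 and 5 assign the same strategy ($s_i^\emptyset$ or $s_i^\forall$) at every $v$, and step 4 assigns either the fixed witness $s_i$ or $s_i^\emptyset$ uniformly for all $w$; by the partition remark every player is covered. For the inductive step, each \textbf{while}-loop body (steps 10 and 12) performs ``for all $w\in V$, set $\cs_i(w):=s_i$'' with $s_i$ the \emph{single} strategy supplied by the existential in the loop guard, so $\cs_i$ is again constant over $V$ while all other components are untouched. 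This yields the first itemized claim, namely that $\cs$ is globally uniform after each \textbf{while}-loop iteration. Finally, steps 13 and 14 overwrite $\cs_j(v)(i):=\emptyset$, respectively $\cs_j(v)(i):=P_j$, uniformly over all $v\in V$; since they change the same component of $\cs_j$ in the same way at every valuation, $\cs_j(v)$ stays independent of $v$, so global uniformity survives to the output.

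The genuinely delicate point, which I would flag as the crux rather than the routine bookkeeping, is \emph{why} the loop guards may be tested at a single witness $v$ yet the chosen strategy be committed globally (``for all $w\in V$''). This is legitimate precisely because $G$ is a $\Kw$ game: by Proposition~\ref{lemma.ii} the truth of any goal $\gamma_i$ under a strategy profile is independent of the valuation, so satisfaction at one $v$ is satisfaction everywhere. This is exactly the feature that keeps Algorithm~\ref{alg:NEmaxkw} inside $\CS^g$, in contrast with Algorithm~\ref{alg:NEpess}, whose valuation-indexed $Y(v),Z(v)$ and information-set-local assignments produce profiles that are uniform but not globally uniform.
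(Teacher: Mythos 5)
Your proposal is correct and follows essentially the same route as the paper's proof: termination via the monotone growth of the finite sets $Y$ and $Z$ across the loops, and global uniformity as an invariant established at initialization, preserved by the ``for all $w \in V$'' assignments in the while-loop bodies, and preserved again by the final valuation-independent overwrites. Your additional remark on why a single-valuation witness in the guard justifies a global commitment (via Proposition~\ref{lemma.ii}) is not needed for the lemma as stated, but it correctly anticipates the role the $\Kw$ hypothesis plays in Theorem~\ref{thm:existskwNE}.
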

\begin{proof}
  First, note that in each iteration of the outer loop in
  Algorithm~\ref{alg:NEmaxkw} (steps 7 - 13), the size of the set $Y$
  or $Z$ strictly increases.  Therefore Algorithm~\ref{alg:NEmaxkw}
  always terminates.

  At the end of the initialization steps (2 - 6), $\cs \in
  \CS^g$ by definition. So it suffices to argue that at the end
  of each iteration of the two While loops (steps 9 - 10 and 11 - 12),
  the following invariant is maintained: $\cs \in \CS^g$. We can
  argue by induction on the number of iterations of the while loops
  (steps 7 - 13). The claim follows from the definition of the
  assignment statements: steps 10 and 12.

  Thus on termination of the outer loop (steps 7 - 13) we have that
  $\cs \in \CS^g$.  It follows from the definition of lines 14
  and 15 that the output of Algorithm~\ref{alg:NEmaxkw}, $\cs \in
  \CS^g$.
\end{proof}


\begin{restatable}{theorem}{thmexistskwne}
\label{thm:existskwNE}
Let $\obsgame=(N,(P_i)_{i \in N}, (\gamma_i)_{i \in N})$ be a $\Kw$
 game.
If for all $i \in N$, $|\mathit{type}(i)| \leq 3$ then
$\Ne_{\mathsf{max}}(G) \neq \emptyset$.
\end{restatable}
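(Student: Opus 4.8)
The plan is to run Algorithm~\ref{alg:NEmaxkw} and verify that its output is a maximal Nash equilibrium. By Lemma~\ref{lm:NEforall} the algorithm terminates and returns a globally uniform profile $\cs = \dot{s} \in \CS^g$. For a $\Kw$ game and a globally uniform profile, valuation-independence (Prop.~\ref{lemma.ii}) gives $\uu_i(v,\dot{s}) = u_i(s)$ for every $v$, so by Prop.~\ref{prop:NEST} we have $\cs \in \Ne_{\mathsf{max}}(G)$ exactly when $s$ is a Nash equilibrium of the pointed game: whenever $u_i(s)=0$, no deviation $s_i'$ should yield $u_i(s_i',s_{-i})=1$. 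A deviation by $i$ changes only the atoms $\Kw_j p_i$, i.e.\ the $c+/c-$ part of $\gamma_i$; the $+/-$ part (foreign atoms $\Kw_j p_k$ with $k\neq i$) is fixed by the other players. The hypothesis $|\mathit{type}(i)|\le 3$ enters precisely here: it rules out a goal carrying both polarities on foreign variables together with both polarities on the own variable, so that $\obar{X_+}$, $\obar{X_-}\setminus\obar{X_+}$, $W_l$, $W_+$ and $W_-$ partition $N$ (a type-$\{+,-,c+,c-\}$ player would be left uncovered, and is exactly the obstruction realised in Example~\ref{ex:KWnoNEpess}).

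For the dominant-strategy blocks I would argue directly. If $i\in\obar{X_+}$ then every own-variable atom of $\gamma_i$ occurs negated, so $\gamma_i$ is antitone in the truth values $(\Kw_j p_i)_j$ and $s_i^\emptyset$ (which falsifies all of them) is a best response whatever the others do; dually $s_i^\forall$ is dominant for $i\in\obar{X_-}\setminus\obar{X_+}$. If $i\in W_l$ then $\gamma_i$ mentions only $p_i$, so $u_i$ depends on $s_i$ alone, and the strategy chosen in line~4 realises $\gamma_i$ whenever it is satisfiable and is therefore dominant. None of these players can improve by deviating.

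The substance is the $W_+/W_-$ fixpoint (steps 6--15), and here I expect the real difficulty. The guiding fact is polarity-monotonicity: a $W_+$ goal (type $\{+,c+,c-\}$, no $-$) is monotone \emph{increasing} in every revelation of a foreign variable, while a $W_-$ goal is monotone \emph{decreasing}. I would prove two invariants, both phrased via $\succcurlyeq$. First, \emph{satisfied players stay satisfied}: when $i$ enters $Y$ its goal holds against \emph{all} strategies of the not-yet-satisfied $W_-$ players; thereafter cooperative $W_+$ revelations only increase (so increasing-monotonicity preserves $\gamma_i$), each newly fixed $W_-$ strategy is an instance of that universal quantifier, and lines~14--15 move only foreign atoms of $\gamma_i$ in the favourable direction, so $u_i(s)=1$ persists; the $W_-$ case is dual. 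Second, \emph{unsatisfied players cannot deviate profitably}: for $i\in W_+\setminus Y$ the negated loop guard gives, for every $s_i$, some falsifying choice of the unsatisfied $W_-$ players, and by increasing-monotonicity the least-revealing choice also falsifies; line~14 installs exactly this ($\cs_j(v)(i):=\emptyset$ for $j\in W_-\setminus Z$), so $\gamma_i$ fails under every $s_i$; dually line~15 installs the maximal revelation $P_j$ for $i\in W_-\setminus Z$.

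I expect the main obstacle to be the first invariant: verifying that a player once placed in $Y$ (resp.\ $Z$) is never subsequently falsified as the remaining $W_+/W_-$ assignments grow and as lines~14--15 fire. The clean way to secure it is to cast the entire analysis as $\succcurlyeq$-monotonicity of $\Kw$-goals in foreign revelations, and to check that every reassignment the algorithm performs moves the relevant coordinates in the goal-preserving direction — increasing for members of $Y$, decreasing for members of $Z$ — which is precisely what the universal quantifiers in the loop guards and the polarities installed in lines~14--15 guarantee.
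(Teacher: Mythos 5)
Your proposal is correct and follows essentially the same route as the paper's proof: it runs Algorithm~\ref{alg:NEmaxkw}, invokes Lemma~\ref{lm:NEforall} for termination and global uniformity, uses the partition into $\obar{X_+}$, $\obar{X_-}\setminus\obar{X_+}$, $W_l$, $W_+$, $W_-$ with the same dominance arguments, and establishes the same $\succcurlyeq$-monotonicity invariants for the $W_+/W_-$ fixpoint and the role of the final reassignment lines. The only cosmetic difference is that you make the reduction to the pointed game via Prop.~\ref{prop:NEST} and valuation-independence explicit up front, where the paper instead uses the fact that $\gamma_i\in L^\Kw$ pointwise inside the loop analysis.
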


\begin{proof}
We argue that the output of Algorithm~\ref{alg:NEmaxkw} is a
globally uniform strategy profile $\cs$ such that $\cs \in
\Ne_{\mathsf{max}}(G)$. As in the case of
Theorem~\ref{thm:existsNEpess}, note that the sets $\obar{X_+}$,
$\obar{X_-} \setminus \obar{X_+}$, $W_l$, $W_+$ and $W_-$ form a
partition of $N$.

By Lemma~\ref{lm:NEforall}, Algorithm~\ref{alg:NEmaxkw} always
terminates. Let $\cs$ be the profile constructed by
Algorithm~\ref{alg:NEmaxkw}. To prove the claim, it suffices to argue
that $\cs \in \Ne_{\mathsf{max}}(G)$.

Note that for all $i \in \obar{X_+}$, $\cs_i^\emptyset$ is a dominant
uniform strategy and for all $i \in \obar{X_-}$, $\cs_i^\forall$ is a
uniform strategy that is dominant. Therefore, for all $v \in V$, for
all $i \in \obar{X_+} \cup \obar{X_-}$, $u_i(v,\cs(v)) \geq
u_i(v,(s_i',\cs_{-i}(v)))$.

For all $i \in W_l$ we have that $u_i(v,\cs(v))=u_i(v,
(\cs_i(v),s'_{-i}))$ for all $s'_{-i} \in S_{-i}$. Since the goals are
knowing whether formulas, we have if there exists $v \in V$ and there
exists $s \in S$ such that $u_i(v,s)=1$ then for all $w \in V$, for
all $s_{-i}' \in S_{-i}$, $u_i(w, (s_i, s_{-i}'))=1$. Therefore, for
all $v \in V$, for all $i \in W_l$, $u_i(v,\cs(v)) \geq
u_i(v,(s_i',\cs_{-i}(v)))$.

Now consider a player $i \in W_{+}$. For $v \in V$, suppose $\cs_i(v)$
is assigned a value in the while loop (steps 9-10). Let $\cs^k$ denote
the resulting strategy profile after this assignment (step 10). Let
$Z^k$ denote the value of $Z$ in the corresponding iteration. By
Lemma~\ref{lm:NEforall}, we have $\cs^k \in \CS^g$ and by definition
of the while loop, there exists $v$ such that for all $s_{W_{-}
  \setminus Z^k}$,
$v,(s_i,\cs^k_{W_{+}\setminus\{i\}}(v),\cs_{Z^k},s_{W_{-}\setminus
  Z^k},\cs_{W_o}(v))\models \gamma_i$. By
Lemma~\ref{lm:NEforall} and the fact that $\gamma_i \in L^\Kw$
it follows that for all $w$, for all $s_{W_{-} \setminus Z^k}$,
$v,(s_i,\cs^k_{W_{+}\setminus\{i\}}(v),\cs_{Z^k},s_{W_{-}\setminus
  Z^k},\cs_{W_o}(v))\models \gamma_i$. Since $i \in W_{+}$, this
implies that for all $w \in V$, for all $t_{W_{+}\setminus\{i\}} \in
S_{W_{+}\setminus\{i\}}$ such that $t_{W_{+}\setminus\{i\}}
\succcurlyeq$
$s_{W_{-} \setminus Z^k}$,
$w,(s_i,t_{W_{+}\setminus\{i\}},\cs_{Z^k}(w),s_{W_{-}\setminus
  Z^k(w)},\cs_{W_o}(w)) \models \gamma_i$.
By definition, $\cs_{W_+\setminus \{i\}}(w) \succcurlyeq
\cs^k_{W_+\setminus \{i\}}(w)$ and for all $j \in Z^k$,
$\cs_j^k(w)=\cs_j(w)$.  
Thus we have\\
$w, (s_i, \cs_{W_{+}\setminus\{i\}}, \cs_{Z}(w), s_{W_{-}\setminus
  Z}, \cs_{W_o}(w)) \models \gamma_i$. Therefore, $u_i(w,\cs)=1$ for all
$w \in V$.

Consider a player $i \in W_{-}$. For $v \in V$, suppose $\cs_i(v)$ is
assigned a value in the while loop (steps 11-12). Let $\cs^k$ denote
the resulting uniform strategy profile after this assignment (step 12)
and $Y^k$ denote the value of $Y$ in the corresponding iteration. By
Lemma~\ref{lm:NEforall}, $\cs^k \in \CS^g$. By definition of the while
loop, there exists $v$ such that for all $s_{W_{+} \setminus Y^k}$,
$v,(s_i,\cs^k_{W_{-}\setminus\{i\}}(v),\cs_{Y^k}(v),s_{W_{+}\setminus
  Y^k},\cs_{W_o}(w)) \models \gamma_i$. By
Lemma~\ref{lm:NEforall} and the fact that $\gamma_i \in L^\Kw$,
we have for all $w$ such that for all
$s_{W_{+} \setminus Y^k}$,
$w,(s_i,\cs^k_{W_{-}\setminus\{i\}}(w),\cs_{Y^k}(w),s_{W_{+}\setminus
  Y^k},\cs_{W_o}(w)) \models \gamma_i$.  Since $i \in W_{-}$, this
implies that for all $t_{W_{-}\setminus\{i\}} \in
S_{W_{-}\setminus\{i\}}$ such that $\cs^k_{W_{-}\setminus\{i\}}(w)
\succcurlyeq t_{W_{-}\setminus\{i\}}$, for all $s_{W_{+} \setminus
  Y^k(w)}$,
$w,(s_i,t_{W_{-}\setminus\{i\}},\cs_{Y^k(w)}(w),s_{W_{+}\setminus
  Y^k(w)},\cs_{W_o}(w))\models \gamma_i$.
By definition, $\cs^k_{W_-\setminus \{i\}}(w) \succcurlyeq
\cs_{W_-\setminus \{i\}}(w)$.  Thus
$w,(s_i,\cs_{W_{-}\setminus\{i\}},\cs_{Y}(w),s_{W_{+}\setminus
  Y(w)},\cs_{W_o}(w)) \models \gamma_i$. Therefore, $u_i(w,\cs)=1$ for all
$w \in W$.

Now suppose there exists $i \in W_{+}$ such that $i \notin Y$ (on
termination of the repeat loop, steps 7-13). By definition, for all
$s_i$, for all $v \in V$, there exists $t_{W_{-}\setminus Z}$ such
that
$v,(s_i,\cs_{W_{+}\setminus\{i\}}(v),\cs_{Z}(v),t_{W_{-}\setminus
  Z},\cs_{W_o}(v)) \not\models \gamma_i$. Since $i \in W_+$ and
$s_j(v)(i) = \emptyset$ for all $j \in W_-\setminus Z$, it follows
that for all $s_i$, for all $v \in V$ 
$v,(s_i,\cs_{W_{+}\setminus\{i\}}(v),\cs_{Z}(v),s_{W_{-}\setminus
  Z},\cs_{W_o}(v)) \not\models \gamma_i$.

Suppose there exists $v \in V$ and $i \in W_{+}$ such that $i \notin
Z$. Using a similar proof as above and using the fact that
$s_j(v)(i) = \emptyset$ for all $j \in W_-\setminus Z(v)$ we can argue
that for all $s_i$, for all $v \in V$, 
$v,(s_i,\cs_{W_{-}\setminus\{i\}}(w),\cs_{Y}(w),s_{W_{+}\setminus
  Y},\cs_{W_o}(w))\not\models \gamma_i$.
It follows that $\cs \in \Ne_{\mathsf{max}}(G)$.
\end{proof}

\noindent Examples~\ref{prop:obs3nostable} and \ref{prop:kw4nostable}
show that Theorems \ref{thm:existsNEmax} and \ref{thm:existskwNE} are
tight.

  \begin{example}
    \label{prop:obs3nostable}
Consider the two-player game where $N=\{1,2\}$, $P_1=\{p\}$ and
$P_2=\{q^1,q^2,q^3\}$. Let $\gamma_1 = (\Kw_1 q^2 \wedge \Kw_2 p) \vee (\Kw_1 q^3 \wedge
  \neg \Kw_2 p)$ and  $\gamma_2 = (q^1 \to \Kw_1 q^2) \wedge (\neg q^1 \to \Kw_1 q^3) \wedge (\neg \Kw_1 q^2 \vee \neg \Kw_1 q^3)$.
Note that in this game, $|\mathit{type}(1)| =3$ and
$|\mathit{type}(2)| = 2$. The goal of player 1 is a $\Kw$
formula.
It can be verified that $\Ne_{\mathsf{max}}(G) = \emptyset$.
  \end{example}

  \begin{example}
    \label{prop:kw4nostable}
Consider the two-player game where $P_1=\{p_1,q_1\}$ and
$P_2=\{p_2\}$. Let the goal formulas be: $\gamma_1 = (\neg \Kw_1
p_2 \imp (\Kw_2 p_1 \wedge \neg \Kw_2 q_1)) \wedge (\Kw_1 p_2 \imp
(\neg \Kw_2 p_1 \wedge \Kw_2 q_1))$ and $\gamma_2 = (\Kw_2
p_1 \wedge \Kw_1 p_2) \vee (\Kw_2 q_1 \wedge \neg \Kw_1 p_2)$.
In this game, both goals are $\Kw$ formulas. We have
$|\mathit{type}(1)| =4$ and $|\mathit{type}(2)| = 3$. It can be
verified that $\Ne_{\mathsf{max}}(G) = \emptyset$.
  \end{example}

\section{Representation and Complexity}
\label{sec:complexity}


For an observation game $\obsgame = (N,(P_i)_{i \in N}, (\gamma_i)_{i
  \in N})$, let $|N|=n$, $|P|=k$ and $\max_{i \in N}|\gamma_i| = m$
(where $|\gamma_i|$ denotes the number of symbols in $\gamma_i$). For
$i \in N$, every strategy $s_i: N \to \powerset(P_i)$, can be
represented in size $\bigO(nk)$. That is, both observation games and
strategies have compact representations --- linear in $n$, $k$ and $m$.

On the other hand, each uniform strategy $\cs_i: V \to S_i$ can be
encoded as a tuple of Boolean functions $(\cs_i^j(p_i))_{j \in N, p_i
  \in P_i}$ where each $\cs_i^j(p_i): \powerset(P) \to \{\top,
\bot\}$. Here $\cs_i^j(p_i)(v) = \top$ is viewed as player $i$
revealing the variable $p_i$ to player $j$ under the valuation
$v$. We assume that the Boolean function $\cs_i^j(p_i)$ is represented
as a propositional formula $\beta_i^j(p_i)$ over the propositions
$P$. It is well known that every such Boolean function can be
represented as a propositional formula, in the worst case the size of
$\cs_i^j(p_i)$ can be exponential in $k$.

In this section we address the computational complexity of the
following two basic algorithmic questions.

\begin{itemize}
  \item {\bf Verification.} Given an observation game $\obsgame$ and a
    uniform strategy profile $\cs \in \CS$, is $\cs \in
   \Ne_{\mathsf{x}}(G)$ for an outcome relation $\mathsf{x} \in
    \{\mathsf{pess},\mathsf{opt}, \mathsf{max}\}$?
    
\item {\bf Emptiness.}  Given an observation game $\obsgame$ is
  $\Ne_{\mathsf{x}}(G) = \emptyset$ for an outcome relation $\mathsf{x} \in
    \{\mathsf{pess},\mathsf{opt}, \mathsf{max}\}$?
\end{itemize}

We show that the verification and emptiness questions are
PSPACE-complete and NEXPTIME-complete respectively
for the
  maximal outcome relation. We also show that for the pessimist and
  optimist outcome relations, the verification and emptiness questions
  are in PSPACE and NEXPTIME respectively.
To obtain these results it
is crucial to establish the complexity of the model checking
problem of the logic $L^K$. The following result shows that the model
checking problem is PSPACE-complete. It follows directly from
Proposition 2 in \cite{AgotnesHHW13}\footnote{We thank Paul
Harrenstein for providing us an unpublished full version of
\cite{AgotnesHHW13} which includes a proof of Proposition 2. For the
sake of completeness, we give a full proof of Theorem \ref{thm:mc} in
the Appendix.}.

\begin{restatable}{theorem}{thmmc}
\label{thm:mc}
  Given $\alpha \in L^K$ along with a strategy profile $s \in S$ and a
  valuation $v \in V$, checking if $v, s \models \alpha$ is PSPACE-complete.
\end{restatable}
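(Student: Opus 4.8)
The plan is to prove both directions: membership in PSPACE by a recursive model-checking procedure that never materialises the exponentially large set of valuations, and PSPACE-hardness by a polynomial reduction from the truth of quantified Boolean formulas (TQBF).

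For membership, I would implement the satisfaction clauses directly as a recursive routine $\mathsf{Check}(\alpha,w)$ deciding $w,s\models\alpha$ for the fixed profile $s$. The atomic case tests $p_i\in w$; negation and disjunction recurse on the immediate subformulas, reusing space. The only delicate case is $K_i\beta$: here I loop over all $2^{|P|}$ valuations $w'$, each held in a $|P|$-bit counter, skip those with $w\not\sim_i^s w'$, and for the rest recursively evaluate $\mathsf{Check}(\beta,w')$, returning true iff all succeed. Deciding $w\sim_i^s w'$ reduces to computing $P_i(s)=\{p_j\mid p_j\in s_j(i)\}$ from the compact profile and comparing $w\cap P_i(s)$ with $w'\cap P_i(s)$, which is polynomial. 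Only modalities deepen the recursion, so the stack depth is at most $|\alpha|$ and each frame stores a current valuation plus a loop counter, i.e.\ $\bigO(|P|)$ bits; hence the procedure runs in space $\bigO(|\alpha|\cdot|P|)$, placing the problem in PSPACE. (Equivalently, the recursion is an alternating polynomial-time algorithm, and $\mathrm{APTIME}=\mathrm{PSPACE}$.)

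For hardness, take a TQBF instance $\Psi=Q_1 x_1\cdots Q_n x_n\,\psi$ with $\psi$ propositional. I would build players $N=\{0,1,\dots,n\}$ with a single \emph{data owner}: $P_0=\{p_1,\dots,p_n\}$ and $P_i=\emptyset$ for $i\geq 1$. Define $s$ by $s_0(i)=\{p_1,\dots,p_n\}\setminus\{p_i\}$ for each $i\in\{1,\dots,n\}$ (with $s_0(0)=P_0$ as required, and $s_i(\cdot)=\emptyset$ for $i\geq 1$). The design point is that a player always observes its own variables, so a variable must be owned elsewhere to be quantifiable; with this profile $P_i(s)=\{p_1,\dots,p_n\}\setminus\{p_i\}$, so $\sim_i^s$ fixes every variable except $p_i$. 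Thus $K_i$ acts as ``for both values of $p_i$'' and $\M_i$ as ``for some value of $p_i$'', each leaving the other variables untouched. Translate $\Psi$ to $\hat\Psi=\mathsf{Q}_1\cdots\mathsf{Q}_n\,t(\psi)\in L^K$, where $\mathsf{Q}_i=K_i$ if $Q_i=\forall$ and $\mathsf{Q}_i=\M_i$ if $Q_i=\exists$, and $t$ replaces each $x_i$ by $p_i$ and commutes with the connectives. I would then show $\emptyset,s\models\hat\Psi$ iff $\Psi$ is true, by reverse induction on the quantifier prefix: evaluating $\mathsf{Q}_i\cdots\mathsf{Q}_n\,t(\psi)$ at a valuation $w$ that fixes $p_1,\dots,p_{i-1}$ to the outer quantifiers' choices matches the truth of $Q_i x_i\cdots Q_n x_n\,\psi$ under that partial assignment, since $\sim_i^s$ varies only $p_i$ while preserving the already-chosen values $p_1,\dots,p_{i-1}$ (which player $i$ observes). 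The reduction is plainly polynomial.

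The main obstacle is the hardness side, specifically arranging the observation relations so that each modality quantifies over exactly one variable. The constraint $s_i(i)=P_i$ means a player can never forget its own variables, which is why all quantified variables are routed through the dedicated owner; one must then verify that the nested modalities peel the variables off one at a time and that inner quantifiers correctly see the outer choices, reproducing the dependency order of TQBF. The membership direction is essentially routine once one resists building the induced Kripke model explicitly.
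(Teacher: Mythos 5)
Your proof is correct and follows essentially the same strategy as the paper's: a space-bounded recursive evaluation for membership, and for hardness a reduction from QBF in which the observation relations are arranged so that $\sim_i^s$ varies exactly the variable $x_i$, turning $K_i$ into ``for all values of $x_i$'' and $\M_i$ into ``for some value of $x_i$''. The only substantive difference is the gadget: the paper avoids your dedicated data-owner by using $n$ players with cyclic ownership $P_i=\{x_{i^+}\}$, each revealing its variable to everyone except player $i^+$, which yields the same effect $P_j(s)=P\setminus\{x_j\}$. One small caution: your construction gives players $1,\dots,n$ empty observation sets, whereas the paper requires $(P_i)_{i\in N}$ to partition $P$ (and in all of its own reductions keeps every $P_i$ nonempty); if empty blocks are disallowed, this is repaired by assigning each player $i\ge 1$ a fresh dummy variable that does not occur in the formula and is never revealed, which does not disturb the argument.
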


It is well known that the model checking problem for epistemic logic
formulas over Kripke structures (for example, formulas of multi-agent
S5) can be solved in polynomial time \cite{FHMV95,HV91}.  Note that in
our setting, a Kripke structure is not explicitly part of the input,
rather the underlying relational structure is compactly presented in
terms of the valuation $v$ and strategy $s$. This is the reason for
PSPACE-hardness of the model checking problem.


\subsection{Verification}

In the rest of this section, we refer to
valuations over various sets of variables and therefore find it
convenient to use the following notations. Let $A$ be a finite set of
variables. We use $V(A)$ to denote the set of all valuations over
$A$.

\begin{theorem}
  \label{thm:compVerifNeMax}
  Given an observation game $\obsgame = (N,(P_i)_{i \in N},
  (\gamma_i)_{i \in N})$ and a uniform strategy profile $\cs \in \CS$,
  checking if $\cs \in \Ne_{\mathsf{max}}(G)$ is PSPACE-complete.
\end{theorem}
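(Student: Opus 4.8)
The plan is to prove membership in PSPACE and PSPACE-hardness separately, in both cases leaning on the model-checking result (Theorem~\ref{thm:mc}) and on the characterisation of maximal equilibria given by Proposition~\ref{prop:NEST}.

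For membership, I would rewrite the condition using Proposition~\ref{prop:NEST}: $\cs \in \Ne_{\mathsf{max}}(G)$ iff for all $w \in V$ we have $\cs(w) \in \Ne(G(w))$, that is, iff for all $w \in V$, all $i \in N$ and all $s_i' \in S_i$, $u_i(w,\cs(w)) \geq u_i(w,(s_i',\cs(w)_{-i}))$. Every quantified object here is small: $w$ is $k$ bits, $i$ is $\log n$ bits, and each candidate deviation $s_i'$ has size $\bigO(nk)$. Given $w$, the profile $\cs(w)$ is computable in polynomial time by evaluating the propositional encodings $\beta_i^j(p_i)$ of $\cs$ at $w$, and each of the two outcomes $u_i(w,\cdot)$ is just the model-checking query $w,\cdot \models \gamma_i$, decidable in PSPACE by Theorem~\ref{thm:mc}. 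Iterating over all triples $(w,i,s_i')$ while reusing space, and rejecting as soon as a profitable deviation is detected, therefore uses only polynomial space; equivalently, the condition is a polynomially-bounded universal quantification over a PSPACE-decidable matrix, hence it lies in PSPACE.

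For hardness I would reduce from the model-checking problem of $L^K$, which is PSPACE-hard by Theorem~\ref{thm:mc}. Given an instance $(\alpha, s, v_0)$ over players $N$ and variables $P$, I introduce a fresh player $\star$ with a single fresh variable $p_\star$, fix an arbitrary original player $j_0 \in N$, and build the game $G$ whose players are $N \cup \{\star\}$, whose original goals are all replaced by a tautology, and whose new goal is $\gamma_\star := \delta_{v_0} \et \neg\alpha \et \Kw_{j_0} p_\star$, where $\delta_{v_0} := \Et_{p \in v_0} p \et \Et_{p \in P \setminus v_0} \neg p$ pins down $v_0$ on the original variables. The uniform strategy profile $\cs$ is taken globally uniform, with every original player playing its component of $s$ and with $\star$ hiding $p_\star$. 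This is polynomial-time computable and $\gamma_\star \in L^K$.

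The correctness claim is that $\cs \in \Ne_{\mathsf{max}}(G)$ iff $v_0,s \models \alpha$. Since the original goals are tautologies, those players always obtain outcome $1$ and hence never have a profitable deviation, so by Proposition~\ref{prop:NEST} the only possible obstruction to $\cs$ being a maximal equilibrium is a profitable deviation of $\star$. Under $\cs$ nobody reveals $p_\star$, so $\Kw_{j_0}p_\star$ is false and $\star$ always scores $0$; the guard $\delta_{v_0}$ forces any winning deviation to occur at a valuation $w$ with $\restr{w}{P} = v_0$, and such a deviation (revealing $p_\star$ to $j_0$) yields outcome $1$ exactly when $\neg\alpha$ holds there. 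The main obstacle, and the key lemma I would prove by induction on formula structure, is that letting some original player additionally observe $p_\star$ does not change the truth value of any subformula of $\alpha$: because $\alpha$ does not mention $p_\star$ and the original players all play $s$, the relations $\sim_i^{\text{profile}}$ project onto $P$ in exactly the same way whether or not $\star$ reveals $p_\star$, so $w,(s_\star',\cs(w)_{-\star}) \models \alpha$ iff $v_0,s \models \alpha$ for every $w$ with $\restr{w}{P} = v_0$. Consequently $\star$ has a profitable deviation iff $v_0,s \not\models\alpha$, which gives $\cs \in \Ne_{\mathsf{max}}(G)$ iff $v_0,s \models \alpha$ and completes the reduction.
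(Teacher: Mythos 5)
Your proof is correct. The membership half is essentially the paper's argument: both reduce the question to a polynomially-bounded (co-non)deterministic search over witnesses $(w,i,s_i')$ of polynomial size, with the matrix decided by the PSPACE model-checking procedure of Theorem~\ref{thm:mc}, and both conclude via closure of PSPACE under complement and $\text{NPSPACE}=\text{PSPACE}$; your explicit use of Proposition~\ref{prop:NEST} to phrase this as a pointwise condition is a presentational difference only. The hardness half reduces from the same source problem ($L^K$ model checking) with the same global plan — make all but one player trivially satisfied and tie that player's ability to deviate profitably to the truth of $\alpha$ at $(v_0,s)$ — but the gadget differs. The paper keeps the distinguished player inside the original game and gives it the goal $\delta_v \wedge (\alpha \vee \mathit{flip})$, where $\mathit{flip}$ is a knowing-whether formula about that player's own revelation chosen to be false under $s$ but true under a unilateral deviation; so the player is already satisfied iff $\alpha$ holds, and otherwise can switch on the $\mathit{flip}$ disjunct. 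You instead add a fresh player $\star$ with goal $\delta_{v_0}\wedge\neg\alpha\wedge \Kw_{j_0}p_\star$, who is never satisfied under $\cs$ and can become satisfied exactly when $\neg\alpha$ holds. Both yield the same equivalence $\cs\in\Ne_{\mathsf{max}}(G)$ iff $v_0,s\models\alpha$. The price of your variant is the conservativity lemma you correctly single out — that enlarging the vocabulary by $p_\star$ and letting $\star$ reveal it does not perturb the truth of $\alpha$, proved by induction using the fact that the observation relations project onto $P$ unchanged; note that the paper's construction also silently extends the vocabulary by a fresh variable $q$ and therefore needs the same (unstated) observation, so making it explicit is a small gain in rigor. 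Your construction also avoids the paper's case split on whether $p_1\in s_1(2)$, at the cost of one extra player.
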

\begin{proof}
We can argue that the complement of the problem is in PSPACE. That is,
given $\obsgame$ and $\cs \in \CS$, the problem is to verify if $\cs
\not\in \Ne_{\mathsf{max}}(G)$.  This can be solved with the following
two steps:
\begin{enumerate}
\item Guess $i \in N$, $v \in V$ and $s_i \in S_i$.
\item Verify that $u_i(v,(s_i,\cs(v)_{-i})) > u_i(v,\cs(v))$.
\end{enumerate}
For step 1 note that the size of a strategy $|s_i| = \bigO(nk)$. So
the triple $(i, v, s)$ which forms a possible witness to the fact that
$\cs \not\in \Ne_{\mathsf{max}}(G)$ has a polynomial representation. By Theorem
\ref{thm:mc} it follows that step 2 can be solved in PSPACE. Since
PSPACE is closed under complementation and NPSPACE = PSPACE due to
Savich's Theorem, the membership in PSPACE follows.

To show PSPACE-hardness, we give a reduction from the model checking
problem for $L^K$. That is, given $\alpha \in L^K$, a strategy profile
$s \in S$ and a valuation $v \in V$ we construct an observation game
$\obsgame$ and a uniform strategy $\cs$ as follows. Let
$P(\alpha)$ denote the set of variables occurring in
$\alpha$ and $p_1 \in P(\alpha)$ be an arbitrary fixed
variable. Let $q$ be a variable such that $q \not\in
P(\alpha)$.

The set of players $N=\{1,2\}$. $P_1=P(\alpha)$ and $P_2=\{q\}$. To
define the goal formulas we make use of the following notations. Let
$\delta_v$ denote the Boolean formula over $P_1$ which uniquely
characterises the valuation $v$. That is, $\delta_v := \Et_{p \in v} p
\et \Et_{p \notin v} \neg p$. For the (fixed) variable $p_1 \in
P_1$, we define the formula $\mathit{flip}(p_1)$ as follows.

$\mathit{flip}(p_1) = \begin{cases}
  \Kw_2 q &\text{ if } p_1 \not\in s_1(2),\\
  \neg \Kw_2 q & \text{ if }p_1 \in s_1(2).
\end{cases}
$

The goal formulas are then defined as:
\begin{itemize}
\item $\gamma_1 = \delta_v \wedge (\alpha \vee \mathit{flip}(p_1))$
\item $\gamma_2 = \top$.
\end{itemize}

Let $\cs$ be any uniform strategy profile such that for all $w \in
V(P_1 \cup \{q\})$ with $w \cap P_1 =v$ we have $\cs(w)=s$. Now
consider a $w \in V(P_1 \cup \{q\})$ such that $w \cap P_1 =v$.

Suppose $w, s \not\models \alpha$. By the definition of
$\mathit{flip}(p_1)$, we have that $w, \cs(w) \not \models
\mathit{flip}(p_1)$ and thus $w, \cs(w) \not\models \gamma_1$. Again, by
the definition of $\mathit{flip}(p_1)$, there exists $s'_1$ such that
$w, (s'_1, \cs_{-1}(w)) \models \mathit{flip}(p_1)$ and therefore
$u_1(w, (s'_1, \cs_{-1}(w))) > u_1(w, \cs(w))$. Thus $\cs \not\in
\Ne_{\mathsf{max}}(G)$.

Conversely, suppose $w, s \models \alpha$. Then for player 1, $w,
\cs(w) \models \gamma_1$. For all $w'\in V(P_1 \cup \{q\})$ such that
$w' \cap P_1 \neq v$, for all $s'_1 \in S_1$, $w', (s'_1, \cs_{-1}(w)
\not \models \delta_v$ and therefore, $w', (s'_1, \cs_{-1}(w')) \not
\models \gamma_1$. For player 2, for all valuations $u \in V(P_1
\cup \{q\})$, we have $u, \cs(u) \models \gamma_2$. Therefore $\cs \in
\Ne_{\mathsf{max}}(G)$.

By Theorem \ref{thm:mc} PSPACE-hardness follows, which gives the desired
result.
\end{proof}

In the case of pessimist and optimist outcome relations, the following
computational upper bounds for the verification question are
relatively straightforward. Whether matching lower bounds can be
shown is an interesting question.

\begin{theorem}
  \label{thm:compVerifNePess}
Given an observation game $\obsgame = (N,(P_i)_{i \in N},
  (\gamma_i)_{i \in N})$ and a uniform strategy profile $\cs \in \CS$,
  checking if $\cs \in \Ne_{\mathsf{x}}(G)$ is in PSPACE where
  $\mathsf{x} \in \{\mathsf{pess}, \mathsf{opt}\}$. 
\end{theorem}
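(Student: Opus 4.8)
The plan is to show that the \emph{complement} problem --- deciding whether $\cs \notin \Ne_{\mathsf{x}}(G)$ --- lies in $\mathrm{NPSPACE}$, and then invoke Savitch's theorem ($\mathrm{NPSPACE}=\mathrm{PSPACE}$) together with the closure of $\mathrm{PSPACE}$ under complementation. The crucial observation is that, although a deviation is nominally an entire uniform strategy $\cs_i' \in \CS_i$ whose propositional encoding may be exponential, only its restriction to a single information set matters. By definition $\cs \notin \Ne_{\mathsf{pess}}(G)$ iff there are $i \in N$, $v \in V$ and $\cs_i' \in \CS_i$ with $\uu_i(v,(\cs_i',\cs_{-i})) >^\pess \uu_i(v,\cs)$; since outcomes are Boolean this means $\min\uu_i(v,\cs)=0$ while $\min\uu_i(v,(\cs_i',\cs_{-i}))=1$. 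For every $w \sim_i v$ the uniformity of $\cs_i'$ forces $\cs_i'(w)=\cs_i'(v)$, so the deviation is completely described by the \emph{single} local strategy $s_i' := \cs_i'(v) \in S_i$, of size $\bigO(nk)$. Hence $\cs \notin \Ne_{\mathsf{pess}}(G)$ iff there exist $i \in N$, $v \in V$ and $s_i' \in S_i$ such that (i) $u_i(w,\cs(w))=0$ for some $w \sim_i v$, and (ii) $u_i(w,(s_i',\cs_{-i}(w)))=1$ for all $w \sim_i v$.

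First I would guess $i$, $v$ and $s_i'$ nondeterministically; this witness has polynomial size. The verification of (i) and (ii) ranges over the information set $\{w : w \sim_i v\}$, i.e.\ over all valuations agreeing with $v$ on $P_i$, of which there may be exponentially many. I would enumerate these $w$ using a counter over the free variables $P \setminus P_i$ (at most $k$ bits, hence polynomial space). For each $w$ one computes the strategy profiles $\cs(w)$ and $(s_i',\cs_{-i}(w))$ in polynomial time by evaluating the input propositional encodings $\beta_j^k(p_j)$ at $w$, and then model-checks $w,\cs(w)\models\gamma_i$ and $w,(s_i',\cs_{-i}(w))\models\gamma_i$; by Theorem~\ref{thm:mc} each such model check is in $\mathrm{PSPACE}$. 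Condition (i) is an $\exists w$ test and (ii) a $\forall w$ test, both computed by reusing the same polynomial workspace across iterations. The optimist case is entirely symmetric: with Boolean outcomes $\cs \notin \Ne_{\mathsf{opt}}(G)$ iff there are $i,v,s_i'$ with $u_i(w,\cs(w))=0$ for \emph{all} $w \sim_i v$ and $u_i(w,(s_i',\cs_{-i}(w)))=1$ for \emph{some} $w \sim_i v$, so only the $\exists/\forall$ quantifiers over the information set are swapped.

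Putting this together, the complement is solvable by a nondeterministic machine using polynomial space, so it is in $\mathrm{NPSPACE}=\mathrm{PSPACE}$; since $\mathrm{PSPACE}$ is closed under complementation, deciding $\cs \in \Ne_{\mathsf{x}}(G)$ is in $\mathrm{PSPACE}$ for $\mathsf{x}\in\{\mathsf{pess},\mathsf{opt}\}$. The main obstacle is not the logical reasoning but the space bookkeeping: the information set over which $\min$ and $\max$ are taken is exponentially large, and the uniform-strategy encodings can themselves be exponential. One must therefore be careful to (a) collapse a uniform deviation to a single polynomial-size local strategy using uniformity, and (b) evaluate the exponentially many outcomes sequentially, reusing the $\mathrm{PSPACE}$ model-checker of Theorem~\ref{thm:mc} rather than ever storing the whole expected-outcome vector.
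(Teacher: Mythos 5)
Your proof is correct and follows essentially the same route as the paper's: guess the polynomial-size witness $(i,v,s_i')$, reduce the uniform deviation to a single local strategy, verify the $\min$/$\max$ comparison by iterating over the information set with the PSPACE model checker of Theorem~\ref{thm:mc}, and conclude via closure of PSPACE under complementation and NPSPACE $=$ PSPACE. You simply make explicit two points the paper leaves implicit, namely why a uniform deviation collapses to one local strategy on the information set and how the exponential-size expected-outcome vector is traversed in polynomial space.
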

\begin{proof}
 Observe that by Theorem \ref{thm:mc}, for $i \in N$,
  $\cs\in \CS$ and $v \in V$, checking if $u_i(v,\cs(v))=1$ can be
  done in PSPACE. It follows that checking if $\max
  \uu_i(v,\cs)=1$ (respectively, if $\min \uu_i(v,\cs)=1$) can
  be checked in PSPACE.
Therefore, to check if $\cs \not\in \Ne_{\mathsf{pess}}(G)$
  (respectively, if $\cs \not\in \Ne_{\mathsf{opt}}(G)$), it
  suffices to perform the following two steps.
  \begin{enumerate}
  \item Guess a player $i$, a valuation $v$ and a strategy $s'_i \in S_i$.
  \item Verify if $\min \uu_i(v,\cs) < \min \uu_i(v,(\dot{s}'_i,\cs_{-i}))$, \\ (respectively, if $\max \uu_i(v,\cs) < \max \uu_i(v,(\dot{s}'_i,\cs_{-i}))$).
  \end{enumerate}
  
  For step 1 note that the size of the strategy $|s_i| =
  \bigO(nk)$. Thus the triple $(i,v,s_i)$ that forms a possible
  witness to the fact that $\cs \not\in \Ne_{\mathsf{pess}}(G)$
  (respectively, $\cs \not\in \Ne_{\mathsf{opt}}(G)$), has a polynomial
  representation. By the observation above, step 2 can be solved in
  PSPACE. Since PSPACE is closed under complementation and NPSPACE =
  PSPACE, the membership in PSPACE follows.
\end{proof}

\subsection{Emptiness}
Next we address the complexity of checking
for emptiness of maximal Nash equilibria in observation games. We find
it useful to introduce the following definitions.
Let $A=\{a_1, \ldots, a_l\}$ and $B=\{b_1, \ldots, b_l\}$ be two
finite sets of variables where $|A|=|B|$ and let $\zeta: A \to B$
be a bijection. For valuations $v^1 \in V(A)$ and $v^2
\in V(B)$, we say that $\cons_\zeta(v^1,v^2)$ holds if for all $j: 1 \leq j
\leq l$, $a_j \in v^1$ iff $\zeta(a_j) \in v^2$. We also define the formula
$\fcons_\zeta(A,B) := \wedge_{j=1}^l(a_j \leftrightarrow \zeta(a_j))$.

Given a uniform strategy $\cs_i$ and a set $Z \subseteq P_i$, we say
that $\cs_i$ is \textit{globally $Z$-uniform} if for all $v, v' \in
V$, if $\restr{v}{Z} = \restr{v'}{Z}$ then $\cs_i(v) = \cs_i(v')$. For
$i \in N$, let $\CS_i^Z = \{\cs_i \in \CS_i \mid \cs_i \text{ is
globally } Z
\text{-uniform}\}$. Note that $\CS_i^Z$ can be viewed as a natural
generalisation of $\CS_i^g$ by parameterising the uniform
strategies on the set $Z$.

\medskip

\noindent {\bf An NEXPTIME-complete Problem.}  We now show that given
an observation game $\obsgame$, checking if
$\Ne_{\mathsf{max}}(\obsgame)$ is empty is NEXPTIME-complete. To prove
the hardness, we give a reduction from the {\sc Dependency quantifier
  Boolean formula game (Dqbfg)} \cite[p.87]{HD09}. {\sc Dqbfg}
involves a three player game with players 1, 2 and 3. There are four
finite sets of variables which are mutually disjoint, $X_2, X_3,
A_2$ and $A_3$ along with a Boolean formula $\varphi$ over the
variables $X_2 \cup X_3 \cup A_2 \cup A_3$. Let $X=X_2 \cup X_3$
and $A= A_2 \cup A_3$. For the rest of this section we use $L^B$ to
denote the set of Boolean formulas over the variables $X \cup
A$. Players' strategies are defined as follows.

\begin{itemize}
\item Player 1: a strategy $t_1 \in V(X)$.
\item Player 2: a strategy $t_2: V(X_2) \to V(A_2)$.
\item Player 3: a strategy $t_3: V(X_3) \to V(A_3)$.
\end{itemize}

In other words, a strategy for player 1 is to select a valuation for
variables in $X$. Player 2 chooses a valuation for variables in
$A_2$ and his strategy can depend on the valuation for variables in
$X_2$. Similarly, a strategy for player 3 is to choose a valuation for
variables in $A_3$ which can depend on the valuation of
variables in $X_3$.

For player $i \in \{1,2,3\}$ let $T_i$ denote the set of strategies of
player $i$ and $T$ the set of strategy profiles. It is easy to observe
that a strategy profile $t=(t_1,t_2,t_3)$ defines a valuation over the
set of variables $X \cup A$. For a formula $\alpha \in L^B$ we then
have the natural interpretation for $t \models \alpha$. Given
strategies $t_2 \in T_2$ and $t_3 \in T_3$, we say that the pair
$(t_2, t_3)$ is a \textit{winning strategy} for the coalition of
players 2 and 3 if for all $t_1 \in T_1$, $(t_1, t_2, t_3) \models
\neg \phi$.

An instance of {\sc Dqbfg} is then given by the tuple $H=((X_i)_{i \in
  \{2,3\}}, (A_i)_{i \in \{2,3\}}, \varphi)$ and the associated
decision problem is to check if the coalition of players 2 and 3 have
a winning strategy in $H$.

\begin{theorem}[\cite{HD09}]
{\sc Dqbfg} is NEXPTIME-complete.
\end{theorem}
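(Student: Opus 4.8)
The plan is to read the winning condition for the coalition $\{2,3\}$ as the truth of a dependency quantified Boolean formula (DQBF) and to transfer the known NEXPTIME-completeness of that problem, which is exactly what \cite{HD09} records. Concretely, the coalition $\{2,3\}$ has a winning strategy iff there exist Skolem functions $t_2 : V(X_2) \to V(A_2)$ and $t_3 : V(X_3) \to V(A_3)$ such that for every $t_1 \in V(X)$ we have $(t_1,t_2,t_3) \models \neg\varphi$; equivalently, iff the Henkin-quantified formula $\forall X_2\, \forall X_3\; \exists A_2/X_2\; \exists A_3/X_3\; \neg\varphi$ is true, where $A_2/X_2$ signifies that $A_2$ may depend only on $X_2$. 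This is precisely a DQBF with two universal blocks and two existential blocks whose dependency sets are incomparable, so the whole result reduces to understanding this formula class.

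For membership in NEXPTIME I would take as a certificate the pair $(t_2,t_3)$ of Skolem functions, written as truth tables of total size $2^{|X_2|}|A_2| + 2^{|X_3|}|A_3|$, i.e.\ exponential in the input. A nondeterministic machine guesses this certificate and then verifies \emph{deterministically} that for each of the $2^{|X|}$ valuations $t_1$ of $X$ the matrix $\neg\varphi$ evaluates to true under $(t_1,\, t_2(t_1\cap X_2),\, t_3(t_1\cap X_3))$. Each such evaluation is polynomial in $|\varphi|$, and there are only exponentially many of them, so the verification runs in deterministic exponential time. Hence the coalition-winning question lies in NEXPTIME.

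For hardness I would reduce from the standard DQBF (equivalently, Henkin-quantifier) truth problem, setting $\varphi := \neg\psi$ for the DQBF matrix $\psi$ and feeding $H = ((X_i)_{i\in\{2,3\}}, (A_i)_{i\in\{2,3\}}, \varphi)$ to \textsc{Dqbfg}; the equivalence recorded above makes this reduction immediate and polynomial. The delicate point, and the part I expect to be the real obstacle, is that the NEXPTIME lower bound already holds for this very restricted dependency shape: two existential blocks, each seeing only its own universal block. This is exactly the content of the classical DQBF lower bound, where the two players' incomparable views of the universally chosen bits force them to commit to mutually consistent exponential-size strategy tables, and one encodes an arbitrary NEXPTIME computation (say via an exponential tiling) so that a consistent pair of tables exists iff the machine accepts. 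Since \cite{HD09} establishes precisely this, I would invoke it for the lower bound rather than reproduce the Turing-machine simulation.
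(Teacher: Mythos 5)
Your reading of the coalition-winning condition as the truth of the Henkin-quantified formula $\forall X_2\,\forall X_3\,\exists A_2/X_2\,\exists A_3/X_3\,\neg\varphi$ is correct, your guess-and-check membership argument is the standard one, and your deferral to \cite{HD09} for the lower bound is exactly what the paper does: this theorem is imported wholesale from \cite{HD09} and the paper gives no proof of its own. So your proposal is correct and, in the only respect that matters (where the genuinely hard content lives), takes the same route as the paper, while usefully filling in the routine upper-bound and reduction details.
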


\noindent {\bf The reduction.} Given an instance of {\sc Dqbfg}
$H=((X_i)_{i \in \{2,3\}}, (A_i)_{i \in \{2,3\}}, \varphi)$, we
construct an observation game $G_H=(N, (P_i)_{i \in N}, (\gamma_i)_{i
  \in N})$ as follows. The set of players $N = \{1,2,3\}$. For $i \in
\{2,3\}$, let $Y_i$ be a copy of the variables in $X_i$, so $|Y_i|
= |X_i|$ and let $Y=Y_2 \cup Y_3$. Let $P_1 = X$, $P_2= A_2 \cup Y_2
\cup \{q\}$ and $P_3 = A_3 \cup Y_3 \cup \{r\}$. For the rest of this
section, we use $V$ and $L^K$ to denote the set of all valuations and
the set of all formulas over the variables in the observation game
$G_H$ respectively (so $V= V(X \cup Y \cup A \cup \{q,r\})$).

We also define the bijection $\zeta: X \to Y$ as the function that
maps each variable in $X_i$ to its corresponding copy in
$Y_i$. Formally, let $X_1 = \{x_1^1, \ldots, x_1^l\}$, $Y_1
= \{y_1^1, \ldots, y_1^l\}$, $X_2 = \{x_2^1, \ldots, x_2^h\}$ and $Y_2
= \{y_2^1, \ldots, y_2^h\}$. Then $\zeta(x_1^j) = y_1^j$ for all
$j \in \{1, \ldots, l\}$  and $\zeta(x_2^j) = y_2^j$ for all
$j \in \{1, \ldots, h\}$. To simplify notation, we denote
$\cons_\zeta$ by $\cons$ and $\fcons_\zeta$ by $\fcons$ for this fixed
bijection $\zeta$.

In order to define the goal formulas, we first inductively define a
function $\lambda: L^B \to L^K$ that transforms $\varphi$ to a formula
in $L_K$ as follows. 
\begin{itemize}
\item For $p \in X$, $\lambda(p):=p$.
\item For $p \in A_2$, $\lambda(p):= \Kw_3 p$.
\item For $p \in A_3$, $\lambda(p):= \Kw_2 p$.
\item $\lambda( \neg \alpha) := \neg \lambda(\alpha)$.
\item $\lambda(\alpha_1 \vee \alpha_2):= \lambda(\alpha_1) \vee
  \lambda(\alpha_2)$.
\end{itemize}

Let $\psi_2 = (\Kw_2 r \eq \neg \Kw_3 q)$ and $\psi_3 = (\Kw_2 r \eq
\Kw_3 q)$. Recall that Example \ref{ex:noNE} shows that already for
the $\Kw$ fragment of observation games, $\Ne_{\mathsf{max}}$ need not
always exist. Observe that the formulas $\psi_2$ and $\psi_3$
precisely correspond to $\gamma_1$ and $\gamma_2$ respectively as used
in Example \ref{ex:noNE}. We define the players' goal formulas as
follows.

\begin{itemize}
\item $\gamma_1 = \top$.
  \item For $i \in \{2,3\}$, $\gamma_i = (\lambda(\neg \varphi) \vee
    \psi_i) \wedge \fcons(X_2,Y_2) \wedge \fcons(X_3,Y_3)$.
\end{itemize}

\noindent{\bf Properties of $G_H$.} It is easy to see that the resulting
observation game $G_H$ is polynomial in the size of $H$. We first make
the following observations about $G_H$.

\begin{lemma}
  \label{lm:noNeMax}
  Let $G_H$ be the observation game corresponding to $H$ and let $\cs
  \in \CS$. If there exists $v \in V$ such that
  $\cons(\restr{v}{X_1}, \restr{v}{Y_1})$,
  $\cons(\restr{v}{X_2},\restr{v}{Y_2})$ and $v, \cs(v)
  \models \lambda(\varphi)$ then $\cs \notin \Ne_{\mathsf{max}}(G_H)$.
\end{lemma}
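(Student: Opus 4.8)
The plan is to invoke Proposition~\ref{prop:NEST}, which says that $\cs \in \Ne_{\mathsf{max}}(G_H)$ iff $\cs(w)$ is a Nash equilibrium of the pointed game $G_H(w)$ for every $w \in V$. Hence it suffices to show that for the particular valuation $v$ supplied by the hypothesis, the profile $s := \cs(v)$ is \emph{not} a Nash equilibrium of the pointed game $G_H(v)$, i.e.\ that some player has a profitable deviation from $s$ in $G_H(v)$. I would set up the penny gadget first: writing $a := \Kw_3 q$ and $b := \Kw_2 r$, observe that $a$ is controlled by player~$2$ (true iff $q \in s_2(3)$) and $b$ by player~$3$ (true iff $r \in s_3(2)$), and that $\psi_2$ holds iff $\Kw_3 q$ and $\Kw_2 r$ get \emph{different} truth values while $\psi_3$ holds iff they \emph{agree}; thus $\psi_2 = \neg\psi_3$, exactly the matching-pennies structure of Example~\ref{ex:noNE}.

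Next I would unwind the goal formulas at $(v,s)$. The two consistency hypotheses assert precisely that $v$ satisfies both propositional conjuncts $\fcons(X_2,Y_2)$ and $\fcons(X_3,Y_3)$ appearing in $\gamma_2$ and $\gamma_3$; since these conjuncts contain no $\Kw$ operators, their truth depends on $v$ alone and so is unaffected by any change of strategy. From $v, s \models \lambda(\varphi)$ and $\lambda(\neg\varphi) = \neg\lambda(\varphi)$ we get $v, s \not\models \lambda(\neg\varphi)$, so for $i \in \{2,3\}$ the goal $\gamma_i = (\lambda(\neg\varphi) \vel \psi_i) \et \fcons(X_2,Y_2) \et \fcons(X_3,Y_3)$ collapses at $(v,s)$ to $\psi_i$. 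Because $\psi_2 = \neg\psi_3$, exactly one of players $2,3$ wins at $(v,s)$; call the loser $i^*$, so $u_{i^*}(v,s) = 0$.

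For the losing player $i^*$ I would use the single-variable deviation that toggles its penny component: if $i^* = 2$, flip whether $q \in s_2(3)$ (flipping $a$); if $i^* = 3$, flip whether $r \in s_3(2)$ (flipping $b$). In either case $\psi_{i^*}$ flips from false to true while $b$ (resp.\ $a$) is untouched. The one point requiring care — the main obstacle — is checking that this deviation leaves the other two conjuncts of $\gamma_{i^*}$ intact. The consistency conjuncts are handled above (propositional, depending only on $v$). For $\lambda(\varphi)$, the only $\Kw_3$-atoms occurring in it are $\Kw_3 p$ with $p \in A_2$, and the only $\Kw_2$-atoms are $\Kw_2 p$ with $p \in A_3$; since we have toggled only $\Kw_3 q$ (resp.\ $\Kw_2 r$) and $q \notin A_2$ (resp.\ $r \notin A_3$), the truth value of $\lambda(\varphi)$ is unchanged, so $\lambda(\neg\varphi)$ remains false. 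Hence after the deviation $\gamma_{i^*}$ evaluates to $(\bot \vel \top) \et \top \et \top$, i.e.\ $u_{i^*}$ improves from $0$ to $1$. This profitable deviation shows $\cs(v) \notin \Ne(G_H(v))$, and therefore by Proposition~\ref{prop:NEST} that $\cs \notin \Ne_{\mathsf{max}}(G_H)$, as required.
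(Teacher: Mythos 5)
Your proof is correct and follows essentially the same route as the paper's: use the consistency hypotheses and $v,\cs(v)\models\lambda(\varphi)$ to collapse $\gamma_2,\gamma_3$ at $(v,\cs(v))$ to the matching-pennies formulas $\psi_2=\neg\psi_3$, and give the losing player a profitable single-variable deviation at $v$, which rules out a maximal Nash equilibrium. The only difference is one of explicitness: the paper delegates the existence of the loser and its deviation to Example~\ref{ex:noNE} and leaves Proposition~\ref{prop:NEST} implicit, whereas you spell out the penny gadget and the (harmless but unnecessary) check that $\lambda(\varphi)$ is unaffected by the toggle.
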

\begin{proof} Suppose there exists $v \in V$ such that
  $\cons(\restr{v}{X_1}, \restr{v}{Y_1})$,
  $\cons(\restr{v}{X_2},\restr{v}{Y_2})$ and $v, \cs(v) \models
  \lambda(\varphi)$. Then $v,\cs(v) \models \fcons(X_2,Y_2)
  \wedge \fcons(X_3,Y_3)$. By Example \ref{ex:noNE}, we have that
  there exists $i \in \{2,3\}$ such that $v, \cs(v) \not\models
  \psi_i$ and there exists $s_i \in S_i$ such that $v, (s_i,
  \cs_{-i}(v)) \models \psi_i$. Therefore we have $u_i(v, \cs(v)) <
  u_i(v, (s_i,\cs_{-i}(v)))$. Thus $\cs \not\in
  \Ne_{\mathsf{max}}(G_H)$.
\end{proof}

\begin{lemma}
  \label{lm:compTruth}
  For $i \in \{2,3\}$, for all $s \in S$, for all $v,v' \in V$ such
  that $\restr{v}{(X \cup Y)} = \restr{v'}{(X \cup Y)}$ we have $v, s
  \models \gamma_i$ iff $v', s \models \gamma_i$.
\end{lemma}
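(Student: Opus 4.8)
The plan is to exploit that, for $i\in\{2,3\}$, the goal $\gamma_i$ is a conjunction of three parts, namely $\lambda(\neg\varphi)\vee\psi_i$, $\fcons(X_2,Y_2)$, and $\fcons(X_3,Y_3)$, and to show that under a fixed $s$ the truth value of each part depends only on $\restr{v}{(X \cup Y)}$. Since $v$ and $v'$ agree on $X\cup Y$ by hypothesis, this immediately yields $v,s\models\gamma_i$ iff $v',s\models\gamma_i$.

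First I would dispose of the two $\fcons$ conjuncts. The formulas $\fcons(X_2,Y_2)$ and $\fcons(X_3,Y_3)$ are Boolean formulas over the variables in $X\cup Y$, so their truth under any $s$ depends only on $\restr{v}{(X\cup Y)}$, which is identical for $v$ and $v'$. The substance of the proof is the conjunct $\lambda(\neg\varphi)\vee\psi_i$. The key observation is that every atom occurring here that mentions a variable from $A\cup\{q,r\}$ occurs only inside a $\Kw$ operator, so it is a $\Kw$ formula whose truth is valuation-independent by Prop.~\ref{lemma.ii}. Concretely, the only atoms of $\psi_i$ are $\Kw_2 r$ and $\Kw_3 q$, both in $L^\Kw$; hence $v,s\models\psi_i$ iff $s\models\psi_i$ iff $v',s\models\psi_i$, independently of the valuation. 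For $\lambda(\neg\varphi)=\neg\lambda(\varphi)$ I would argue by induction on $\varphi$: the atoms of $\lambda(\varphi)$ are either a variable $p\in X$, whose value at $v$ equals its value at $v'$ because $X\subseteq X\cup Y$ and $v,v'$ agree on $X\cup Y$, or else $\Kw_3 p$ for $p\in A_2$, or $\Kw_2 p$ for $p\in A_3$, which in both cases are $\Kw$ formulas in $L^\Kw$ and hence valuation-independent by Prop.~\ref{lemma.ii}. As $\lambda$ commutes with $\neg$ and $\vee$, an easy induction gives $v,s\models\lambda(\varphi)$ iff $v',s\models\lambda(\varphi)$, and therefore the same equivalence for $\lambda(\neg\varphi)$.

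Combining the three conjuncts gives the claim. I expect the only genuinely delicate point to be the bookkeeping that verifies every occurrence of an $A$-, $q$-, or $r$-variable in $\gamma_i$ is shielded by a $\Kw$ operator so that Prop.~\ref{lemma.ii} applies; this is exactly what the definition of $\lambda$ (sending $A_2$-atoms to $\Kw_3(\cdot)$ and $A_3$-atoms to $\Kw_2(\cdot)$) and the shape of $\psi_i$ guarantee, so that no variable of $A\cup\{q,r\}$ ever appears bare. The remaining induction is routine.
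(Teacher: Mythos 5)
Your proof is correct and takes essentially the same route as the paper's (which merely states that the claim ``clearly holds'' for $\psi_i$ and $\fcons(X_i,Y_i)$ and then invokes a ``simple induction on $\varphi$''). You have simply filled in the details the paper leaves implicit: the $\fcons$ conjuncts are Booleans over $X\cup Y$ where $v$ and $v'$ agree, $\psi_i$ and the $\lambda$-images of $A$-atoms are $L^\Kw$ formulas and hence valuation-independent by Proposition~\ref{lemma.ii}, and the remaining induction only has to track the bare atoms $p\in X$.
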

\begin{proof}
For $i \in \{2,3\}$, the claim clearly holds for formulas $\psi_i$ and
$\fcons(X_i,Y_i)$. Thus for $\gamma_i$, the claim follows by a simple
induction on $\varphi$.
\end{proof}

Next, we show that if the set of maximal Nash equilibria in $G_H$ is
non-empty then this set contains certain restricted types of maximal Nash
equilibria.

Let $\CStwo$ denote the set of uniform strategy profiles $\cs \in \CS$
that satisfy the following conditions:
  \begin{itemize}
  \item $\cs_1 \in \CS_1^g$,
  \item for $i \in \{2,3\}$, $\cs_i \in \CS_i^{Y_i}$.
  \end{itemize}

In other words, $\CStwo$ consists of the set of all uniform strategy profiles
$\cs$ such that $\cs_1$ is globally uniform and for $i \in \{2,3\}$,
$s_i$ is globally $Y_i$-uniform.

\begin{lemma}
\label{lm:setUniform}
If $\Ne_{\mathsf{max}}(G_H) \neq \emptyset$ then there exists $\cs^* \in
\Ne_{\mathsf{max}}(G_H)$ such that $\cs^* \in \CStwo$.
\end{lemma}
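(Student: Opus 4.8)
The plan is to start from an arbitrary $\cs \in \Ne_{\mathsf{max}}(G_H)$ and transport it, valuation by valuation, onto a profile $\cs^*$ lying in $\CStwo$. Two structural features of $G_H$ drive the construction. First, no goal formula refers to any revelation made by player $1$: in $\lambda(\varphi)$ the variables of $X = P_1$ occur only as plain atoms (whose truth is fixed by the valuation), while every $\Kw$-subformula occurring in $\gamma_2$, $\gamma_3$, $\psi_2$ and $\psi_3$ concerns revelation of a variable in $A_2$, $A_3$, $q$ or $r$. Hence $u_i(v,(s_1,s_{-1})) = u_i(v,(s_1',s_{-1}))$ for all $i, v, s_1, s_1', s_{-1}$, so player $1$'s action is payoff-irrelevant. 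Second, by Lemma~\ref{lm:compTruth} the truth of $\gamma_2$ and $\gamma_3$ depends on a valuation only through its restriction to $X \cup Y$.

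Concretely, I define the projection $\rho : V \to V$ by $\rho(v) := \restr{v}{(X \cup Y)}$, i.e.\ $\rho(v)$ agrees with $v$ on $X \cup Y$ and sets every variable of $A \cup \{q,r\}$ to false. Given $\cs$, I let $\cs_1^*$ be any fixed globally uniform strategy and, for $i \in \{2,3\}$, put $\cs_i^*(v) := \cs_i(\rho(v))$. Because $\rho(v)$ zeroes out $A_2$ and $q$, we have $\restr{\rho(v)}{P_2} = \restr{v}{Y_2}$, so by uniformity of $\cs_2$ the value $\cs_2(\rho(v))$ depends only on $\restr{v}{Y_2}$; thus $\cs_2^* \in \CS_2^{Y_2}$, and symmetrically $\cs_3^* \in \CS_3^{Y_3}$, while $\cs_1^* \in \CS_1^g$. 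Hence $\cs^* \in \CStwo$ by construction, and it remains only to verify that $\cs^*$ is a maximal Nash equilibrium.

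For this I would use Proposition~\ref{prop:NEST} and check that $\cs^*(v)$ is a Nash equilibrium of $G(v)$ for every $v$. Player $1$ never deviates profitably since $\gamma_1 = \top$. For player $2$, fix $v$ and a deviation $s_2' \in S_2$. Since $\restr{\rho(v)}{(X\cup Y)} = \restr{v}{(X\cup Y)}$, Lemma~\ref{lm:compTruth} gives $u_2(v,s) = u_2(\rho(v),s)$ for every profile $s$; combining this with the payoff-irrelevance of player $1$ and with $\cs_3^*(v) = \cs_3(\rho(v))$, I obtain $u_2(v,(s_2',\cs_{-2}^*(v))) = u_2(\rho(v),(s_2',\cs_{-2}(\rho(v))))$ and likewise $u_2(v,\cs^*(v)) = u_2(\rho(v),\cs(\rho(v)))$. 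Because $\cs(\rho(v))$ is a Nash equilibrium of $G(\rho(v))$, the right-hand sides obey $u_2(\rho(v),(s_2',\cs_{-2}(\rho(v)))) \leq u_2(\rho(v),\cs(\rho(v)))$, so $s_2'$ is not profitable at $v$. The case of player $3$ is identical with the roles of $2$ and $3$ exchanged.

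The one genuine subtlety — and the step I expect to require the most care to phrase cleanly — is the interaction between players $2$ and $3$: each of $\gamma_2$ and $\gamma_3$ depends on the other player's revelations ($\Kw_3 p$ and $\Kw_3 q$ occur in $\gamma_3$, while $\Kw_2 p$ and $\Kw_2 r$ occur in $\gamma_2$). Consequently, replacing player $2$ alone by its $Y_2$-uniform version could destroy player $3$'s best-response condition, so a one-player-at-a-time argument does not obviously go through. Applying $\rho$ to players $2$ and $3$ simultaneously is what resolves this: at valuation $v$ the profile $\cs^*(v)$ reproduces the equilibrium profile $\cs(\rho(v))$ on exactly the payoff-relevant coordinates, so that, via Lemma~\ref{lm:compTruth}, the per-valuation equilibrium requirement at $v$ collapses to the one already known to hold at $\rho(v)$.
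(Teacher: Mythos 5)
Your proof is correct and follows essentially the same route as the paper: both replace $\cs_i(v)$ for $i\in\{2,3\}$ by $\cs_i$ evaluated at a canonical representative determined by $\restr{v}{(X\cup Y)}$ (your $\rho(v)$ versus the paper's $c^i_v$), fix player $1$'s strategy arbitrarily, and transfer the equilibrium property via Lemma~\ref{lm:compTruth} together with the payoff-irrelevance of player $1$'s revelations. Your version is, if anything, slightly more explicit than the paper's about why replacing $\cs_1$ by an unrelated globally uniform strategy is harmless, and it argues directly through Proposition~\ref{prop:NEST} rather than by contradiction, but these are presentational differences only.
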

\begin{proof}
For players $i \in \{2,3\}$ we define an equivalence relation $\cong_i
\subseteq V \times V$ as follows. For $v, v' \in V$, $v \cong_i v'$ if
$\restr{v}{Y_i} = \restr{v'}{Y_i}$. For $v \in V$, let $[v]_i$ denote
the equivalence class containing the valuation $v$ and $c_v^i \in
[v]_i$ denote a fixed valuation which is interpreted as the canonical
representative element in the equivalence class $[v]_i$.

Suppose $\cs \in \Ne_{\mathsf{max}}(G_H)$. Consider the uniform strategy
profile $\cs^* \in \CStwo$ defined as follow.

\begin{itemize}
  \item For player 1, fix a valuation $w \in V$ and let $\cs^*_1(v) =
    \cs_1(w)$ for all $v \in V$.
    \item For players $i \in \{2,3\}$, for all $v \in V$,
      $\cs_i^*(v)=\cs_i(c_v^i)$.
\end{itemize}

We claim that $\cs^* \in \Ne_{\mathsf{max}}(G_H)$. Suppose not, then
there exists $i \in \{2,3\}$, there exists $w \in V$, there exists
$s_i \in S_i$ such that $u_i(w, (s_i, \cs^*_{-i}(w))) > u_i(w,
\cs^*(w))$. Then $w, (s_i, \cs^*_{-i}(w)) \models \gamma_i$ and $w,
\cs^*(w) \not\models \gamma_i$.

Now consider the valuation $u$ defined as follows:
$\restr{u}{P_1}=\restr{w}{P_1}$ and for $i \in \{2,3\}$,
$\restr{u}{P_i} = \restr{c^i_w}{P_i}$. By definition of $u$ we have
that $\restr{u}{(X \cup Y)} = \restr{w}{(X \cup Y)}$ and therefore, $w
\cong_i u$. From the definition of $\cs^*$ it follows that
$\cs^*(w)=\cs(u)$.

Since $w, (a_i, \cs^*_{-i}(w)) \models \gamma_i$ we have that $w,
(a_i, \cs^*_{-i}(u)) \models \gamma_i$. By Lemma \ref{lm:compTruth} we
have that $u, (a_i, \cs^*_{-i}(u)) \models \gamma_i$. Since $w,
\cs^*(w) \not\models \gamma_i$ we have that $w, \cs^*(u) \not\models
\gamma_i$. By Lemma \ref{lm:compTruth} we have that $u, \cs^*(u)
\not\models \gamma_i$. However, this implies that $\cs^* \notin
\Ne_{\mathsf{max}}(G_H)$ which is a contradiction.
\end{proof}

\noindent {\bf Strategy Translation.} Note that by the construction of
$G_H$, the strategies of player 1 are irrelevant in terms of existence of
maximal Nash equilibria. Player 1 can ensure a utility of 1 by
choosing any strategy. We now define two functions which translate
strategies of players 2 and 3 between $H$ and $G_H$. For the rest of
the section we make use of the following concise notation. For $i =
2$, let $i^+ = 3$ and for $i = 3$, let $i^+ =2$.

For $i \in \{2,3\}$, let $\chi_i: T_i \to \CS_i^{Y_i}$ be the function
that translates every strategy $t_i$ of player $i$ in $H$ to a globally
$Y_i$-uniform strategy $\cs_i=\chi_i(t_i)$ in $G_H$ as defined below.
\begin{itemize}
\item For all $v \in V$, if $\cons(\restr{v}{X_i}, \restr{v}{Y_i})$
    then $\cs_i(v)(i^+)=t_i(\restr{v}{X_i})$ and $\cs_i(v)(i^+)=
    \emptyset$ otherwise. For all $v \in V$, $\cs_i(v)(1)=\emptyset$
    and $\cs_i(v)(i)=P_i$.    
  \end{itemize}

For $i \in \{2,3\}$, let $\mu_i:  \CS_i^{Y_i} \to T_i$ be the function
that translates every globally $Y_i$-uniform strategy $\cs_i$ of player $i$ in
$H$ to a strategy $t_i=\mu_i(\cs_i)$ in $G_H$ as defined below.
\begin{itemize}
  \item For all $v \in V$, such that $\cons(\restr{v}{X_i},
  \restr{v}{Y_i})$, $t_i(\restr{v}{X_i})=\cs_i(v)(i^+)$.
\end{itemize}

Note that since $\cs_i \in \CS_i^{Y_i}$, $\mu_i$ is well defined.

\begin{lemma}
  \label{lm:chiMu}
  For all $i \in \{2,3\}$ and for all $\cs_i \in \CS^{Y_i}$, let
  $\cs_i'=\chi_i(\mu_i(\cs_i))$. For all $\cs_1, \cs_1' \in \CS_1$,
  for all $i \in \{2,3\}$ and for all $v \in V$ such that
  $\cons(\restr{v}{X_i}, \restr{v}{Y_i})$ 
  we have $v, (\cs_1,
  \cs_2,\cs_3)(v)) \models \gamma_i$ iff $v, (\cs'_1,
  \cs'_2,\cs'_3)(v)) \models \gamma_i$.
\end{lemma}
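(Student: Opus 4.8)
The plan is to show that the truth of $\gamma_i$ at any valuation–profile pair $(v,s)$ is governed entirely by $v$ together with the two revelation sets $s_2(3)$ and $s_3(2)$ (that is, $s_j(j^+)$ for $j\in\{2,3\}$), and is independent of $\cs_1$ and of the remaining components of $\cs_2,\cs_3$. The lemma then follows because $\chi_j\circ\mu_j$ leaves exactly these revelation sets unchanged on the valuations that are consistent for player $j$.

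First I would record the basic semantic fact that for any $j,k\in N$, any $p\in P_k$, any profile $s$ and any valuation $v$, we have $v,s\models\Kw_j p$ iff $p\in s_k(j)$. This is immediate from the semantics of $K_j$ and the definition of $\sim^s_j$: the class $[v]^s_j$ is a singleton on the coordinate $p$ precisely when $p\in P_j(s)$, so $K_j p$ holds exactly when $p\in P_j(s)$ and $p\in v$, and $K_j\neg p$ exactly when $p\in P_j(s)$ and $p\notin v$; hence $\Kw_j p$ holds iff $p$ is revealed to $j$. Using this I would inspect the three building blocks of $\gamma_i=(\lambda(\neg\varphi)\vee\psi_i)\wedge\fcons(X_2,Y_2)\wedge\fcons(X_3,Y_3)$. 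The map $\lambda$ sends each $p\in X$ to the plain atom $p$, each $p\in A_2$ to $\Kw_3 p$, and each $p\in A_3$ to $\Kw_2 p$; the formula $\psi_i$ uses only $\Kw_3 q$ and $\Kw_2 r$; and each $\fcons(X_k,Y_k)$ is a Boolean combination of atoms in $X\cup Y$. By the semantic fact, every epistemic modality occurring in $\gamma_i$ has the form $\Kw_3(\cdot)$ on a variable of $P_2$ (true iff its argument lies in $s_2(3)$) or $\Kw_2(\cdot)$ on a variable of $P_3$ (true iff its argument lies in $s_3(2)$), while every plain atom lies in $X\cup Y$ and is evaluated purely in $v$. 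Consequently the truth of $\gamma_i$ at $(v,s)$ depends only on $v$, on $s_2(3)$ and on $s_3(2)$; in particular it is unaffected by $\cs_1$, by any $s_j(1)$, and by the self-components $s_j(j)$, since $\gamma_i$ contains no $\Kw_1$ modality and no modality applied to a variable of $X$.

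Next I would verify the fixed-point property of the roundtrip. For $v$ with $\cons(\restr{v}{X_j},\restr{v}{Y_j})$ the definition of $\mu_j$ gives $\mu_j(\cs_j)(\restr{v}{X_j})=\cs_j(v)(j^+)$, and the definition of $\chi_j$ then returns $\cs_j'(v)(j^+)=\mu_j(\cs_j)(\restr{v}{X_j})=\cs_j(v)(j^+)$; global $Y_j$-uniformity of $\cs_j$ together with consistency makes this assignment well defined. Thus on valuations consistent for player $j$ the roundtrip $\chi_j\circ\mu_j$ preserves precisely the relevant revelation set $s_j(j^+)$, only resetting the irrelevant components (the $1$-component to $\emptyset$ and the self-component to $P_j$), which by the first paragraph cannot affect $\gamma_i$.

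Finally I would assemble the two observations with a short case split on $v$, fixing $i$ and a $v$ with $\cons(\restr{v}{X_i},\restr{v}{Y_i})$. If $v$ is not consistent for player $i^+$, then $\fcons(X_{i^+},Y_{i^+})$, a Boolean formula read off $v$ alone, is false, so $\gamma_i$ is false under both profiles and the equivalence holds trivially. If instead $v$ is consistent for both $2$ and $3$, then by the previous paragraph $\cs_j'(v)(j^+)=\cs_j(v)(j^+)$ for $j\in\{2,3\}$, while $v$ is the same in both profiles and $\cs_1,\cs_1'$ are irrelevant by the first paragraph; hence the three governing data $v$, $s_2(3)$, $s_3(2)$ coincide for $(\cs_1,\cs_2,\cs_3)(v)$ and $(\cs_1',\cs_2',\cs_3')(v)$, so $\gamma_i$ receives the same truth value in both. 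I expect the main work to be the bookkeeping of the first paragraph, namely checking clause by clause of $\lambda$ and of the explicit $\psi_i$ and $\fcons$ that no modality ever refers to player $1$'s revelations or to the self- or $1$-components; once that dependence analysis is in place, everything else is a direct reading of the definitions of $\chi_j$ and $\mu_j$ and of $\fcons$.
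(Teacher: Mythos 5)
Your proof is correct and follows essentially the same route as the paper's, which merely sketches the argument as an induction on $\varphi$ after noting the claim for $\psi_i$ and $\fcons(X_i,Y_i)$: your dependence analysis (the truth of $\gamma_i$ at $(v,s)$ is a function of $v$, $s_2(3)$ and $s_3(2)$ alone) is exactly what that induction unwinds to, and you additionally supply the roundtrip computation for $\chi_j\circ\mu_j$ and the case split on consistency that the paper leaves implicit. One caveat, which is the paper's imprecision rather than a gap in your argument: the declared codomain of $\mu_i$ is $T_i$, whose values lie in $V(A_i)$, so if $\mu_i(\cs_i)(\restr{v}{X_i})$ is read as $\cs_i(v)(i^+)\cap A_i$ rather than literally as $\cs_i(v)(i^+)$, the roundtrip no longer preserves whether $q$ (resp.\ $r$) is revealed and $\psi_i$ could change truth value; your proof, like the paper's, relies on the literal reading of the definition of $\mu_i$.
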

\begin{proof}
For $i \in \{2,3\}$, the claim clearly holds for the formulas
$\psi_i$ and $\fcons(X_i,Y_i)$. Thus for
$\gamma_i$, the claim follows by induction on $\varphi$.
\end{proof}

\begin{lemma}
\label{lm:truthPres}
For all $\alpha \in L^B$,
for all $t \in T$, for all $i \in \{2,3\}$ and for all $v \in V$ such
that $\restr{t}{X} = \restr{v}{X}$ and $\cons(\restr{v}{X_i},
\restr{v}{Y_i})$
we have
$t \models \alpha$ iff $v, (\cs_1, \chi_2(t_2),\chi_3(t_3))(v)) \models
\lambda(\alpha)$ for all $\cs_1 \in \CS_1$.
\end{lemma}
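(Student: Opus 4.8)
The plan is a structural induction on the Boolean formula $\alpha$, exploiting that $\lambda$ is a homomorphism for the propositional connectives. Fix $t \in T$, $i \in \{2,3\}$, and $v \in V$ with $\restr{v}{X} = \restr{t}{X}$ and the consistency hypothesis in force, and abbreviate $\cs = (\cs_1, \chi_2(t_2), \chi_3(t_3))$. Since $\lambda(\neg\alpha) = \neg\lambda(\alpha)$ and $\lambda(\alpha_1 \vee \alpha_2) = \lambda(\alpha_1) \vee \lambda(\alpha_2)$, and the clauses for $\neg$ and $\vee$ are identical in the two semantics, the inductive steps each reduce to a single line; the entire content of the lemma therefore sits in the three atomic cases. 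None of those atomic equivalences will mention player $1$'s strategy, which is exactly what delivers the ``for all $\cs_1 \in \CS_1$'' clause for free.

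First I would treat a factual atom $p \in X$, where $\lambda(p) = p$: here $t \models p$ iff $p \in \restr{t}{X}$, while $v, \cs(v) \models p$ iff $p \in v$ iff $p \in \restr{v}{X}$, and these agree by $\restr{v}{X} = \restr{t}{X}$, manifestly independently of any strategy. For an atom $p \in A_j$ with $j \in \{2,3\}$ we have $\lambda(p) = \Kw_{j^+} p$, where $j^+$ is the other element of $\{2,3\}$. By the simplified epistemic semantics (equivalently Proposition~\ref{lemma.ii}, since $\Kw_{j^+} p \in L^\Kw$), the truth of $\Kw_{j^+} p$ at $(v, \cs(v))$ is valuation-independent and holds exactly when $p \in \chi_j(t_j)(v)(j^+)$. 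The consistency condition $\cons(\restr{v}{X_j}, \restr{v}{Y_j})$ is precisely what pins $\chi_j(t_j)(v)(j^+) = t_j(\restr{v}{X_j})$ rather than the fallback value $\emptyset$, and $\restr{v}{X_j} = \restr{t}{X_j}$ then gives $t_j(\restr{v}{X_j}) = t_j(\restr{t}{X_j})$. Recalling that $t$ induces the valuation $t_j(\restr{t}{X_j})$ on $A_j$, we conclude $v, \cs(v) \models \Kw_{j^+} p$ iff $p \in t_j(\restr{t}{X_j})$ iff $t \models p$.

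The only real obstacle is this atomic case for the $A$-variables: one must verify that $\chi_j$'s revelation to $j^+$ coincides with the subset of $A_j$ selected by $t_j$, which is the design purpose of $\chi_j$ (and of its $\emptyset$ default off the consistent valuations). For an atom in $A_i$ this is immediate from the stated hypothesis; to cover an atom in $A_{i^+}$ one also needs the companion condition $\cons(\restr{v}{X_{i^+}}, \restr{v}{Y_{i^+}})$, so in fact the argument goes through verbatim for every $v$ satisfying both consistency conditions --- which is the regime in which the lemma is invoked, as there every such $v$ validates $\fcons(X_2,Y_2) \wedge \fcons(X_3,Y_3)$. With both atomic cases settled the induction closes mechanically, exactly as in Lemmas~\ref{lm:BoolObsOutcome} and~\ref{lm:ObsBoolOutcome}, so I would keep the connective cases to one line apiece.
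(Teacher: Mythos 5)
Your proof is correct and follows essentially the same route as the paper's: structural induction on $\alpha$ where the propositional cases are immediate because $\lambda$ commutes with $\neg$ and $\vee$, and the three atomic cases are resolved exactly as you describe ($p\in X$ via $\restr{v}{X}=\restr{t}{X}$, and $p\in A_j$ via the consistency condition forcing $\chi_j(t_j)(v)(j^+)=t_j(\restr{v}{X_j})$). Your observation that handling the atom in $A_{i^+}$ requires the companion condition $\cons(\restr{v}{X_{i^+}},\restr{v}{Y_{i^+}})$ is a fair point about the lemma's phrasing; the paper's own proof silently invokes both consistency conditions, and the lemma is only ever applied to valuations satisfying both.
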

\begin{proof}
For $i \in \{2,3\}$, let $\cs_i = \chi_i(t_i)$.  The proof is by
induction on the structure of $\alpha$ where the interesting cases
involve the three base cases.
  \begin{itemize}
    \item $\alpha = p \in X$. Then we have $\lambda(p)=p$
      and the following sequence of equivalences. $t \models p$ iff $p
      \in t_1$ iff $p \in v$ (since $\restr{t_1}{X} = \restr{v}{X}$)
      iff $v, (\cs_1, \cs_2,\cs_3)(v)) \models p$ for all $\cs_1
      \in \CS_1$.
    \item $\alpha = p \in A_2$. Then we have $\lambda(p)= \Kw_3 p$ and
      the following sequence of equivalences. $t \models p$ iff $p \in
      t_2(\restr{t_1}{X_2})$ iff $p \in \cs_2(v)(3)$ (since
      $\cons(\restr{v}{X_2}, \restr{v}{Y_2})$) iff $v, (\cs_1,
      \cs_2,\cs_3)(v)) \models \Kw_3 p$ for all $\cs_1 \in
      \CS_1$.
    \item $\alpha = p \in A_3$. Then we have $\lambda(p)= \Kw_2 p$ and
      the following sequence of equivalences. $t \models p$ iff $p \in
      t_3(\restr{t_1}{X_3})$ iff $p \in \cs_3(v)(2)$ (since
      $\cons(\restr{v}{X_3}, \restr{v}{Y_3})$) iff $v, (\cs_1,
      \cs_2,\cs_3)(v)) \models \Kw_2 p$ for all $\cs_1 \in \CS_1$.
    \item For $\alpha = \neg \alpha_1$ and $\alpha = \alpha_1 \vee
      \alpha_2$ the claim follows by a direct application of the
      induction hypothesis.
  \end{itemize}
\end{proof}

\begin{lemma}
  \label{lm:nexpC}
Let $H=((X_i)_{i \in \{2,3\}}, (A_i)_{i \in \{2,3\}}, \varphi)$ be an
instance of {\sc Dqbfg} and $G_H$ the associated observation
game. The coalition of players 2 and 3 have a winning strategy in $H$ iff
$\Ne_{\mathsf{max}}(G_H) \neq \emptyset$.
\end{lemma}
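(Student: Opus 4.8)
The plan is to prove the two directions separately, in each case using Proposition~\ref{prop:NEST} to reduce the maximal-equilibrium condition to a pointwise one: $\cs \in \Ne_{\mathsf{max}}(G_H)$ iff $\cs(v) \in \Ne(G_H(v))$ for every $v \in V$. The entire construction is engineered so that the ``matching-pennies'' gadget $\psi_2,\psi_3$ (which are exactly the goals of Example~\ref{ex:noNE}, and hence admit no maximal equilibrium on their own) is switched \emph{on} at a valuation $v$ precisely when $v,\cs(v) \models \lambda(\varphi)$, whereas the conjuncts $\fcons(X_2,Y_2) \wedge \fcons(X_3,Y_3)$ are purely factual (their truth depends only on $v$, not on any strategy) and serve to single out the \emph{consistent} valuations, on which the private copies $Y_i$ faithfully record $\restr{v}{X_i}$ and therefore let players $2,3$ condition their revelations on the values of $X_i$ that they cannot themselves observe.

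For the direction ($\Leftarrow$), assume the coalition $\{2,3\}$ wins $H$, so $(t_1,t_2,t_3) \models \neg\varphi$ for every $t_1 \in V(X)$. I would set $\cs_i := \chi_i(t_i)$ for $i \in \{2,3\}$ and take $\cs_1$ arbitrary (player $1$ has goal $\top$), then verify $\cs(v) \in \Ne(G_H(v))$ for all $v$. If $v$ is \emph{consistent} (both $\cons(\restr{v}{X_i},\restr{v}{Y_i})$ hold), applying Lemma~\ref{lm:truthPres} with $\alpha = \neg\varphi$ and the witness $t_1 = \restr{v}{X}$ gives $v,\cs(v) \models \lambda(\neg\varphi)$, hence $v,\cs(v) \models \gamma_2$ and $v,\cs(v)\models\gamma_3$; all three players already obtain utility $1$, so no deviation can help. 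If $v$ is \emph{inconsistent}, one $\fcons$-conjunct is false at $v$, and since its truth is strategy-independent, no deviation of player $2$ or $3$ can ever satisfy $\gamma_i$; so $\cs(v)$ is again an equilibrium. Thus $\cs \in \Ne_{\mathsf{max}}(G_H)$.

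For ($\Rightarrow$), suppose $\Ne_{\mathsf{max}}(G_H) \neq \emptyset$. By Lemma~\ref{lm:setUniform} I may choose a witness $\cs^* \in \CStwo$, so that $\cs^*_1$ is globally uniform and each $\cs^*_i$ ($i \in \{2,3\}$) is globally $Y_i$-uniform; this is exactly the shape on which $\mu_i$ is defined, and I put $t_i := \mu_i(\cs^*_i)$. I claim $(t_2,t_3)$ wins $H$. If not, there is $t_1$ with $(t_1,t_2,t_3)\models\varphi$; pick the valuation $v$ with $\restr{v}{X}=t_1$ and $\restr{v}{Y}=\zeta(t_1)$, which is consistent on both coordinates. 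Lemma~\ref{lm:truthPres} (now with $\alpha=\varphi$) yields $v,\chi(\mu(\cs^*))(v)\models\lambda(\varphi)$, and since $\cs^*$ and $\chi(\mu(\cs^*))$ induce the same $A_i$-revelations at the consistent valuation $v$ — the content of the induction underlying Lemma~\ref{lm:chiMu}, which transports the truth of every $\lambda(\beta)$ and not merely of the full goal — we also obtain $v,\cs^*(v)\models\lambda(\varphi)$. Lemma~\ref{lm:noNeMax} then gives $\cs^*\notin\Ne_{\mathsf{max}}(G_H)$, a contradiction, so $(t_2,t_3)$ is winning.

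The delicate part, and the step I would write out most carefully, is this last transport: Lemma~\ref{lm:truthPres} is phrased for the \emph{translated} profile $\chi_i(t_i)$, whereas the equilibrium in hand is $\cs^*$, and the two agree only on the revelations of the $A_i$-variables and only at consistent valuations. I therefore need the inductive equivalence of Lemma~\ref{lm:chiMu} read at the level of the subformula $\lambda(\varphi)$ rather than of $\gamma_i$, together with Lemma~\ref{lm:compTruth} (the goals are insensitive to the $A$- and $\{q,r\}$-coordinates of $v$), so that the single consistent valuation chosen above is an adequate witness. Everything else — the strategy-independence of $\fcons$ and the irrelevance of player $1$'s strategy — is routine once Proposition~\ref{prop:NEST} has localized the argument to the pointed games $G_H(v)$.
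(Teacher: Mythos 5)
Your proof is correct and follows essentially the same route as the paper's: the same case split on consistency of $v$ using the strategy-independence of the $\fcons$ conjuncts, the same use of Lemmas~\ref{lm:truthPres}, \ref{lm:setUniform}, \ref{lm:chiMu} and \ref{lm:noNeMax} with the translations $\chi_i,\mu_i$, and your explicit remark that Lemma~\ref{lm:chiMu} must be applied at the level of the subformula $\lambda(\varphi)$ is a point the paper glosses over by citing that lemma directly. The only slip is cosmetic: your direction labels are swapped relative to the statement as written (``winning strategy $\Rightarrow$ $\Ne_{\mathsf{max}}(G_H)\neq\emptyset$'' is the $(\Rightarrow)$ direction of the ``iff'').
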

\begin{proof}
Let $(t_2,t_3)$ be a winning strategy for the coalition of players 2
and 3 in $H$. By definition of a winning strategy, for all $t_1 \in
T_1$, we have $(t_1, t_2, t_3) \models \neg \varphi$. Let $t=(t_1,
t_2, t_3)$ and consider the observation game $G_H$.

Note that in $G_H$, by the definition of player 1's goal $\gamma_1$,
we have for all $v \in V$ and for all $\cs \in \CS$, $u_1(v,
\cs(v))=1$. Now consider an arbitrary valuation $v \in V$. There are
two cases to consider.

\begin{itemize}
 \item[Case 1.] Suppose there exists $i \in \{2,3\}$ such that
   $\cons(\restr{v}{X_i},\restr{v}{Y_i})$ does not hold. By semantics,
   for all $\cs \in \CS$ and for all $i \in \{2,3\}$ we have $v,\cs(v)
   \not\models \fcons(X_2,Y_2) \wedge \fcons(X_3,Y_3)$ and thus
   $v,\cs(v) \not\models \gamma_i$. Therefore, $u_i(v,\cs(v)) = 0$.
 \item[Case 2.] Suppose for all $i \in \{2,3\}$,
   $\cons(\restr{v}{X_i},\restr{v}{Y_i})$ holds. By semantics we have
   for all $\cs \in \CS$, for all $i \in \{2,3\}$, $v,\cs(v) \models
   \fcons(X_2,Y_2) \wedge \fcons(X_3,Y_3)$. Let $t_1'=\restr{v}{X}$
   and $t'=(t_1',t_2,t_3)$. By definition of $t'$, we have
   $\restr{t'}{X}=\restr{v}{X}$. Since $(t_2,t_3)$ is a winning
   strategy for players 2 and 3 in $H$, we have $(t_1', t_2, t_3)
   \models \neg \varphi$. By Lemma \ref{lm:truthPres} $v,
   (\cs_1,\chi_2(t_2),\chi_3(t_3))(v) \models \lambda(\neg \varphi)$.
\end{itemize}
Since the choice of $v$ was arbitrary, we can conclude that
$(\cs_1,\chi_2(t_2),\chi_3(t_3)) \in \Ne_{\mathsf{max}}(G_H)$. In
fact, note that the argument shows a stronger claim - for all $\cs'_1
\in \CS_1$, the uniform strategy profile $(\cs'_1,
\chi_2(t_2),\chi_3(t_3)) \in \Ne_{\mathsf{max}}(G_H)$.

\noindent $(\Leftarrow)$ Suppose $\Ne_{\mathsf{max}}(G_H) \neq
\emptyset$. By Lemma \ref{lm:setUniform}, there exists a $\cs \in
\CStwo$ such that $\cs \in \Ne_{\mathsf{max}}(G_H)$. Let $(t_2,t_3) =
(\mu_2(\cs_2), \mu_3(\cs_3))$. We argue that $(t_2,t_3)$ is a winning
strategy for the coalition of players 2 and 3 in $H$.

Suppose not, then there exists $t'_1 \in T_1$ such that for the 
strategy profile $t'=(t'_1, t_2, t_3)$, we have $t' \models \varphi$. Consider
the pair of strategies $(\cs'_2, \cs'_3) = (\chi_2(t_2),\chi_3(t_3))$
and a valuation $v \in V$ such that $\restr{v}{X} = \restr{t'}{X}$ and
for all $i \in \{2,3\}$, $\cons(\restr{v}{X_i}, \restr{v}{Y_i})$. By
Lemma \ref{lm:truthPres} we have that $v, (\cs'_1, \cs'_2, \cs'_3)
\models \lambda(\varphi)$ for all $\cs'_1 \in \CS_1$. In particular,
$v, (\cs_1, \cs'_2, \cs'_3) \models \lambda(\varphi)$. Since
$\cs_i'=\chi_i(\mu_i(\cs_i))$ for $i \in \{2,3\}$, by Lemma
\ref{lm:chiMu} we have that $v, (\cs_1, \cs_2, \cs_3) \models
\lambda(\varphi)$.
By Lemma \ref{lm:noNeMax}, $\cs \not\in \Ne_{\mathsf{max}}(G_H)$ which
is a contradiction.
\end{proof}

\begin{theorem}
  \label{thm:emptyNeMax}
Given an observation game $G$, checking if $\Ne_{\mathsf{max}}(G) \neq
\emptyset$ is NEXPTIME-complete.
\end{theorem}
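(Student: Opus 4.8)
The plan is to prove NEXPTIME-completeness by splitting into a hardness half and a membership half, where almost all of the hardness work has already been assembled in the preceding construction.

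\textbf{Hardness.} Here essentially nothing remains to do beyond citing what is above. The map $H \mapsto G_H$ is polynomial-time computable (the construction notes $G_H$ has size polynomial in $H$), and Lemma~\ref{lm:nexpC} certifies that the coalition $\{2,3\}$ has a winning strategy in the {\sc Dqbfg} instance $H$ iff $\Ne_{\mathsf{max}}(G_H) \neq \emptyset$. Since deciding whether that coalition wins is NEXPTIME-complete, this polynomial reduction shows that deciding $\Ne_{\mathsf{max}}(G) \neq \emptyset$ is NEXPTIME-hard. I would compress this into a single sentence invoking the reduction and Lemma~\ref{lm:nexpC}.

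\textbf{Membership.} The idea is a guess-and-verify argument. I would nondeterministically guess a witnessing uniform strategy profile $\cs$ and then verify deterministically in exponential time that $\cs \in \Ne_{\mathsf{max}}(\obsgame)$. The care point is representation: a uniform strategy $\cs_i : V \to S_i$ can require exponential space, so rather than guessing a symbolic formula I would guess $\cs$ \emph{explicitly} as a table indexed by the $2^k$ valuations, storing for each $v$ and $i$ the strategy $\cs_i(v) \in S_i$ of size $\bigO(nk)$; the whole table has size $\bigO(2^k n^2 k)$, well within the exponential-size witness budget of NEXPTIME. I would first scan this table to confirm it is genuinely uniform, i.e.\ that $\cs_i(v) = \cs_i(w)$ whenever $v \sim_i w$, which is a direct exponential-time check.

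For the verification that $\cs$ is a maximal equilibrium I would route everything through Proposition~\ref{prop:NEST}: $\cs \in \Ne_{\mathsf{max}}(\obsgame)$ iff for every $v \in V$ the profile $\cs(v)$ is a Nash equilibrium of the pointed game $G(v)$. Thus for each of the $2^k$ valuations $v$, each player $i$, and each alternative $s_i' \in S_i$, I would test whether $\util_i(v,(s_i',\cs(v)_{-i})) > \util_i(v,\cs(v))$; each such test is two model-checking queries, which by Theorem~\ref{thm:mc} lie in PSPACE and hence run in deterministic exponential time. Since there are only exponentially many triples $(v,i,s_i')$, the entire verification is deterministic exponential time, so the overall procedure is in NEXPTIME. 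The main obstacle is exactly this membership bookkeeping: one must resist the temptation to invoke the symbolic verifier of Theorem~\ref{thm:compVerifNeMax} (whose PSPACE bound is measured in the size of the supplied $\cs$, which here is exponential, yielding only EXPSPACE), and instead verify directly against the explicit table via Proposition~\ref{prop:NEST} and Theorem~\ref{thm:mc} so that the time bound stays exponential rather than the space bound.
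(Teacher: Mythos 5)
Your proof is correct and follows the same overall decomposition as the paper: hardness via the polynomial-time reduction $H \mapsto G_H$ together with Lemma~\ref{lm:nexpC}, and membership via guess-and-verify with an exponential-size witness. Where you diverge is in how the verification step is justified, and your version is the more careful one. The paper guesses $\cs$ symbolically as $n^2k$ propositional formulas, each of size at most exponential in $k$, and then simply asserts that membership follows from Theorem~\ref{thm:compVerifNeMax}; as you observe, that theorem's PSPACE bound is measured in the size of its own input, which now includes the exponentially large encoding of $\cs$, so invoked as a black box it yields only exponential space in $|G|$ and hence a doubly-exponential time bound, which does not establish NEXPTIME. The paper's argument is rescued only by unfolding the verification algorithm and noting that each individual model-checking query of Theorem~\ref{thm:mc} is executed on a concrete strategy profile of polynomial size, so the deterministic verification amounts to exponentially many single-exponential-time checks, i.e., single-exponential time overall. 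Your explicit table encoding of $\cs$, the uniformity scan, and the direct enumeration of triples $(v,i,s_i')$ routed through Proposition~\ref{prop:NEST} and Theorem~\ref{thm:mc} make exactly this unfolding explicit; in effect you close a small gap that the paper leaves to the reader, at the cost of not reusing Theorem~\ref{thm:compVerifNeMax} as a lemma.
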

\begin{proof}
Recall that for each player, a uniform strategy $\cs_i$ can be encoded
as a tuple of Boolean functions $(\cs_i^j(p_i))_{j \in N, p_i \in
  P_i}$ each of which can be represented by a propositional formula
$\beta^j_i(p_i)$ whose size is at most exponential in $k$. To show
that the problem is in NEXPTIME, we first guess a uniform strategy
profile $\cs$. This involves guessing $n^2k$ formulas each of which
can be exponential in $k$.
Membership in NEXPTIME then follows from
Theorem~\ref{thm:compVerifNeMax}.

By Lemma \ref{lm:nexpC}, it follows that checking if
$\Ne_{\mathsf{max}}(G) \neq \emptyset$ is NEXPTIME-hard. Thus the 
claim follows.
\end{proof}

In the case of pessimist and optimist outcome relations, an
  argument similar to that given in the proof of Theorem \ref{thm:emptyNeMax}
  along with Theorem \ref{thm:compVerifNePess} immediately gives us an
  upper bound on the complexity of emptiness problem.

\begin{theorem}
  \label{thm:KwemptyNePess}
Given an observation game $G$, checking if $\Ne_{\mathsf{x}}(G)
  \neq \emptyset$ is in NEXPTIME where $\mathsf{x} \in
  \{\mathsf{pess}, \mathsf{opt}\}$.
\end{theorem}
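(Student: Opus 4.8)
The plan is to establish membership in NEXPTIME by the same guess-and-verify strategy used in the membership argument of Theorem~\ref{thm:emptyNeMax}, with Theorem~\ref{thm:compVerifNePess} playing the role that Theorem~\ref{thm:compVerifNeMax} played there. First I would nondeterministically guess a uniform strategy profile $\cs$. As noted in the representation discussion, $\cs$ is encoded by the $n^2k$ Boolean functions $(\cs_i^j(p_i))_{j \in N,\, p_i \in P_i}$, each presented as a propositional formula $\beta_i^j(p_i)$ whose size is at most exponential in $k$; hence the entire certificate has size at most exponential in the input, and it can be written down in nondeterministic exponential time.

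Second, I would deterministically check that the guessed $\cs$ belongs to $\Ne_{\mathsf{x}}(G)$ for $\mathsf{x} \in \{\mathsf{pess}, \mathsf{opt}\}$, invoking the verification procedure of Theorem~\ref{thm:compVerifNePess}, which is in PSPACE. A nondeterministic exponential-time guess followed by a polynomial-space verification places the emptiness question in NEXPTIME, exactly as for the maximal relation in Theorem~\ref{thm:emptyNeMax}.

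The step needing the most care---and the main obstacle---is to confirm that handing an exponential-size certificate to a PSPACE verifier keeps the total time at a single exponential rather than a double exponential. I would handle this by unfolding the verification into its primitive operations. Computing $\min \uu_i(v,\cs)$ (respectively $\max \uu_i(v,\cs)$) requires, for each of the at most $2^k$ valuations $w$ with $w \sim_i v$, evaluating the profile $\cs(w)$ from the guessed formulas---time polynomial in the certificate size, hence $2^{\bigO(k)}$---and then model-checking $u_i(w,\cs(w))$, which by Theorem~\ref{thm:mc} is PSPACE in the polynomial-size pair $(\gamma_i,\cs(w))$ and so runs in time $2^{\mathrm{poly}(n,k,m)}$. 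Ranging over the $n$ players, the $2^k$ valuations $v$, and the at most $2^{nk}$ candidate deviations $s_i' \in S_i$, and comparing the two extrema in each case, the whole verification runs in deterministic time $2^{\mathrm{poly}(n,k,m)}$. Composing the exponential guess with this exponential-time verification delivers the claimed NEXPTIME upper bound.
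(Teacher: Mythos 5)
Your proposal is correct and follows essentially the same route as the paper, which simply states that the argument of Theorem~\ref{thm:emptyNeMax} (guess an exponential-size uniform strategy profile, then verify) combined with Theorem~\ref{thm:compVerifNePess} gives the NEXPTIME upper bound. Your additional accounting---unfolding the PSPACE verification into exponentially many model-checking calls, each on a polynomial-size instance, to rule out a doubly exponential blow-up---is a worthwhile detail that the paper leaves implicit.
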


\subsection{Knowing-Whether Observation Games} \label{subsec.complkw}
\label{subsec:KwComp}
In the case of $\Kw$  games, we show that both the
verification problem and the emptiness problem have ``better''
complexity bounds which match the known complexity results for the
corresponding questions in Boolean games. We first recall the relevant
results for Boolean games.

\begin{theorem}[\cite{HTW17}]
  \label{thm:compVerifBool}
  \textit{(Verification)} Given a Boolean game $\boolgame$ along with
  a strategy profile $v$ checking if $v \in \Ne(\boolgame)$ is
  co-NP-complete.
\end{theorem}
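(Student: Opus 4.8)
The plan is to prove membership in co-NP and co-NP-hardness separately, the whole argument hinging on the fact that goals in a Boolean game are \emph{propositional} (in $L^B$), so that evaluating $\boolutil_i$ at a fixed valuation takes polynomial time; this is in sharp contrast with the PSPACE-complete model checking of $L^K$ from Theorem~\ref{thm:mc}, and it is exactly what pulls the complexity down to co-NP.

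For the upper bound I would show that the complement problem, deciding $v \notin \Ne(\boolgame)$, is in NP. By definition $v \notin \Ne(\boolgame)$ iff some player $i$ has a profitable deviation, i.e.\ there is $v_i' \subseteq P_i$ with $\boolutil_i(v) < \boolutil_i(v_i',v_{-i})$; since outcomes are $0/1$-valued this is equivalent to $v \not\models \gamma_i$ together with $(v_i',v_{-i}) \models \gamma_i$. A nondeterministic procedure therefore guesses $i \in N$ and a deviation $v_i' \subseteq P_i$ --- a witness of size $\bigO(k)$ --- and then performs the two propositional satisfaction checks, each of which is polynomial. Hence the complement lies in NP and the verification problem lies in co-NP.

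For the lower bound I would reduce from {\sc Unsat}. Given a propositional formula $\phi$ over $x_1,\dots,x_n$, introduce a fresh variable $y$ and build the single-player Boolean game $\boolgame_\phi$ with $N=\{1\}$, $P_1=\{x_1,\dots,x_n,y\}$ and goal $\gamma_1 = \neg y \et \phi$; as the test profile take the valuation $v$ that makes $y$ true (and, say, every $x_j$ false). Then $v \not\models \gamma_1$, so player~$1$ currently loses, and a profitable deviation for player~$1$ exists iff some assignment to $P_1$ satisfies $\neg y \et \phi$, i.e.\ iff $\phi$ is satisfiable. Consequently $v \in \Ne(\boolgame_\phi)$ iff $\phi$ is unsatisfiable. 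The construction is polynomial, so co-NP-hardness follows (if a genuine two-player instance is preferred, one adds a dummy player with goal $\top$, which cannot affect player~$1$'s deviations).

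I do not expect a real obstacle, as both directions are routine. The single point requiring care is to guarantee that the tested profile already \emph{falsifies} the deviating player's goal, so that membership in $\Ne$ collapses precisely to the \emph{non}-existence of a satisfying assignment; the blocking variable $y$ is introduced solely to force this, sidestepping the issue that a tautological $\phi$ would otherwise admit no falsifying profile. It is worth stressing once more that the whole co-NP bound rests on $\gamma_i \in L^B$: with arbitrary $L^K$ goals the per-player check would jump to PSPACE and the analysis would instead land at the bounds established earlier for observation games.
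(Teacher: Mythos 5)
Your proof is correct. Note, however, that the paper does not prove this statement at all: it is imported verbatim from \cite{HTW17}, so there is no in-paper argument to compare yours against. What you give is the standard argument one would expect --- guess-and-check for the co-NP upper bound (relying on the fact that goals are in $L^B$, so each $\boolutil_i$ evaluation is polynomial), and a reduction from {\sc Unsat} for hardness --- and your use of the blocking variable $y$ correctly handles the only delicate point, namely forcing the tested profile to falsify the deviator's goal so that membership in $\Ne(\boolgame_\phi)$ coincides exactly with unsatisfiability of $\phi$.
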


\begin{theorem}[\cite{BonzonLLZ06}]
  \label{thm:emptyNeBool}
  \textit{(Emptiness)} Given a Boolean game $\boolgame$, checking if
  $\Ne(\boolgame) \neq \emptyset$ is $\Sigma^p_2$-complete.
\end{theorem}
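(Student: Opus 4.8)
The plan is to prove both directions of the completeness claim: membership in $\Sigma^p_2$ (leaning on the co-NP verification bound of Theorem~\ref{thm:compVerifBool}) and $\Sigma^p_2$-hardness (by a reduction from the canonical quantified-Boolean-formula problem). For membership, I would first rewrite the equilibrium condition explicitly. Since payoffs are binary, a player $i$ has a profitable deviation from $v$ exactly when $v \not\models \gamma_i$ yet some local valuation $v_i' \subseteq P_i$ satisfies $(v_i',v_{-i}) \models \gamma_i$. Hence $v \in \Ne(\boolgame)$ iff for every player $i \in N$ and every $v_i' \subseteq P_i$ we have $(v \models \gamma_i) \vel ((v_i',v_{-i}) \not\models \gamma_i)$. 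The matrix of this universally quantified statement is checkable in polynomial time once $v$, $i$, and $v_i'$ are fixed (two propositional evaluations), and all witnesses are polynomially bounded. As $\Ne(\boolgame) \neq \emptyset$ is just the existential closure $\exists v\,[\,v \in \Ne(\boolgame)\,]$, the whole predicate has the shape $\exists v\,\forall (i,v_i')\,R$ with $R$ polynomial-time, which is precisely $\Sigma^p_2$; equivalently, guess $v$ and invoke the co-NP test of Theorem~\ref{thm:compVerifBool}, placing the problem in $\mathrm{NP}^{\mathrm{coNP}} = \Sigma^p_2$.

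For hardness I would reduce from deciding, for a Boolean formula $\varphi$ over disjoint blocks $X=\{x_1,\dots,x_m\}$ and $Y=\{y_1,\dots,y_k\}$, whether $\exists X\,\forall Y\,\varphi(X,Y)$ holds. Given $\varphi$, construct in polynomial time a four-player Boolean game $\boolgame$ with two fresh variables $p,q$ and players $E,A,c,d$ controlling $X$, $Y$, $\{p\}$, $\{q\}$ respectively, with Boolean goals $\gamma_E = \top$, $\gamma_A = \neg\varphi$, $\gamma_c = \varphi \vel (p\eq q)$, and $\gamma_d = \varphi \vel \neg(p\eq q)$. The intuition is that the pair $c,d$ embeds a matching-pennies subgame on $p,q$ that is switched off (both trivially satisfied) when $\varphi$ is true and switched on when $\varphi$ is false. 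The claim to prove is $\Ne(\boolgame) \neq \emptyset$ iff $\exists X\,\forall Y\,\varphi$.

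To verify this claim I would argue as follows. In any profile where $\varphi$ evaluates to false, $\gamma_c$ and $\gamma_d$ collapse to $p\eq q$ and $\neg(p\eq q)$; exactly one of $c,d$ is unsatisfied and can flip its own variable to become satisfied (this flip cannot change $\varphi$ since $p,q \notin X \cup Y$), so no $\varphi$-false profile is a Nash equilibrium. Consequently every equilibrium makes $\varphi$ true, and at such a profile $E$ (goal $\top$) and both $c,d$ are at utility $1$ with no profitable deviation. The only remaining player is $A$, whose utility equals $\neg\varphi$; at a $\varphi$-true profile $(X^*,Y^*,\cdot,\cdot)$ player $A$ has no profitable deviation iff no $Y'$ yields $\varphi(X^*,Y')=0$, i.e. iff $\forall Y\,\varphi(X^*,Y)$. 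Thus the equilibria are exactly the profiles built on some $X^*$ with $\forall Y\,\varphi(X^*,Y)$ (any $Y^*,p^*,q^*$ completing it), and one exists iff $\exists X^*\,\forall Y\,\varphi(X^*,Y)$, which completes the reduction and yields $\Sigma^p_2$-hardness.

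The main obstacle is designing the hardness gadget correctly. The naive two-player game with $\gamma_E=\varphi$ and $\gamma_A=\neg\varphi$ also admits \emph{spurious} equilibria in the $\varphi$-false region (those witnessing $\exists Y\,\forall X\,\neg\varphi$), which would make equilibrium existence equivalent to a disjunction of two $\Sigma^p_2$-conditions rather than the single intended one. The matching-pennies gadget on the fresh variables $p,q$ is exactly what destroys those spurious equilibria while leaving the intended $\varphi$-true equilibria intact; the delicate points to check are that flipping $p$ or $q$ never perturbs the value of $\varphi$, and that $c$ and $d$ are simultaneously satisfiable (indeed indifferent) \emph{precisely} when $\varphi$ holds, so that the gadget is inert on the intended equilibria and destabilising everywhere else.
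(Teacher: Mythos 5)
Your proof is correct, but note that the paper does not prove this statement at all: Theorem~\ref{thm:emptyNeBool} is imported by citation from Bonzon, Lagasquie-Schiex, Lang and Zanuttini, so there is no in-paper argument to compare against. Measured against the original source, your argument is essentially the standard one. The membership half is exactly the expected guess-and-verify: the equilibrium predicate is a $\forall(i,v_i')$ statement with a polynomial-time matrix (your rewriting using the $0/1$ payoffs is right), so existence is $\exists\forall$, i.e.\ $\mathrm{NP}^{\mathrm{coNP}}=\Sigma^p_2$, consistent with Theorem~\ref{thm:compVerifBool}. The hardness half via a reduction from $\Sigma_2$-QBF with a matching-pennies gadget gated by $\varphi$ is also the classical construction; the verification you sketch goes through: a $\varphi$-false profile is destabilised because exactly one of $c,d$ is unsatisfied and can flip its fresh variable without perturbing $\varphi$, while a profile with $\forall Y\,\varphi(X^*,Y)$ is an equilibrium because $A$ cannot make $\neg\varphi$ true and everyone else is already at utility $1$. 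The only cosmetic difference from the published proof is that the original folds the two gadget variables into the existential and universal players to get a two-player game, whereas you introduce two extra players $c,d$; this costs nothing for correctness or for polynomiality of the reduction. Your diagnosis of why the naive two-player zero-sum construction fails (spurious equilibria in the $\varphi$-false region) is exactly the right motivation for the gadget.
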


In the context of $\Kw$ games, as an immediate consequence
of Proposition~\ref{lm:KwAllVal} we get that the model checking question for
the fragment $L^{\Kw}$ is in polynomial time.

\begin{corollary}
  \label{cor:Kwmc}
  Given $\alpha \in L^{\Kw}$ along with a strategy profile $s \in S$ and a
  valuation $v \in V$, checking if $v, s \models \alpha$ is in PTIME.
\end{corollary}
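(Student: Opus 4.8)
The plan is to reduce the model-checking problem for $L^\Kw$ formulas to simple propositional model checking by exploiting Proposition~\ref{lm:KwAllVal} (the earlier \texttt{restatabletwo}), which tells us that for any $\phi \in L^\Kw$, the truth value of $v,s \models \phi$ does not depend on $v$ and in fact coincides with $s \models \phi$. This is the key structural fact that collapses the epistemic complexity.

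First I would observe that, since the given $\alpha \in L^\Kw$ is built up from atoms of the form $\Kw_j p_i$ using only $\neg$ and $\vee$ (equivalently, by abbreviation, also $\wedge$, $\imp$, $\eq$), evaluating $v,s \models \alpha$ amounts to evaluating a Boolean combination of the atomic assertions $\Kw_j p_i$. By Proposition~\ref{lm:KwAllVal}, the truth of $\alpha$ is independent of $v$, so I can ignore $v$ entirely and reduce to deciding $s \models \alpha$. Then I would note that the truth of each atom $\Kw_j p_i$ under $s$ is decidable directly from the strategy profile: $s \models \Kw_j p_i$ holds precisely when player $i$ reveals $p_i$ to player $j$, i.e.\ when $p_i \in s_i(j)$ (equivalently $p_i \in P_j(s)$). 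This is an immediate lookup in the compactly represented strategy $s$, requiring no epistemic fixpoint computation over an information set.

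The second main step is the complexity accounting. There are at most $|\alpha|$ atomic subformulas $\Kw_j p_i$ occurring in $\alpha$, and for each the membership test $p_i \in s_i(j)$ takes time polynomial in the size of $s$ (which is $\bigO(nk)$). Having determined the Boolean value of every atom, I evaluate the propositional formula $\alpha$ bottom-up over its parse tree in time linear in $|\alpha|$. The whole procedure therefore runs in time polynomial in $|\alpha|$ and $|s|$, giving membership in PTIME.

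The only subtlety, rather than a genuine obstacle, is to justify carefully that the atomic truth values can be read off without any recursion through the knowledge modality. This is exactly what the simplified epistemic semantics plus Proposition~\ref{cor.one} (knowledge elimination for $L^\Kw$) guarantee: in our setting $v,s \models \Kw_j p_i$ iff for all $w \sim_j^s v$ we have $w,s \models p_i$ or for all such $w$ we have $w,s \not\models p_i$, and this holds iff $p_i \in P_j(s)$, independent of $v$. Contrast this with the general $L^K$ case underlying Theorem~\ref{thm:mc}, where nested modalities over the compactly presented observation relation force PSPACE-hardness; here the absence of a recursive knowledge clause in $L^\Kw$ is precisely what drops the complexity to polynomial time.
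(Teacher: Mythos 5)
Your proposal is correct and takes essentially the same route as the paper, which states the corollary as an immediate consequence of Proposition~\ref{lm:KwAllVal}: the valuation-independence of $L^{\Kw}$ formulas reduces the problem to evaluating a propositional combination of atoms $\Kw_j p_i$, each decided by the lookup $p_i \in s_i(j)$ (exactly the atomic case in the paper's proof of that proposition). Your complexity accounting and the contrast with the PSPACE-hardness of full $L^K$ model checking match the paper's intent; there is nothing to add.
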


We then have the following results for the complexity of verification
and emptiness in $\Kw$ games.

\begin{theorem}
  \label{thm:KwcompVerifNeMax}
  Given a $\Kw$ game $\obsgame = (N,(P_i)_{i \in N}, (\gamma_i)_{i \in
    N})$ and a uniform strategy profile $\cs \in \CS$, checking if
  $\cs \in \Ne_{\mathsf{max}}(G)$ is co-NP-complete.
\end{theorem}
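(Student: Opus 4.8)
The plan is to establish membership in co-NP and co-NP-hardness separately, the latter by a reduction from the verification problem for Boolean games (Theorem~\ref{thm:compVerifBool}).

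For membership, I would show that the complementary problem --- deciding whether $\cs \notin \Ne_{\mathsf{max}}(G)$ --- lies in NP. By Proposition~\ref{prop:NEST}, $\cs \in \Ne_{\mathsf{max}}(G)$ holds exactly when $\cs(v) \in \Ne(G(v))$ for every $v \in V$, so $\cs \notin \Ne_{\mathsf{max}}(G)$ iff there exist a player $i \in N$, a valuation $v \in V$, and a strategy $s_i' \in S_i$ with $\util_i(v,(s_i',\cs_{-i}(v))) = 1$ and $\util_i(v,\cs(v)) = 0$. The nondeterministic algorithm guesses the triple $(i, v, s_i')$, whose total size is $\bigO(nk)$ and hence polynomial. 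To verify it, I first compute the strategy profiles $\cs(v)$ and $(s_i', \cs_{-i}(v))$ by evaluating each of the $\bigO(n^2 k)$ propositional encodings $\beta_j^k(p_j)$ at $v$; although such a formula may be large, it is part of the input, so each evaluation is linear in the input size. It then remains to model check $\gamma_i \in L^{\Kw}$ against both profiles, which by Corollary~\ref{cor:Kwmc} is in PTIME. Thus the complement is in NP and the problem is in co-NP.

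For hardness, I would reduce from Boolean game verification: given a Boolean game $\boolgame$ and a strategy profile (valuation) $v$, deciding $v \in \Ne(\boolgame)$ is co-NP-complete. The reduction maps $(\boolgame, v)$ to the pair $(\obsgame_\boolgame, \dot{s^v})$, where $\obsgame_\boolgame$ is the associated $\Kw$ game of Section~\ref{subsec:BoolToKw} and $\dot{s^v}$ is the globally uniform profile built from $v$. Both are computable in polynomial time: $\obsgame_\boolgame$ by the Observation following its definition, and $\dot{s^v}$ because all its Boolean encodings are constant. The crux is then to verify the pointwise equivalence
\[ v \in \Ne(\boolgame) \quad \Longleftrightarrow \quad \dot{s^v} \in \Ne_{\mathsf{max}}(\obsgame_\boolgame). \]

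The forward direction is exactly the argument in the $(\Leftarrow)$ part of the proof of Theorem~\ref{thm:BoolToKwObs}. For the converse I would argue contrapositively: if $v \notin \Ne(\boolgame)$, pick $i$ and $v_i'$ with $\boolutil_i(v_i',v_{-i}) = 1 > 0 = \boolutil_i(v)$ and set $v' = (v_i',v_{-i})$. Observing that $s^{v'}_j = s^v_j$ for every $j \neq i$, the globally uniform deviation $\cs_i' := \dot{s^{v'}_i}$ satisfies $(\cs_i', \dot{s^v}_{-i})(w) = s^{v'}$ for all $w$; Lemma~\ref{lm:BoolObsOutcome} then gives $\util_i(w,(\cs_i',\dot{s^v}_{-i})(w)) = \boolutil_i(v') = 1$ while $\util_i(w,\dot{s^v}(w)) = \boolutil_i(v) = 0$, so taking $w = v$ witnesses $\dot{s^v} \notin \Ne_{\mathsf{max}}(\obsgame_\boolgame)$ under the maximal relation. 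The main obstacle is precisely this converse: Theorem~\ref{thm:BoolToKwObs} only relates \emph{emptiness} of the two equilibrium sets, so I must sharpen it into a profile-by-profile correspondence, which is where Lemmas~\ref{lm:BoolObsOutcome}, \ref{lm:kwglobalpayoff}, and \ref{lm:BoolObsBiject} are invoked to transport the profitable deviation across the reduction.
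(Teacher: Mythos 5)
Your proposal is correct and follows essentially the same route as the paper: co-NP membership via the polynomial-time model checking of $L^{\Kw}$ (Corollary~\ref{cor:Kwmc}) applied to a guessed witness $(i,v,s_i')$, and co-NP-hardness by the reduction $(\boolgame,v)\mapsto(\obsgame_\boolgame,\dot{s^v})$ with the pointwise equivalence proved through Lemmas~\ref{lm:BoolObsOutcome}, \ref{lm:kwglobalpayoff} and \ref{lm:BoolObsBiject}. If anything, you are slightly more careful than the paper in the converse direction, where you explicitly verify that $s^{v'}_j=s^v_j$ for $j\neq i$, so that $\dot{s^{v'}}$ really is reachable from $\dot{s^v}$ by a unilateral deviation of player $i$ --- a point the paper's own argument leaves implicit.
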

\begin{proof}
  Membership in co-NP follows immediately from Corollary
  \ref{cor:Kwmc}. For hardness, we show a reduction from the
  corresponding verification problem in Boolean games which is: given
  a Boolean game $\boolgame$ and a strategy profile $v$ in
  $\boolgame$, to check if $v \in \Ne(\boolgame)$. By Theorem
  \ref{thm:compVerifBool} this problem is known to be co-NP-complete.

  Given a Boolean game $\boolgame$ and a strategy profile $w$ in
  $\boolgame$, let $\obsgame_\boolgame$ and $\dot{s^w}$ be the
  corresponding observation game and the globally uniform strategy
  profile in $\obsgame_\boolgame$ as defined in Section
  \ref{subsec:BoolToKw}. We argue that $w \in \Ne(\boolgame)$ iff
  $\dot{s^w} \in \Ne_{\mathsf{max}}(\obsgame_\boolgame)$.

  \medskip
  
  \noindent $(\Rightarrow)$ This direction is exactly the same as the
  first part of the proof of Theorem \ref{thm:BoolToKwObs}. Suppose $w
  \in \Ne(\boolgame)$ and $\dot{s^w} \not\in
  \Ne_{\mathsf{max}}(\obsgame_\boolgame)$. Then there exists $i \in
  N$, $v \in V$ and $t_i \in S_i$ such that
  $\util_i(v,(t_i,\dot{s^w}_{-i}(v))) > \util_i(v,\dot{s^w}(v))$. Let
  $w'=\chi^{-1}([\dot{t}_i,\cs_{-i}])$ From Lemmas
  \ref{lm:BoolObsOutcome}, \ref{lm:kwglobalpayoff} and
  \ref{lm:BoolObsBiject} it follows that
  $\boolutil_i(w')=\util_i(v,(\dot{t}_i,\cs_{-i})(v)) >
  \util_i(v,\dot{s^w}(v))=\boolutil_i(w)$ for all $v \in V$. Therefore
  $w \not\in \Ne(\boolgame)$ which is a contradiction.

  \medskip
  
  \noindent $(\Leftarrow)$ Suppose $\dot{s^w} \in
  \Ne_{\mathsf{max}}(\obsgame_\boolgame)$ and $w \not\in
  \Ne(\boolgame)$. Then there exists $i \in N$ and $w_i'$ such that
  $u_i^\boolgame((w_i',w_{-i}) > u_i^\boolgame(w)$. Let
  $w'=(w_i',w_{-i})$. From Lemma \ref{lm:BoolObsOutcome} we have that
  $u_i(v,\dot{s^{w'}}(v)) = u_i^\boolgame(w') > u_i^\boolgame(w) =
  u_i(v,\dot{s^{w}}(v))$. This implies that $\dot{s^w} \not\in
  \Ne_{\mathsf{max}}(\obsgame_\boolgame)$ which is a contradiction.
\end{proof}

\begin{theorem}
  \label{thm:KwemptyNeMax}
Given a $\Kw$ game $G$, checking if $\Ne_{\mathsf{max}}(G) \neq
\emptyset$ is $\Sigma^P_2$-complete.
\end{theorem}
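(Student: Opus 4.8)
The plan is to establish both membership and hardness by invoking the two polynomial-time reductions between $\Kw$ games and Boolean games from Section~\ref{section.kw}, together with the $\Sigma^p_2$-completeness of the emptiness problem for Boolean games (Theorem~\ref{thm:emptyNeBool}). The entire argument is a transfer of complexity through these reductions, so no new machinery is required.

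For $\Sigma^p_2$-membership I would reduce to Boolean game emptiness. Given a $\Kw$ game $G$, construct the associated Boolean game $B_G$ as in Section~\ref{subsec:KwToBool}; by the Observation there, $B_G$ has size polynomial in $G$ and is computable in polynomial time. By Theorem~\ref{thm:KwObsToBool}, $\Ne(B_G) \neq \emptyset$ iff $\Ne_{\mathsf{max}}(G) \neq \emptyset$. Hence $G \mapsto B_G$ is a polynomial-time many-one reduction from $\Kw$-maximal-emptiness to Boolean-emptiness, and since the latter is in $\Sigma^p_2$ by Theorem~\ref{thm:emptyNeBool}, so is the former.

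For $\Sigma^p_2$-hardness I would reduce from Boolean game emptiness in the opposite direction. Given a Boolean game $B$, construct the $\Kw$ game $G_B$ of Section~\ref{subsec:BoolToKw}; this is again polynomial-time by the corresponding Observation, and $G_B$ is a genuine $\Kw$ game since every $\beta_i = \lambda(\gamma_i) \in L^\Kw$. By Theorem~\ref{thm:BoolToKwObs}, $\Ne_{\mathsf{max}}(G_B) \neq \emptyset$ iff $\Ne(B) \neq \emptyset$, so $B \mapsto G_B$ is a polynomial-time many-one reduction from Boolean-emptiness, which is $\Sigma^p_2$-hard by Theorem~\ref{thm:emptyNeBool}, to $\Kw$-maximal-emptiness. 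Combining with membership yields $\Sigma^p_2$-completeness.

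As a self-contained alternative for membership that does not appeal to Theorem~\ref{thm:emptyNeBool}, I could argue directly: by Corollary~\ref{cor:kwglobalNE}, $\Ne_{\mathsf{max}}(G) \neq \emptyset$ iff there is a globally uniform maximal equilibrium $\dot{s}$, and such a witness is simply a strategy profile $s \in S$ of polynomial size $\bigO(n^2 k)$. The $\Sigma^p_2$ procedure then guesses $s$ existentially and verifies $\dot{s} \in \Ne_{\mathsf{max}}(G)$ using the co-NP verification procedure of Theorem~\ref{thm:KwcompVerifNeMax} as an oracle, placing the problem in $\mathrm{NP}^{\mathrm{NP}} = \Sigma^p_2$. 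I do not expect any genuine obstacle, as all the needed components are already established; the only point requiring care is to confirm that each reduction is many-one and direction-correct, so that membership transfers via $G \mapsto B_G$ and hardness transfers via $B \mapsto G_B$ exactly as claimed.
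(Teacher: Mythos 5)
Your proposal is correct, and it uses the same toolkit as the paper: the two polynomial-time translations between $\Kw$ games and Boolean games together with Theorems~\ref{thm:BoolToKwObs}, \ref{thm:KwObsToBool} and \ref{thm:emptyNeBool}. There are two points worth comparing. First, for membership the paper argues directly via Corollary~\ref{cor:kwglobalNE} and the co-NP verification of Theorem~\ref{thm:KwcompVerifNeMax} (guess a globally uniform profile of polynomial size, verify with an NP oracle), which is exactly your ``self-contained alternative''; your primary membership argument instead transfers membership through the many-one reduction $G \mapsto B_G$ and closure of $\Sigma^p_2$ under such reductions. Both are fine. Second, and more substantively: for hardness the paper's proof as written invokes the $\Kw$-to-Boolean translation of Section~\ref{subsec:KwToBool} and Theorem~\ref{thm:KwObsToBool}, i.e.\ the reduction \emph{from} $\Kw$-emptiness \emph{to} Boolean emptiness --- which is the membership direction, not the hardness direction. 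The correct hardness reduction is the one you give: start from a Boolean game $B$, build the $\Kw$ game $G_B$ of Section~\ref{subsec:BoolToKw} in polynomial time, and apply Theorem~\ref{thm:BoolToKwObs} so that $\Sigma^p_2$-hardness of Boolean emptiness (Theorem~\ref{thm:emptyNeBool}) transfers to $\Ne_{\mathsf{max}}$-emptiness for $\Kw$ games. So your attention to the direction of each reduction is not pedantry; it repairs a slip in the paper's own write-up, and your argument is the one that should stand.
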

\begin{proof}
  Membership in $\Sigma^P_2$ follows immediately from Corollary
  \ref{cor:kwglobalNE} and Theorem \ref{thm:KwcompVerifNeMax}. For
  $\Sigma^P_2$-hardness, notice that the translation from observation
  games to Boolean games that we provide in Section
  \ref{subsec:KwToBool} is polynomial time computable. Thus given an
  instance of an observation game $\obsgame$, we can construct a
  Boolean game $\boolgame_\obsgame$ in polynomial time. By Theorem
  \ref{thm:KwObsToBool}, $\Ne_{\mathsf{max}}(\obsgame) \neq \emptyset$
  iff $\Ne(\boolgame_\obsgame) \neq \emptyset$. From Theorem
  \ref{thm:emptyNeBool} it follows that checking if
  $\Ne_{\mathsf{max}}(G) \neq \emptyset$ is $\Sigma^P_2$-complete.
\end{proof}

\section{Discussion and Conclusion}
\label{sec:concl}

\noindent {\bf Summary.}
We introduced Boolean observation games as a qualitative model which
combines aspects of imperfect and incomplete information games. For these games we
studied Nash equilibria based on different ways to compare sets of outcomes, that result in different expectations of outcomes. Our main technical
contributions are for the existence of Nash equilibria, for
the computational analysis of Nash equilibria, as well as for identifying
knowing-whether games, a fragment of observation games that
precisely corresponds to Boolean games in terms of existence of Nash
equilibria.  A summary of our results are listed in Table
\ref{tab:summary}.
\begin{table}[ht]
  \resizebox{\textwidth}{!}{%
  \begin{tabular}{|ll|l|l|l|p{2.3cm}|p{2.5cm}|}
    \hline
    \multirow{5}{*}{} & &
    \multicolumn{3}{|c|}{Existence} &
    \multicolumn{2}{c|}{Complexity} \\
    & & $|\mathit{type}(i)| \leq 2$ & $|\mathit{type}(i)| \leq 3$ & $|\mathit{type}(i)| > 3$ & Verification & Emptiness\\
        \hline
    \multirow{3}{*}{Observation games} 
    &$\Ne_{\mathsf{pess}}$ & Yes {\small (Theorem~\ref{thm:existsNEpess})}& Yes {\small (Theorem~\ref{thm:existsNEpess})}& No {\small (Example \ref{ex:KWnoNEpess})}& PSPACE {\small (Theorem~\ref{thm:compVerifNePess})}& NEXPTIME {\small (Theorem~\ref{thm:KwemptyNePess})}\\
    \cline{2-7}
    &$\Ne_{\mathsf{opt}}$ & Yes {\small (Theorem~\ref{thm:existsNEpess})} & Yes {\small (Theorem~\ref{thm:existsNEpess})}& --- & PSPACE {\small (Theorem~\ref{thm:compVerifNePess})}& NEXPTIME {\small (Theorem~\ref{thm:KwemptyNePess})}\\
    \cline{2-7}
    &$\Ne_{\mathsf{max}}$ & Yes {\small (Theorem~\ref{thm:existsNEmax})} & No {\small (Example \ref{prop:obs3nostable})}& No {\small (Example \ref{prop:obs3nostable})} & PSPACE-complete {\small (Theorem~\ref{thm:compVerifNeMax})}& NEXPTIME-complete {\small (Theorem~\ref{thm:emptyNeMax})}\\
            \hline
    \multirow{3}{*}{$\Kw$ games}
    &$\Ne_{\mathsf{pess}}$ & Yes {\small (Theorem~\ref{thm:existsNEpess})}& Yes {\small (Theorem~\ref{thm:existsNEpess})} & --- & --- & ---\\
    \cline{2-7}
    &$\Ne_{\mathsf{opt}}$ & Yes {\small (Theorem~\ref{thm:existsNEpess})}& Yes {\small (Theorem~\ref{thm:existsNEpess})}& --- & --- & ---\\
    \cline{2-7}
    &$\Ne_{\mathsf{max}}$ & Yes {\small (Theorem~\ref{thm:existskwNE})}& Yes {\small (Theorem~\ref{thm:existskwNE})}& No {\small (Example \ref{prop:kw4nostable})}& Co-NP-complete {\small (Theorem~\ref{thm:KwcompVerifNeMax})} & $\Sigma^p_2$-complete {\small (Theorem~\ref{thm:KwemptyNeMax})}\\
    \hline
  \end{tabular}}
\caption{Summary of results. \label{tab:summary}}
\end{table}

\noindent {\bf Complexity and Existence.} Note that in Boolean
observation games, the underlying relational structure (the Kripke model) is not explicit. It is implicitly presented in terms of a valuation $v$ and
strategy profile $s$. Therefore, even the basic model checking problem is
PSPACE-complete given the compact presentation. This in turn is one of
the main reasons for the ``high'' complexity bounds that we obtain for
the computational analysis of this model.

An alternative would be to explicitly have a Kripke
model as part of the input. Suppose the Kripke model is
defined over a set of worlds $W$. Then a uniform strategy can be
thought of as a uniform function from $W$ to the set $S_i$ of strategies for player $i$, which would have a
polynomial representation in terms of the number of worlds $|W|$, the number of agents $n$, and the number of variables (atoms) $|P|$. Computing
$u_i(v,s(v))$ can then also be done in polynomial time. As a consequence it can
be shown that the verification problem is in co-NP and the emptiness
problem is in $\Sigma^P_2$. However, the size of the Kripke structure
can in the worst case be exponential in $|P|$.

Clearly, the computational complexity of the model requires further
analysis. There are two approaches which are interesting. The first is
to try and identify fragments of the model which provide better
complexity bounds. In the subclass of knowing-whether games we obtain
bounds which match the known bounds for the corresponding questions in
Boolean games. It is also known that in two player Boolean games where
the goal formulas are restricted to Horn-renamable DNF, 2CNF or
monotone CNF, the emptiness of Nash equilibrium can be checked in
polynomial time \cite{BonzonLLZ06}. By modifying the arguments
appropriately, we can identify subclasses of knowing-whether games in
which the corresponding emptiness question can be solved in polynomial
time.
In general, any natural restriction of the logical specification
language which results in the corresponding model checking question to
have ``better'' complexity is a promising fragment.

The second approach would be to identify the specific parameters
within the model which contribute to the exponential complexity
bounds. Some of the natural candidates are the number of players and
the number of variables used in the goal formulas. Since the hardness
results given in Theorem \ref{thm:compVerifNeMax} and Theorem
\ref{lm:nexpC} are for two and three players respectively, bounding
the number of players alone is not sufficient. Analysis of fragments
where the number of variables in the goal formulas are bounded appears
to be a promising research direction which require more careful study.

Analysing the lower bounds in the case of pessimist and optimist
outcome relations is another question which is relevant.

In Section \ref{subsec:existsNEKW}, we analyse existence of Nash
equilibria in knowing-whether games, and in Section
\ref{subsec:existsNEGeneral} we identify conditions based on
positive/negative epistemic assertions which ensure existence of Nash
equilibria. Identifying other fragments where Nash equilibria are
guaranteed to exist is an obvious direction of future research. It
would be particularly interesting if the existence result can be
related to structural properties of the underlying game.

\noindent {\bf Extensions of the Model.}
There are many extensions of the model which are interesting for
further research.  One could imagine a whole and ever widening range
of qualitative incomplete information games of imperfect
information. For the strategies, instead of merely revealing the value
of propositional variables, we could consider revealing the value of
any epistemic proposition, as already considered in
\cite{agotnesetal:2011,agotnesetal.qa:2011} for more complex, arbitrary, 
Kripke models. Instead of having merely partitions (exhaustive and
exclusive) of all variables, one could consider overlapping sets of
variables (exhaustive but not exclusive, so more than one player may
observe the same variables)\footnote{Kindly suggested by Paul
Harrenstein.} {as for example employed in \cite{belardinellietal:2017}}. Doing the same for Boolean games would create the
possibility of conflict, as not more than one player can control the
value of a variable. But as many agents as you wish can make the same
observation.

Another interesting extension to explore would be to consider iterated Boolean 
observation games, wherein players can gradually reveal more and more
of their variables. This would be a generalization similar to that
already studied for Boolean games in
\cite{GutierrezHW15,GutierrezHPW16}. It would involve epistemic
temporal goals or dynamic epistemic goals. Different from iterated
Boolean games, in iterated Boolean observation games one can only
reveal more and more variables in every round, until all have been
revealed. This should therefore considerably reduce the complexity of iterated Boolean observation games with respect to otherwise comparable iterated Boolean games.

Yet another relevant direction is (epistemic) incentive engineering in
Boolean observation games, similar to what is studied in Boolean games
\cite{WEK13,Tur13,HTW17}.

\subsection*{Acknowledgements} We thank the reviewers for their valuable comments which helped
  improve the presentation of the paper. Hans van Ditmarsch gratefully acknowledges a stay at IIT Kanpur initiating this research collaboration, during his temporary CNRS affiliation in 2018 at UMI ReLaX, Chennai, India. Sunil Simon
was partially supported by the grant CRG/2022/006140.

\bibliographystyle{plain}
\bibliography{biblio2021,extrabib}

\providecommand{\noopsort}[1]{}
\begin{thebibliography}{10}

\bibitem{AHH13}
T.~{\AA}gotnes, P.~Harrenstein, W.~van~der Hoek, and M.~Wooldridge.
\newblock Verifiable equilibria in {B}oolean games.
\newblock In {\em Proc.\ of 23rd {IJCAI}}, pages 689--695, 2013.

\bibitem{AgotnesHHW13}
T.~{\AA}gotnes, P.~Harrenstein, W.~van~der Hoek, and M.J. Wooldridge.
\newblock Boolean games with epistemic goals.
\newblock In {\em Proc.\ of 4th {LORI}}, pages 1--14, 2013.
\newblock LNCS 8196.

\bibitem{agotnesetal.qa:2011}
T.~{\AA}gotnes, J.~van Benthem, H.~van Ditmarsch, and S.~Minica.
\newblock Question-{A}nswer games.
\newblock {\em Journal of Applied Non-Classical Logics}, 21(3-4):265--288,
  2011.

\bibitem{agotnesetal:2011}
T.~{\AA}gotnes and H.~van Ditmarsch.
\newblock What will they say? - {P}ublic announcement games.
\newblock {\em Synthese}, 179(S.1):57--85, 2011.

\bibitem{AmorFS17}
N.B. Amor, H.~Fargier, and R.~Sabbadin.
\newblock Equilibria in ordinal games: {A} framework based on possibility
  theory.
\newblock In {\em Proc.\ of the 26th {IJCAI}}, pages 105--111, 2017.

\bibitem{AG11}
K.~R. Apt and E.~Gr{\"{a}}del, editors.
\newblock {\em Lectures in Game Theory for Computer Scientists}.
\newblock Cambridge University Press, 2011.

\bibitem{Apt11}
K.R. Apt.
\newblock A primer on strategic games.
\newblock In K.~R. Apt and E.~Gr{\"{a}}del, editors, {\em Lectures in Game
  Theory for Computer Scientists}, pages 1--37. Cambridge University Press,
  2011.

\bibitem{AS16}
H.~Aziz and R.~Savani.
\newblock {\em Hedonic games}, chapter~15, pages 356--376.
\newblock Handbook of Computational Social Choice. Cambridge University Press,
  2016.

\bibitem{bakhtiarietal:2019}
Z.~Bakhtiari, H.~van Ditmarsch, and A.~Saffidine.
\newblock How does uncertainty about other voters determine a strategic vote?
\newblock {\em Studies in Logic}, 12 (3):32--56, 2019.

\bibitem{baltagetal:1998}
A.~Baltag, L.S. Moss, and S.~Solecki.
\newblock The logic of public announcements, common knowledge, and private
  suspicions.
\newblock In {\em Proc.\ of 7th TARK}, pages 43--56. Morgan Kaufmann, 1998.

\bibitem{belardinellietal:2017}
F.~Belardinelli, U.~Grandi, A.~Herzig, D.~Longin, E.~Lorini, A.~Novaro, and
  L.~Perrussel.
\newblock Relaxing exclusive control in boolean games.
\newblock In J.~Lang, editor, {\em Proc.\ of 16th {TARK}}, volume 251 of {\em
  {EPTCS}}, pages 43--56, 2017.

\bibitem{bonzonetal:2009}
E.~Bonzon, M.-C. Lagasquie-Schiex, and J.~Lang.
\newblock Dependencies between players in {B}oolean games.
\newblock {\em Int. J. Approx. Reasoning}, 50(6):899--914, 2009.

\bibitem{BonzonLLZ06}
E.~Bonzon, M.{-}C. Lagasquie{-}Schiex, J.~Lang, and B.~Zanuttini.
\newblock Boolean games revisited.
\newblock In {\em Proc.\ of 17th {ECAI}}, pages 265--269. {IOS} Press, 2006.

\bibitem{BGW16}
J.~Bradfield, J.~Gutierrez, and M.~Wooldridge.
\newblock Partial-order boolean games: informational independence in a
  logic-based model of strategic interaction.
\newblock {\em Synthese}, 193:781–811, 2016.

\bibitem{CD11}
Y.~Cai and C.~Daskalakis.
\newblock On minmax theorems for multiplayer games.
\newblock In {\em Proceedings of the SODA'11}, pages 217--234. SIAM, 2011.

\bibitem{CDH07}
K.~Chatterjee, L.~Doyen, T.A. Henzinger, and J.F Raskin.
\newblock Algorithms for omega-regular games with imperfect information.
\newblock {\em Logical Methods in Computer Science}, 3(4), 2007.

\bibitem{conitzeretal.aaai:2011}
V.~Conitzer, T.~Walsh, and L.~Xia.
\newblock Dominating manipulations in voting with partial information.
\newblock In {\em Proc.\ of AAAI}, 2011.

\bibitem{cruzsimaan:2000}
J.B. Cruz and M.A. Simaan.
\newblock Ordinal games and generalized nash and stackelberg solutions.
\newblock {\em Journal of Optimization Theory and Applications}, 107:205–222,
  2000.

\bibitem{DH04}
P.E. Dunne and W.~van~der Hoek.
\newblock Representation and complexity in {B}oolean games.
\newblock In {\em Proc.\ of {JELIA}}, pages 347--359, 2004.
\newblock LNCS 3229.

\bibitem{DM12}
P.E. Dunne and M.~Wooldridge.
\newblock Towards tractable {B}oolean games.
\newblock In {\em Proc.\ of 11th {AAMAS}}, page 939–946, 2012.

\bibitem{DurieuHQS08}
J.~Durieu, H.~Haller, N.~Qu{\'{e}}rou, and P.~Solal.
\newblock Ordinal games.
\newblock {\em {IGTR}}, 10(2):177--194, 2008.

\bibitem{FHMV95}
R.~Fagin, J.Y. Halpern, Y.~Moses, and M.Y Vardi.
\newblock {\em Reasoning About Knowledge}.
\newblock The MIT Press, 1995.

\bibitem{faginetal:1995}
R.~Fagin, J.Y. Halpern, Y.~Moses, and M.Y. Vardi.
\newblock {\em Reasoning about Knowledge}.
\newblock MIT Press, 1995.

\bibitem{GutierrezHPW16}
J.~Gutierrez, P.~Harrenstein, G.~Perelli, and M.J. Wooldridge.
\newblock Expressiveness and {N}ash equilibrium in iterated {B}oolean games.
\newblock In {\em Proc.\ of {AAMAS}}, pages 707--715. {ACM}, 2016.

\bibitem{GutierrezHW15}
J.~Gutierrez, P.~Harrenstein, and M.J. Wooldridge.
\newblock Iterated {B}oolean games.
\newblock {\em Inf. Comput.}, 242:53--79, 2015.

\bibitem{GMP17}
J.~Gutierrez, A.~Murano, G.~Perelli, S.~Rubin, and M.~Wooldridge.
\newblock Nash equilibria in concurrent games with lexicographic preferences.
\newblock In {\em Proceedings of the Twenty-Sixth International Joint
  Conference on Artificial Intelligence, {IJCAI-17}}, pages 1067--1073, 2017.

\bibitem{HV91}
J.Y. Halpern and M.Y. Vardi.
\newblock {\em Model Checking vs. Theorem Proving: A Manifest}, chapter~10,
  page 151–176.
\newblock Academic Press Professional, Inc., 1991.

\bibitem{harrenstein:2004}
P.~Harrenstein.
\newblock {\em Logic in Conflict. Logical Explorations in Strategic
  Equilibrium}.
\newblock PhD thesis, Utrecht University, 2004.

\bibitem{HTW17}
P.~Harrenstein, P.~Turrini, and M.~Wooldridge.
\newblock Characterising the manipulability of {B}oolean games.
\newblock In {\em Proc.\ of 26th {IJCAI}}, pages 1081--1087, 2017.

\bibitem{harrensteinetal:2001}
P.~Harrenstein, W.~van~der Hoek, J.-J. Meyer, and C.~Witteveen.
\newblock Boolean games.
\newblock In J.~van Benthem, editor, {\em Proc.\ of the 8th TARK}, pages
  287--298, San Francisco, 2001. Morgan Kaufmann.

\bibitem{harsanyi:1967}
J.C. Harsanyi.
\newblock Games with incomplete information played by `{B}ayesian' players,
  {P}arts {I}, {II}, and {III}.
\newblock {\em Management Science}, 14:159--182, 320--334, 486--502,
  1967--1968.

\bibitem{HD09}
R.A. Hearn and E.D. Demaine.
\newblock {\em Games, Puzzles and Computation}.
\newblock CRC Press, 2009.

\bibitem{HerzigLMS16}
A.~Herzig, E.~Lorini, F.~Maffre, and F.~Schwarzentruber.
\newblock Epistemic {B}oolean games based on a logic of visibility and control.
\newblock In {\em Proc.\ of 25th {IJCAI}}, pages 1116--1122, 2016.

\bibitem{EL14}
E.~Ianovski and L.~Ong.
\newblock $\exists${\sc guaranteenash} for {B}oolean games is {NEXP-hard}.
\newblock In {\em Proc.\ of 14th {KR}}, pages 208--217. AAAI Press, 2014.

\bibitem{Jan68}
E.B. Janovskaya.
\newblock Equilibrium points in polymatrix games.
\newblock {\em Litovskii Matematicheskii Sbornik}, 8:381--384, 1968.

\bibitem{KLS01}
M.~Kearns, M.~Littman, and S.~Singh.
\newblock Graphical models for game theory.
\newblock In {\em Proc.\ Seventeenth Conference on Uncertainty in Artificial
  Intelligence}, pages 253--260, 2001.

\bibitem{keynes:1921}
J.M. Keynes.
\newblock {\em A Treatise on Probability}.
\newblock Macmillan and Co., London, 1921.

\bibitem{moss.handbook:2015}
L.S. Moss.
\newblock Dynamic epistemic logic.
\newblock In H.~van Ditmarsch, J.Y. Halpern, W.~van~der Hoek, and B.~Kooi,
  editors, {\em Handbook of epistemic logic}, pages 261--312. College
  Publications, 2015.

\bibitem{osborneetal:1994}
M.J. Osborne and A.~Rubinstein.
\newblock {\em A Course in Game Theory}.
\newblock MIT Press, 1994.

\bibitem{Papa94}
C.H. Papadimitriou.
\newblock {\em Computational Complexity}.
\newblock Addison-Wesley, 1994.

\bibitem{parikhetal:2013}
R.~Parikh, C.~Tasdemir, and A.~Witzel.
\newblock The power of knowledge in games.
\newblock {\em IGTR}, 15(4), 2013.

\bibitem{savage:1951}
L.J. Savage.
\newblock The theory of statistical decision.
\newblock {\em Journal of the American Statistical Association}, 46
  (253):55--67, 1951.

\bibitem{Tur13}
P.~Turrini.
\newblock Endogenous {B}oolean games.
\newblock In {\em Proc.\ of 23rd {IJCAI}}, page 390–396. AAAI Press, 2013.

\bibitem{jfak.bulletin:2001}
J.~van Benthem.
\newblock Games in dynamic epistemic logic.
\newblock {\em Bulletin of Economic Research}, 53(4):219--248, 2001.

\bibitem{hvd.jolli:2002}
H.~van Ditmarsch.
\newblock Descriptions of game actions.
\newblock {\em Journal of Logic, Language and Information}, 11:349--365, 2002.

\bibitem{hvdetal.TARKvote:2013}
H.~van Ditmarsch, J.~Lang, and A.~Saffidine.
\newblock Strategic voting and the logic of knowledge.
\newblock In {\em Proc.\ of 14th TARK -- Chennai}, 2013.

\bibitem{hvdetal.del:2007}
H.~van Ditmarsch, W.~van~der Hoek, and B.~Kooi.
\newblock {\em Dynamic Epistemic Logic}, volume 337 of {\em Synthese Library}.
\newblock Springer, 2008.

\bibitem{wald:1945}
A.~Wald.
\newblock Statistical decision functions which minimize the maximum risk.
\newblock {\em The Annals of Mathematics}, 46(2):265--280, 1945.

\bibitem{WEK13}
M.~Wooldridge, U.~Endriss, S.~Kraus, and J.~Lang.
\newblock Incentive engineering for {B}oolean games.
\newblock {\em Artificial Intelligence}, 195:418--439, 2013.

\end{thebibliography}

\appendix
\section{Dynamic Epistemic Logic} \label{appendix} 

\subsection{Proof in Section~\ref{sec.logic}} \label{app.proofs}

\restatabletwo*
\begin{proof}
The proof is by induction on the structure of $\Kw$ formulas in negation normal form ($L^\Kw_\nnf$). The direction from right to left is by definition. For the direction from left to right we proceed as follows.

Case atom: $v,s \models \Kw_i p_j$, iff (for all $w \sim^s_i v$, $w,s \models p_j$ iff $v,s \models p_j$), iff (for all $w$ with $w \inter P_i(s) = v \inter P_i(s)$, $w,s \models p_j$ iff $v,s \models p_j$), iff $p_j \in P_i(s)$. As $v$ no longer appears in the final statement, $v$ is arbitrary. Therefore, the initial statement $v,s\models\Kw_i p_j$ is equivalent to ``for all $w \in V$, $w,s\models\Kw_i p_j$,'' in other words, to $s\models\Kw_i p_j$.

Case negated atom: $v,s \models \neg\Kw_i p_j$, iff (there are $w,x \in V$ with $w \sim^s v$ and $x \sim^s v$ and such that $w,s \models p_j$ and $x,s \models \neg p_j$), iff (there are $w,x \in V$ with $w \inter P_i(s) = x \inter P_i(s) = v \inter P_i(s)$ and such that $w,s \models p_j$ and $x,s \models \neg p_j$), iff $p_j \notin P_i(s)$. As in the previous case, the final statement is independent from $v$ and therefore the initial statement is equivalent to $s \models \neg \Kw_i p_j$. 

Case conjunction: $v,s \models \alpha \et \beta$, iff $v,s \models \alpha$ and $v,s \models \beta$, iff (IH) $s \models \alpha$ and $s \models\beta$, iff $s \models \alpha\et\beta$.

Case disjunction: $v,s \models \alpha \vel \beta$, iff $v,s \models \alpha$ or $v,s \models \beta$, iff (IH) $s \models \alpha$ or $s \models \beta$, iff $s \models \alpha\vel \beta$.
\end{proof}

\subsection{Strategies as Epistemic Actions} \label{section.del}

In this section we compare our modelling and our results with related work in epistemic logic. We model strategy profiles as epistemic actions in a dynamic epistemic logic, where we also discuss an alternative semantics of strategies resulting in far larger models. The alternatives can be compared on their game theoretical implications, which may help to motivate our preference.

\bigskip

The situation wherein each player only observes the value of its own variables, corresponds to a Kripke model where the accessibility relation is the initial observation relation, and a strategy profile corresponds to an action model that, when executed in this Kripke model, results in an updated model wherein the accessibility relation is the observation relation (for that strategy profile). In this section we make precise how. It may serve to illustrate that our setting is very simple. This was why we were able to obtain modelling and computational results for Boolean observation games that are close or analogous to those for Boolean games.

An {\em epistemic model} (Kripke model) $M$ is a triple $(W,\sim,\pi)$ where $W$ is an (abstract) domain of \emph{worlds} or \emph{states}, where $\sim$ is a collection of equivalence relations on $W$, one for each agent, denoted $\sim_a$ (also known as {\em indistinguishability relations}), and where $\pi$ is a \emph{valuation} (function) mapping each state $w \in W$ to the subset of the propositional variables $P$ that are true in that state. A pointed epistemic model $(M,w)$ is a pair consisting of an epistemic model and a state $w \in W$. 

Now consider the situation in our observation games where each of $n$ players $1,\dots,n$ only observes the value of its own variables $P_i$, but before they enact/play a strategy $s_i$. We have implicitly modelled this as the strategy profile $s^\emptyset$ wherein no player reveals any variable. We can identify this situation with the following epistemic model. 

\begin{quote} The {\em initial observation model} $(\mathit{IM},v)$, where $\mathit{IM} = (V,\sim,\pi)$, is such that: \begin{itemize} \item domain $V$ is the set of valuations of $P$ ($V = \powerset(P)$); \item for each player $i \in N$ and valuations $v,w \in V$, $v \sim_i w$ iff $v \inter P_i = w \inter P_i$; \item for each $v \in V$, $\pi(v) = v$. \end{itemize}\end{quote}
Note that the relations are exactly as in interpreted systems \cite{faginetal:1995}.

Similarly, the result of playing strategy profile $s \in S$ given valuation $v \in V$ of observed variables, corresponds to an updated epistemic model.
\begin{quote} The {\em observation model} $(\mathit{IM}^s,v)$, where $\mathit{IM}^s = (V,\sim^s,\pi)$, is such that $V$ and $\pi$ are as for $\mathit{IM}$, whereas in this case $v \sim^s_i w$ iff $v \inter P_i(s) = w \inter P_i(s)$. \end{quote} We recall that $P_i(s) = \{ p \in P \mid
\text{there is a} \ j \in N \ \text{with}\ p \in s_j(i) \}$, the variables revealed to $i$ in $s$, where by definition $P_i(i)=P_i$ so that always $P_i \subseteq P_i(s)$.

Surely more interestingly, we can model a strategy profile as an independent semantic primitive namely as an {\em action model} $U$ such that \begin{quote} $v,s \models \phi$ iff $\mathit{IM} \otimes U, (v,s) \models \phi$ \end{quote} where the former is the satisfaction relation in our logical semantics for $L^K$ and the latter is the satisfaction relation in action model semantics. In order to establish that we first need to define action models and their execution (following details as in \cite{baltagetal:1998,hvdetal.del:2007,moss.handbook:2015}).

An {\em action model} $U$ is a triple $(E,\approx,\pre)$ where $E$ is a domain of \emph{actions}, for each player $i = 1,\dots,n$, $\approx_i$ is an equivalence relation on $E$, and $\pre$ is a {\em precondition} function mapping each action $e \in E$ to an executability precondition $\pre(e)$ that is a formula in some logical language $L$. The execution of an action model in an epistemic model $M = (W,\sim,\pi)$ is then defined as the restricted modal product $M \otimes U = (W',\sim',\pi')$ where $W' = \{ (w,e) \mid w \in W, e \in E, M,w \models \pre(e) \}$, where $(w,e) \sim'_i (w',e')$ iff $w \sim_i w'$ and $e \approx_i e'$, and where $\pi'(w,e) = \pi(w)$.

In the case of strategy profiles for observation games, the logical language of action model preconditions can be restricted to $L^B$, the Booleans (the language required to describe preconditions is therefore simpler than the language $L^K$ to describe epistemic goals), and a rather simple action model corresponds to a strategy profile $s$. A strategy profile can be identified with the following action model. In the definition, $\delta_v \in L^B$ is the description of the valuation $v$, defined as $\delta_v := \Et_{p\in v} p \et \Et_{p \notin v} \neg p$.
\begin{quote} A {\em strategy profile action model} $U^s$ is a triple $(V,\sim^s,\pre)$ where the set of actions is the set of valuations $V$, where for each $i = 1,\dots,n$, $v \sim^s_i w$ iff $v \inter P_i(s) = w \inter P_i(s)$, and where for each action $v \in V$, $\pre(v) = \delta_v$.
\end{quote} The domain of the strategy profile action model is therefore the same as the domain of an observation model, namely the set of all valuations.

In can be verified that \begin{quote} $\mathit{IM} \otimes U^s$ is isomorphic to $\mathit{IM}^s$. \end{quote} This is fairly elementary. We note that each action can only be executed in a single world --- $\mathit{IM}, v \models \delta_v$, so that the size of $\mathit{IM}^s$ is the same as the size of $\mathit{IM}$. Then, $(v,v) \sim_i (w,w)$ iff, by definition of action model execution, $v \sim_i v$ (in $\mathit{IM}$) and $v \sim^s_i w$ (in $U^s$), iff, by definition of these relations, $v \inter P_i = w \inter P_i$ and $v \inter P_i(s) = w \inter P_i(s)$. As the latter is a refinement of the former, the desired result that $v \inter P_i(s) = w \inter P_i(s)$ follows. Finally, $\pi'(v,v) = \pi(v) =  v$. And the valuations $\pi$ do not change.

In fact, already $U^s$ is isomorphic to $\mathit{IM}^s$ (slightly abusing the notion, but when we identify valuations with their description). It should be noted that it is common that action models are isomorphic to updated models when executed in initial models consisting of all valuations (and representing some sort of initial maximal ignorance over those valuations). 

As a word of warning: the `actions' that are the points in our action model $U^s$ do {\bf not} correspond to the strategies, that are sometimes also called actions. The action model `action' combines the strategies of all players simultaneously, so they rather correspond to strategy profiles. 

\medskip

\noindent {\bf More Succinct Action Models.}
A slightly more succinct modelling of strategy profiles as action models is conceivable, that is a quotient of the action model $U^s$ defined above with respect to variables that are not revealed by any player. Let us call this set $\overline{P^s}$, that is therefore defined as the complement of the set $P^s := \{ p \in P \mid \is i,j \in [1..n], i \neq j, p \in s_i(j) \}$. We can now redefine $U^s_{\mathsf{small}}$ as $(\powerset(P^s),\sim^s,\pre)$ where in this case for any $v,w \subseteq P^s$ (so for \emph{partial} valuations of atoms revealed by some agent only), $v \sim_i w$ iff $v \inter P_i(s) = w \inter P_i(s)$. This looks the same as before, but note that $P_i(s)$ may involve far more variables, namely in $\overline{P^s}$, than $v$ and $w$, that are both restricted to $P^s$. Also, still $\pre(v) = v$ for all $v \in P^s$ (and where $\pre(\emptyset)=\top$ in case $P^s =  \emptyset$). 

Again, it is elementary to show that $\mathit{IM} \otimes U^s_{\mathsf{small}}$ is isomorphic to $\mathit{IM}^s$. We now have that $\mathit{IM},w \models \pre(v)$ iff $v \subseteq w$. But in this case $U^s_{\mathsf{small}}$ is typically smaller than the resulting updated model $\mathit{IM}^s$. The resulting $\mathit{IM}^s$, as before, has the same domain as the initial model $\mathit{IM}$. 

We now have, for example, that the action model corresponding to the `reveal nothing' strategy profile $s^\emptyset$ is the trivial singleton action model $U^{s^\emptyset}_{\mathsf{small}}$ with precondition $\top$ (as $P^{s^\emptyset} = \emptyset$), and in this case $\mathit{IM} \otimes U^{s^\emptyset}_{\mathsf{small}}$ is isomorphic to the initial observation model $\mathit{IM}$ again: the relations $\sim_i$ have not changed.

\subsection{Strategies for Weaker Observations give Bigger Models} \label{appendix.weaker}

In our modelling, it is common knowledge to all players what variables have been revealed by who and to whom: the strategy profile $s$ is common knowledge `after the fact'. But, although I therefore know what variables are revealed by other players to yet other players, I still have not learnt the \emph{values} of these variables.

\begin{quote}
For example: After player 1 reveals atom $p_1$ to player $2$ and atom $q_1$ to player $3$, player $2$ knows whether $p_1$ and player $3$ knows whether $q_1$. Also, player $2$ knows that player $3$ knows whether $q_1$, and player $3$ knows that player $2$ knows whether $p_1$.
\end{quote}

In a different modelling, each player {\em only} learns what variables have been revealed by other players to herself, and what variables she reveals to others. 

\begin{quote}
For example: After player 1 reveals atom $p_1$ to player $2$ and atom $q_1$ to player $3$, player $2$ knows whether $p_1$ and player $3$ knows whether $q_1$. However, player $2$ does not know that player $3$ knows whether $q_1$, and player $3$ does not know that player $2$ knows whether $p_1$. Player $2$ also considers it possible that no variable has been revealed to $3$, in which case $3$ does not know whether $q_1$. And similarly for player $3$. 
\end{quote}

So, clearly, depending on which modelling one prefers, different goal formulas $\gamma$ of observation games may be satisfied, and it will therefore affect the existence of Nash equilibria and what the optimal strategies are.

Let us first formalize this as an action model, and let us be explicit about the (rather different) updated model as well. The strategies $s_i$ and profiles $s = (s_1,\dots,s_n)$ remain the same, and thus also the $P_i(s)$, the set of atoms revealed to agent $i$. However, we can no longer define an updated observation model as one wherein only the indistinguishability relations have been changed, namely as $v \sim_i^s w$ iff $v \cap P_i(s) = w \cap P_i(s)$, while keeping the domain (and the valuation). 

Instead of models consisting of valuations (domain $V$) we now need much larger models consisting of pairs $(v,t)$ for valuations $v$ and profiles $t$ (domain $V \times S$) and define: \begin{quote} For all $v,v' \in V$ and for all $s,t,t' \in S$ and for all players $i \in N$: $(v,t) \sim_i^s (v',t')$ if $v \cap P_i(s) = v' \cap P_i(s)$ [same valuation inasfar observed], $t_i = t'_i = s_i$ [same variables revealed to others], and $P_i(s) = P_i(t) = P_i(t')$ [same variables revealed by others to you].\end{quote} As a consequence, we cannot describe the initial observation model as the one wherein $s^\emptyset$ is executed, because that would still blow up the model and introduce maximal uncertainty about what is revealed by who. So the initial observation model $\mathit{IM}$ needs to be given separately (namely as the model already defined in Appendix~\ref{section.del}). However, once this is done, that is all. An action model can also be given for this modelling.

In this alternative modelling the players would remain far more ignorant about other players: optimist expected outcome would be more optimist, pessimist expected outcome would be more pessimist, realist expected outcome would quantify over a far larger set of possible outcomes. Basically, any epistemic feature is diluted. It therefore appeared to us that our preferred modelling provides more interesting results and variations. 

Beyond that, the envisaged iterated Boolean observation games would become less meaningful for such strategies encoding weaker observations, as a player remains unaware of other players' increasing knowledge over such iterations, unless as a consequence of that player informing those other players.

\section{Representation and Complexity}

\thmmc*
\begin{proof}
  The membership in PSPACE is straightforward. For PSPACE-hardness, we
  give a reduction from {\sc Quantified Boolean Formula} (QBF) which
  is a canonical PSPACE-complete problem \cite{Papa94}.

  A QBF instance
  consists of a formula of the form $Q_1x_1 Q_2 x_2 \ldots Q_n x_n\ 
  \psi(x_1, x_2, \ldots, x_n)$ where every $Q_i$ is either a $\exists$
  or $\forall$ quantifier, every $x_i$ is a propositional variable and
  $\psi(x_1, x_2, \ldots, x_n)$ is a Boolean formula over the
  variables $x_1, \ldots, x_n$. From the definition, it follows that
  every QBF instance is either true or false (irrespective of the
  valuation under which it is evaluated).

  Given an instance $\phi = Q_1x_1 Q_2 x_2 \ldots Q_n x_n\ \psi(x_1,
  x_2, \ldots, x_n)$ of QBF, we associate with each variable $x_i$, a
  player $i$ (thus $N = \{1, \ldots, n\}$) and let $P =\{x_1, \ldots,
  x_n\}$. We use the following notation introduced in Section
  \ref{subsec:BoolToKw}: for $i = 1,\dots,n-1$ let ${i^+} := i+1$ and
  ${n^+} := 1$. For all $i \in N$, let $P_i=\{x_{i^+}\}$ and let
  $s_i^*$ denote the strategy where player $i$ reveals $x_{i^+}$ to
  all players except player $i^+$. That is, $s^*_i(i^+)= \emptyset$
  and $s^*_i(j) = P_i$ for all $j \neq i^+$.

  Let $\alpha_\phi \in L^K$ be the formula obtained from $\phi$ by
  replacing all occurrence of $\forall x_i$ by $K_i$ and all
  occurrence of $\exists x_i$ by $\neg K_i \neg$. Let $v_{\bot} =
  \emptyset$ denote the valuation that assigns all variables the value
  false. We show that the QBF instance $\phi$ is true iff $v_\bot,
  s^* \models \alpha_\phi$.

  We first argue that for all QBF instances $\phi$ and for all
  valuations $v$ over $P$, $v \models \phi$ iff $v, s^* \models
  \alpha_\phi$. The proof is by induction on the structure of $\phi$
  and the non-trivial cases involve quantifiers. Suppose $\phi =
  \forall x_i \psi$ so that $\alpha_\phi = K_i \alpha_\psi$, then

 \medskip

  \begin{tabular}{lll}
    $v \models \forall x_i \psi$ & iff &for all valuations $u$ where
    $u \cap (P \setminus \{x_i\}) = v \cap (P \setminus \{x_i\})$, $u
    \models \psi$\\
    & iff &for all valuations $u$ where
    $u \cap (P \setminus \{x_i\}) = v \cap (P \setminus \{x_i\})$, $u, s^*
    \models \alpha_\psi$\\
    & iff &for all $u$ where $u \sim_i^{s^*} v$ we have $u, s^*
    \models \alpha_\psi$\\
    & iff &$v, s^* \models K_i \alpha_\psi$.
  \end{tabular}

  \medskip

  Since all variables in the QBF instance $\phi$ are bound, we have
  the following. $\phi$ is true iff $v_\bot \models \phi$ iff $v_\bot,
  s^* \models \alpha_\phi$. The claim then follows from the
  PSPACE-completeness of QBF.
\end{proof}

\end{document}